\newif\iflong
\newtheorem{observationProof}{Observation}
\Crefname{theorem}{Theorem}{Theorems}
\crefname{theorem}{Thm.}{Thms.}
\Crefname{proposition}{Proposition}{Propositions}
\crefname{proposition}{Prop.}{Props.}
\Crefname{corollary}{Corollary}{Corollaries}
\crefname{corollary}{Cor.}{Cors.}
\newcommand{\PCTT}{\prb{Timeline (Partial) Vertex Cover}}
\newcommand{\PCTlong}{\prb{Timeline Partial Vertex Cover}}
\newcommand{\CTlong}{\prb{Timeline Vertex Cover}}
\newcommand{\PDTT}{\prb{Timeline (Partial) Dominating Set}}
\newcommand{\PDTlong}{\prb{Timeline Partial Dominating Set}}
\newcommand{\DTlong}{\prb{Timeline Dominating Set}}
\newcommand{\PCT}{\prb{Timeline PVC}}
\newcommand{\CT}{\prb{Timeline VC}}
\newcommand{\PDT}{\prb{Timeline PDS}}
\newcommand{\DT}{\prb{Timeline DS}}
\DeclareMathOperator{\W}{W}
\DeclareMathOperator{\DP}{DP}
\DeclareMathOperator{\imw}{imw}
\DeclareMathOperator{\vimw}{vimw}
\newcommand{\prb}[1]{\textsc{#1}\xspace}
\newcommand*{\bdiv}{  \nonscript\mskip-\medmuskip\mkern5mu  \mathbin{\operator@font div}\penalty900\mkern5mu  \nonscript\mskip-\medmuskip
}
\newcommand{\DS}{\prb{Dominating Set}}
\newcommand{\MTINF}{\prb{MinTimeline$_\infty$}}
\newcommand{\mt}{\mathcal{T}}
\newcommand{\mg}{\mathcal{G}}
\newcommand{\snapdelta}{\Delta_{\mathrm{max}}}
\newcommand{\Oh}{\mathcal{O}}
\newcommand{\curr}{\mathrm{curr}}
\newcommand{\pos}{\mathrm{pos}}
\newcommand{\prob}[3]{
\begin{framed}
	\noindent \prb{#1} \\
	\textbf{Input:} #2 \\
	\textbf{Question:} #3
\end{framed}
}
\title{Timeline Problems in Temporal Graphs:\\ Vertex Cover vs.~Dominating Set}
\titlerunning{Timeline Problems in Temporal Graphs: Vertex Cover vs.~Dominating Set}
\authorrunning{Anton Herrmann,
Christian Komusiewicz,
Nils Morawietz,
Frank Sommer}
\author{Anton Herrmann}{Algorithmics and Computational Complexity, Technische Universität Berlin, Germany}{a.herrmann@tu-berlin.de}{https://orcid.org/0009-0008-8473-9043}{}
\author{Christian Komusiewicz}{Institute of Computer Science, Friedrich Schiller University Jena, Germany}{c.komusiewicz@uni-jena.de}{https://orcid.org/0000-0003-0829-7032}{}
\author{Nils Morawietz}{Institute of Computer Science, Friedrich Schiller University Jena, Germany\\ LaBRI, Université de Bordeaux, France}{nils.morawietz@uni-jena.de}{https://orcid.org/0000-0002-7283-4982}{Supported by the French ANR, project ANR-22-CE48-0001 (TEMPOGRAL).}
\author{Frank Sommer}{Institute of Logic and Computation, TU Wien, Austria\\ Institute of Computer Science, Friedrich Schiller University Jena, Germany}{frank.sommer@uni-jena.de}{https://orcid.org/0000-0003-4034-525X}{Supported by the Alexander von Humboldt Foundation.}
\keywords{NP-hard problem, FPT-algorithm, interval-membership-width, Color coding}
\begin{document}
\maketitle
\begin{abstract}
A temporal graph is a finite sequence of graphs, called snapshots, over the same vertex set.
Many temporal graph problems turn out to be much more difficult than their static counterparts.
One such problem is~\textsc{Timeline Vertex Cover} (also known as~\textsc{MinTimeline$_\infty$}), a temporal analogue to the classical~\textsc{Vertex Cover} problem.
In this problem, one is given a temporal graph~$\mathcal{G}$ and two integers~$k$ and~$\ell$, and the goal is to cover each edge of each snapshot by selecting for each vertex at most~$k$ activity intervals of length at most~$\ell$ each.
Here, an edge~$uv$ in the~$i$th snapshot is covered, if an activity interval of~$u$ or~$v$ is active at time~$i$.
In this work, we continue the algorithmic study of~\textsc{Timeline Vertex Cover} and introduce the~\textsc{Timeline Dominating Set} problem where we want to dominate all vertices in each snapshot by the selected activity intervals.

We analyze both problems from a classical and parameterized point of view and also consider partial problem versions, where the goal is to cover (dominate) at least~$t$~edges (vertices) of the snapshots.
With respect to the parameterized complexity, we consider the temporal graph parameters vertex-interval-membership-width~$(\vimw)$ and interval-membership-width~$(\imw)$.
We show that all considered problems admit FPT-algorithms when parameterized by~$\vimw + \, k+\ell$.
This provides a smaller parameter combination than the ones used for previously known FPT-algorithms for~\textsc{Timeline Vertex Cover}.
Surprisingly, for~$\imw+\, k+\ell$, \textsc{Timeline Dominating Set} turns out to be easier than~\textsc{Timeline Vertex Cover}, by also admitting an FPT-algorithm, whereas the vertex cover version is NP-hard even if~$\imw+\, k+\ell$ is constant.
We also consider parameterization by combinations of~$n$, the vertex set size, with~$k$ or~$\ell$ and parameterization by~$t$.
Here, we show for example that both partial problems are fixed-parameter tractable for~$t$ which significantly improves and generalizes a previous result for a special case of \textsc{Partial Timeline Vertex Cover} with~$k=1$.
\end{abstract}

\newpage

\section{Introduction}
\label{chapter:introduction}

A crucial task in the management of wireless sensor networks is to monitor the network by selecting few dedicated sensors that can monitor themselves and other sensors which are close enough to have a direct wireless connection~\cite{PCA18,SSZ02}.
In graph-theoretic terms, this is the famous NP-hard \DS{} problem where we say that a vertex dominates itself and all its neighbors and the task is to select few vertices of the graph such that every vertex is dominated by some selected vertex. In some applications of sensor networks, the network may change over time. Then, instead of a static graph~$G$ the input would be a \emph{temporal graph}~$\mg$, consisting of a set of \emph{snapshots}~$G_i$,  each reflecting the connections at timestep~$i$.
Moreover, in such a scenario each sensor could carry out the surveillance only for a bounded duration, for example due to limited battery capacities.
Thus, instead of selecting few sensors to monitor the network, we would like to select, for each sensor, an active time interval of bounded length~$\ell$, during which it monitors itself and all its current neighbors.
To make the model more general, we may also allow for each sensor to select up to~$k$ such active intervals, for example because each sensor carries~$k$ batteries. By calling a collection of active intervals for all vertices of the graph a \emph{$k$-activity $\ell$-timeline}, the scenario described above corresponds to the following problem.
\prob{\DTlong (\DT)}{A temporal graph $\mathcal{G}$ and integers $k \geq 1, \ell \geq 0$.}{Is there a $k$-activity $\ell$-timeline $\mathcal{T}$ which dominates all temporal vertices of~$\mathcal{G}$?}
For an example input and solution for \DT, refer to \Cref{fig:dom-cov}.
As in the static \DS{} problem, we may further generalize the problem to handle scenarios where we are not able to monitor the whole network over the full time period but instead we want to maximize the number of monitored sensors under the resource limitations.
\prob{\PDTlong (\PDT)}{A temporal graph $\mathcal{G}$ and integers $k \geq 1, \ell \geq 0,t \geq 0$.}{Is there a $k$-activity $\ell$-timeline $\mathcal{T}$ which dominates at least $t$ temporal vertices of~$\mathcal{G}$?}

In this work, we initialize the study of \PDTT in particular from a computational complexity perspective.
As we show, \PDTT is NP-hard. 
To cope with this intractability, we consider mostly the parameterized complexity of the problem, with a particular focus on structural parameterizations of temporal graphs.

The idea to consider timelines in a temporal graph setting is not new:  Rozenshtein et al.~\cite{DBLP:journals/datamine/RozenshteinTG21} introduced it in
\CTlong{},\footnote{Rozenshtein et al. denote this problem as \MTINF.} the second main problem studied in this work. 
\prob{\CTlong (\CT)}{A temporal graph $\mathcal{G}$ and integers $k \geq 1, \ell \geq 0$.}{Is there a $k$-activity $\ell$-timeline $\mathcal{T}$ which covers all temporal edges of~$\mathcal{G}$?}For an example input and solution for \CT, refer again to \Cref{fig:dom-cov}. 
The model behind \CT is that edges in a temporal graph arise only when at least one of their endpoints is active and the task of the problem is to provide an explanation of all edges such that the vertices are only active a few times, each time only for a short duration~\cite{DBLP:journals/datamine/RozenshteinTG21}.
\begin{figure}[t]
	\centering
	\begin{tikzpicture}[scale = 0.5, transform shape,
		V/.style = {circle, draw, fill=black}
		]

				\node (G1) at (-13, 0) {\Large $v_1$};
		\node (G1) at (-13, 1.25) {\Large $v_2$};
		\node (G1) at (-13, 2.5) {\Large $v_3$};
		\node (G1) at (-13, 3.75) {\Large $v_4$};
		\node (G1) at (-13, 5) {\Large $v_5$};

				\node[V] (A1) at (-10,0)    {};
		\node[V]  (B1) at (-10, 1.25)       {};
		\node[V]  (C1) at (-10, 2.5)   {};
		\node[V]  (D1) at (-10,3.75)  {};
		\node[V]  (E1) at (-10,5)       {};
		\node (G1) at (-10, -1.5) {\Large $G_1$};
								\draw [very thick] (B1) to [bend left = 45] (D1);
		\draw [very thick] (C1) to (D1);
		\draw [very thick] (D1) to (E1);

				\node[V] (A2) at (-6,0)    {};
		\node[V]  (B2) at (-6, 1.25)       {};
		\node[V]  (C2) at (-6, 2.5)   {};
		\node[V]  (D2) at (-6,3.75)  {};
		\node[V]  (E2) at (-6,5)       {};
		\node (G2) at (-6, -1.5) {\Large $G_2$};
				\draw [very thick] (A2) to (B2);
		\draw [very thick] (B2) to (C2);
		\draw [very thick] (B2) to [bend left = 45] (D2);
		\draw [very thick] (B2) to [bend right = 45] (E2);
				\draw [very thick] (D2) to (E2);

				\node[V] (A3) at (-2,0)    {};
		\node[V]  (B3) at (-2, 1.25)       {};
		\node[V]  (C3) at (-2, 2.5)   {};
		\node[V]  (D3) at (-2,3.75)  {};
		\node[V]  (E3) at (-2,5)       {};
		\node (G3) at (-2, -1.5) {\Large $G_3$};
						\draw [very thick] (A3) to (B3);
		\draw [very thick] (C3) to (D3);
		\draw [very thick] (E3) to (D3);
		\draw [very thick] (C3) to [bend right = 45] (E3);
		
				\node[V] (A4) at (2,0)    {};
		\node[V]  (B4) at (2, 1.25)       {};
		\node[V]  (C4) at (2, 2.5)   {};
		\node[V]  (D4) at (2,3.75)  {};
		\node[V]  (E4) at (2,5)       {};
		\node (G4) at (2, -1.5) {\Large $G_4$};
						\draw [very thick] (B4) to (C4);
						\draw [very thick] (D4) to (E4);
		\draw [very thick] (B4) to (A4);
		\draw [very thick] (D4) to (C4);

				\node[V] (A5) at (6,0)    {};
		\node[V]  (B5) at (6, 1.25)       {};
		\node[V]  (C5) at (6, 2.5)   {};
		\node[V]  (D5) at (6,3.75)  {};
		\node[V]  (E5) at (6,5)       {};
		\node (G5) at (6, -1.5) {\Large $G_5$};
				\draw [very thick] (A5) to (E5);
		\draw [very thick] (C5) to [bend right = 45] (E5);
		\draw [very thick] (C5) to [bend right = 45] (A5);
				
				\node[V] (A6) at (10,0)    {};
		\node[V]  (B6) at (10, 1.25)       {};
		\node[V]  (C6) at (10, 2.5)   {};
		\node[V]  (D6) at (10,3.75)  {};
		\node[V]  (E6) at (10,5)       {};
		\node (G6) at (10, -1.5) {\Large $G_6$};
				\draw [very thick] (A6) to [bend left = 45] (D6);
		\draw [very thick] (B6) to [bend right = 45] (E6);
		\draw [very thick] (C6) to [bend right = 30] (E6);
		\draw [very thick] (B6) to (A6);
		
				\draw[dashed] (-8, -1) -- (-8, 6);
		\draw[dashed] (-4, -1) -- (-4, 6);
		\draw[dashed] (0, -1) -- (0, 6);
		\draw[dashed] (4, -1) -- (4, 6);
		\draw[dashed] (8, -1) -- (8, 6);

				\begin{scope}[on background layer]
			\draw[color=blue, very thick] (-10.5,3.35) rectangle (-1.5, 4.15);
			\draw[color=blue, very thick] (-6.5, 0.85) rectangle (2.5,1.65);
			\draw[color=blue, very thick] (1.5,4.6) rectangle (10.5,5.4);
			\draw[color=blue, very thick] (1.5,-0.4) rectangle (10.5,0.4);
			\draw[color=blue, very thick] (-2.5,2.1) rectangle (6.5,2.9);

			\draw[color=red, very thick] (-10.7,3.2) rectangle (-9.3, 4.3);
			\draw[color=red, very thick] (9.3,3.2) rectangle (10.7, 4.3);

			\draw[color=red, very thick] (-10.7,3.2 - 3.75) rectangle (-9.3, 4.3 - 3.75);
			\draw[color=red, very thick] (-2.7,3.2 - 3.75) rectangle (-1.3, 4.3 - 3.75);

			\draw[color=red, very thick] (-2.7,3.2 - 1.25) rectangle (-1.3, 4.3 - 1.25);
			\draw[color=red, very thick] (5.3,3.2 - 1.25) rectangle (6.7, 4.3 - 1.25);

			\draw[color=red, very thick] (1.3,3.2 + 1.25) rectangle (2.7, 4.3 + 1.25);
			\draw[color=red, very thick] (9.3,3.2 + 1.25) rectangle (10.7, 4.3 + 1.25);

			\draw[color=red, very thick] (1.3,3.2 - 2.5) rectangle (2.7, 4.3 - 2.5);
			\draw[color=red, very thick] (-6.7,3.2 - 2.5) rectangle (-5.3, 4.3 - 2.5);
		\end{scope}
	\end{tikzpicture}
	\caption{The short red boxes represent a 2-activity 0-timeline dominating all temporal vertices and the long blue boxes represent a 1-activity 2-timeline covering all temporal edges. Observe that the interval length is defined in such a way that an interval starting and ending in the same snapshot has length~0.}
	\label{fig:dom-cov}
\end{figure}
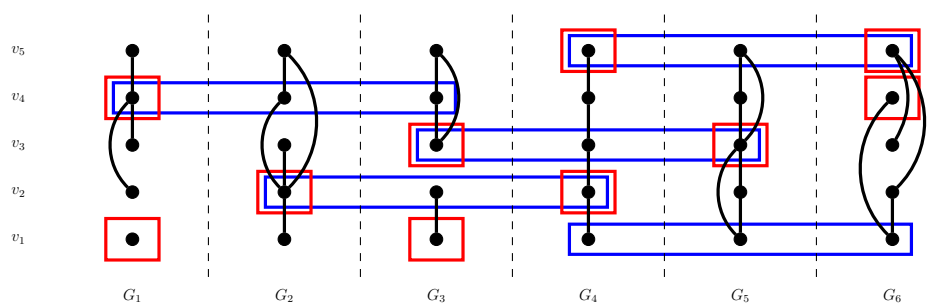
Dondi et al.~\cite{DondiPartialBounded} later analyzed the partial variant of the problem.
\prob{\PCTlong (\PCT)}{A temporal graph $\mathcal{G}$ and integers $k \geq 1, \ell \geq 0,t \geq 0$.}{Is there a $k$-activity $\ell$-timeline $\mathcal{T}$ which covers at least $t$ temporal edges of~$\mathcal{G}$?}
We continue the study of \PCTT with two aims: First, some of the parameters considered in our study have not yet been studied for this problem. Second, we seek a comprehensive complexity overview for these two fundamental timeline problems, highlighting their similarities and differences.

\subparagraph{Previous results.}
Rozenshtein et al.~\cite{DBLP:journals/datamine/RozenshteinTG21} showed that \CT is NP-hard in general, but solvable in polynomial time for $k=1$. Froese et al.~\cite{DBLP:journals/mst/FroeseKZ24} continued the algorithmic study of~\CT.
They proved NP-hardness even if the input consists of at most three snapshots,~${k=2}$ and~$\ell = 0$ and studied the parameterized complexity with respect to different parameters.
Froese et al.~\cite{DBLP:journals/mst/FroeseKZ24} showed fixed-parameter tractability for~$n +k$ and $\W$[1]-hardness for $n + \ell$.
Herein,~$n$ is the number of vertices of the underlying graph which in turn is the static graph obtained by taking the union of all edge sets.
Moreover, \CT admits an XP-algorithm for~$n$ and is fixed-parameter tractable with respect to $n$ if~$\ell = 0$~\cite{DBLP:journals/mst/FroeseKZ24}.
Schubert \cite{SchubertMA} later analyzed the parameterized complexity of  \CT with respect to combinations of input parameters with structural parameters of the underlying graph.
Finally, Dondi et al.~\cite{DondiPartialBounded} obtained two results for~\PCT with~$k=1$: First, it was shown that this case is NP-hard, in contrast to \CT. Second, Dondi et al.~\cite{DondiPartialBounded} gave FPT-algorithms for parameterization by the number of covered edges and by the number of uncovered edges.

\subparagraph{Our results.}
We start off with a new hardness result for \CT{}: We show that the problem remains NP-hard even if the total number of snapshots~$T$,~$k$, and~$\ell$ are constant and additionally the maximum degree in every snapshot is at most one.
Then, we show the NP-hardness of \DT, again for the case that~$T$,~$k$, and~$\ell$ are constant and the maximum snapshot degree is one.

We then study the parameterized complexity of the problems; an overview of the results is given in \Cref{tab:paraRes}. As noted above, \CT{} is fixed-parameter tractable with respect to~$n+k$~\cite{DBLP:journals/mst/FroeseKZ24}. Since~$n$ is a rather large parameter, Froese et al.~\cite{DBLP:journals/mst/FroeseKZ24} asked whether there are FPT-algorithms also for parameters that are smaller than~$n$. We make progress in this direction by considering the two parameters \textit{interval-membership-width $(\imw)$} and \textit{vertex-interval-membership-width $(\vimw)$} which where introduced by Bumpus and Meeks~\cite{vimwParameter}. The parameters $\imw$ and $\vimw$ are interesting in the sense that they are among the relatively few truly temporal structural graph parameters because they are sensitive to the order of the snapshots~\cite{vimwParameter}.
Informally, $\vimw$ can be defined as follows:
We say the \textit{lifetime} of a vertex is the time interval between its first and last non-isolated appearance in a snapshot. For a timestep~$i$, the bag of~$i$ is the set of vertices whose lifetime contains timestep~$i$, and $\vimw$ is the size of the largest bag of~$\mg$.

We first show that \PCT{} is fixed-parameter tractable with respect to $\vimw + k + \ell$. To further improve on this, we introduce a hierarchy of parameters with non-increasing size.
The idea, inspired by the $h$-index of static graphs~\cite{ES12}, is that a temporal graph may have only few large bags.
In that case, we would rather parameterize by the size of the~$x$th largest bag, denoted by~$\vimw[x]$, for some~$x>1$.
 With this definition, we have~$\vimw=\vimw[1]$ and~$\vimw[x]\ge\vimw[x+1]$ for all~$x\in [T]$. We show that \PCT{} is fixed-parameter tractable for~$\vimw[x]+k+\ell$ with~$x=k\cdot(\ell+1)$.
 Informally, this means that the larger~$k$ and~$\ell$ get, the more large bags can be ignored for the parameter.  We then consider \PDTT in the same setting.
 We first showing fixed-parameter tractability of~\PDT for $\vimw+\,k+\ell$. Then we show that for~\DT{} it can be improved to fixed-parameter tractability for $\vimw[x]+k+\ell$ for~$x=\ell+2$ while for \PDT parameterization by~$\vimw[2]+k+\ell$ is not useful. The latter result is obtained by showing NP-hardness of the extremely restricted special case when there are two snapshots, one of which is edgeless,~$k=1$ and~$\ell=0$.

 Afterwards, we consider the interval-membership-width~$(\imw)$. Informally, this is the edge-version of $\vimw$, that is, the bag at timestep~$i$ contains the edges which occur at timestep~$i$ and those that occur both before and after~$i$. Note that~$\imw$ can be considered to be  a smaller parameter than~$\vimw$: For all instances, we have~$\vimw \in \Omega(\sqrt{\imw})$ and there are instances with constant values of~$\imw$ and unbounded values of~$\vimw$.
 We show that \DT is fixed-parameter tractable for the parameter~$\imw+\,k+\ell$ while all three other problems are NP-hard for constant values of $\imw+\,k+\ell$. Given the latter hardness results, we refrain from analyzing a hierarchy of~$\imw$-based parameters. Altogether, the results show that in our setting, $\vimw$ is a much more powerful parameter than~$\imw$.

We then conclude by considering the more standard parameters~$n$, $k$, $\ell$, and~$t$, where~$t$ denotes the number of vertices to dominate or edges to cover, respectively.
For~\DT, we show fixed-parameter tractability for~${n+k}$ and $\W$[1]-hardness for $n+ \ell$, thus showing that the complexity is similar to the one of \CT.
Finally, we show that \PDT and \PCT can be solved in $2^{\Oh(t)} \cdot (n+T)^{\Oh(1)}$~time.
This improves and extends a previous result of
 Dondi et al.~\cite[Thm.~6]{DondiPartialBounded} who provided an algorithm for~\PCT with~$k=1$ that has running time~$2^{t\cdot \log(t)} \cdot (n+T)^{\Oh(1)}= t^t \cdot (n+T)^{\Oh(1)}$.

\begin{table}[t]
	\caption{Overview of our results on the classic and parameterized complexity for  \PCTT and \PDTT.
	Herein, $\vimw$ denotes the vertex-interval-membership-width, $\imw$ denotes the interval-membership-width, and $\vimw[x]$ denotes the size of the $x$-largest bag in the vertex-interval-membership sequence.
	Recall that~$\vimw=\vimw[1]$.}
	\label{tab:paraRes}
	\centering
	\small
	\begin{tabular}{@{}l r r r r}
	\toprule

		 Parameter & \CT & \PCT & \DT & \PDT \\

\midrule

\multirow{2}{*}{$\vimw + \,k +\ell$} & FPT & FPT & FPT & FPT \\
		& \scriptsize \cref{CovTimeLine_vimw+k+l} & \scriptsize \cref{CovTimeLine_vimw+k+l} & \scriptsize \cref{DomTimeLine_vimw+k+l} & \scriptsize \cref{DomTimeLine_vimw+k+l}\\

		\midrule

		\multirow{2}{*}{$\imw +\, k +\ell$} & NP-h & NP-h  & FPT & NP-h  \\
		& \scriptsize \cref{thm-cov-np-h-imw-k-l} & \scriptsize \cref{thm-cov-np-h-imw-k-l} & \scriptsize \cref{thm-kernel-imw-k-l} & \scriptsize \cref{thm-pdt-np-h-imw-k-l}
		\\

\midrule

		$k+\ell+\vimw[x]$ & $x = k(\ell+1)+1$ & $x = k(\ell+1)+1$ & $x = \ell+2$ & $x = 2$ \\
		& FPT & FPT & FPT & NP-h \\
& \scriptsize \cref{CovTimeLine_vimw+k+l2} & \scriptsize \cref{CovTimeLine_vimw+k+l2} & \scriptsize \cref{DomTimeLine_vimw+k+l2} & \scriptsize \cref{PDomTimeLine_vimw2+k+l} \\

		\midrule
		\midrule

\multirow{2}{*}{$n+k$} & FPT & \multirow{2}{*}{?} & FPT & \multirow{2}{*}{?} \\
		& \scriptsize\cite[Thm.~8]{DBLP:journals/mst/FroeseKZ24} & & \scriptsize \cref{DomTimeline_n+k} & \\

		\midrule

		\multirow{2}{*}{$n+\ell$} & W[1]-h & W[1]-h & W[1]-h & W[1]-h \\
		& \scriptsize\cite[Thm.~12]{DBLP:journals/mst/FroeseKZ24} & \scriptsize\cite[Thm.~12]{DBLP:journals/mst/FroeseKZ24} & \scriptsize \cref{W1hardnessN} & \scriptsize\cref{W1hardnessN}\\

		\midrule

		\multirow{2}{*}{$t$} & \multirow{2}{*}{-} & FPT & \multirow{2}{*}{-} & FPT \\
		& & \scriptsize
		\cref{CovTimelineParametert} 		 & & \scriptsize
		 \cref{DomTimelineParametert}\\

		\bottomrule
	\end{tabular}
\end{table}

\subparagraph{Further related work.}

Akrida et al.~\cite{DBLP:journals/jcss/AkridaMSZ20} introduced the problems \prb{Temporal Vertex Cover (TVC)} and \prb{Sliding Window Temporal Vertex Cover (SW-TVC)}, where the goal is to find a minimum number of temporal vertices that cover the edges of the temporal graph.
In \prb{TVC} each edge needs to be covered in at least one of the snapshots and in \prb{SW-TVC}, the input contains an integer $\Delta$, and the aim is to cover every edge at least once at every $\Delta$ consecutive time steps.
Akrida et al.~\cite{DBLP:journals/jcss/AkridaMSZ20} showed that \prb{TVC} remains NP-hard when the underlying graph is a star graph.
Moreover, they studied the approximability of both problems, provided (S)ETH-based running time lower bounds and designed an exact dynamic programming algorithm with exponential running time.
The research on these problems was then continued by Hamm et al.~\cite{DBLP:conf/aaai/HammKMS22}, who showed that \prb{TVC} is solvable in polynomial time if the underlying graph is a path or cycle while \prb{SW-TVC} remains NP-hard in this case.
Besides that, they provided several (exact and approximation) algorithms for the sliding window variant. The \prb{Temporal Dominating Set (TDS)} problem is defined analogously, here the task is to dominate every vertex in at least one of the snapshots of the temporal graph by selecting few temporal vertices. \prb{TDS} is NP-hard in very restricted cases, for example when  the maximum degree in every snapshot is 2~\cite{HKMS25a}.
There are further temporal variants of \prb{Vertex Cover} and \prb{Dominating Set} which have been discussed and studied. These include a multistage variant of \prb{Vertex Cover} \cite{MultistageFluschnik}, a reachability-based variant for \prb{Dominating Set} \cite{TemporalDS2} and different conditions on how and when a vertex should be dominated \cite{TemporalDS1, TemporalDS3}.

\section{Preliminaries}
\label{sec:preliminaries}

We denote the set of integers $\{i, i+1, \dots, j-1, j \}$ by $[i,j]$, and $[1,i]$ simply by $[i]$.
Given a set $V$ and an integer $k > 0$, we write $\binom{V}{k}$ for the collection of all size-$k$ subsets of~$V$.

\subparagraph{Graph theory.}
A \emph{(static) graph}~$G = (V, E)$ consists of a set of \emph{vertices}~$V$ and a set of edges~$E \subseteq \binom{V}{2}$.
We denote by~$V(G)$ and~$E(G)$ the vertex and edge set of~$G$, respectively. Furthermore, we let~$n := |V(G)|$ and~$m := |E(G)|$.
For an edge~$\{u,v\}$ we write~$uv$ and call~$u$ and~$v$ \emph{endpoints} of the edge.
Further, we say that the vertex~$v \in V$ is \emph{incident} with~$e \in E$ if~$v \in e$.
If~$uv \in E$, then~$u$ and~$v$ are \emph{adjacent} and they are \emph{neighbors} of each other.
The set~$N_G(v):=\{u\in V: u \text{ and } v \text{ are adjacent}\}$ is the \emph{(open) neighborhood} of~$v$.
Moreover, $N_G[v]:=N_G(v)\cup\{v\}$ is the \emph{closed neighborhood} of~$v$.
For a set of vertices~$S$, we let~$N_G[S]:= \bigcup_{s\in S} N_G[s]$ and~$N_G(S)=N_G[S] \setminus S$.
By~$\deg_G(v):= |N_G(v)|$ we denote the \emph{degree} of~$v$.
If~$\deg_G(v)=0$, we say that~$v$ is \emph{isolated}.
The \emph{maximum degree} is~$\Delta (G) := \max_{v \in V} \deg_G (v)$.
For vertex sets~$S$ and~$T$, we let~$E_G(S,T):=\{uv\in E(G): u\in S \text{ and } v\in T\}$ and we let~$E(S):=E(S,S)$.
A graph~$G$ is \emph{bipartite} if~$V(G)$ can be partitioned into two sets~$S$~and~$T$ such that~$E(G)=E(S,T)$.

A \emph{star} is a bipartite graph where~$|S|=1$.
We may drop the subscript~$\cdot_G$ when it is clear from context.
A \emph{path} in~$G$ is a sequence of distinct vertices~$(v_1, v_2, \ldots, v_{z+1})$ such that~$v_iv_{i+1}\in E$ for each~$i\in[z]$.
In this case we say that each vertex~$v_i$ is \emph{connected} to each vertex~$v_j$.

For more details on graph theory, we refer to the book by Diestel~\cite{DBLP:books/daglib/0030488}.

\subparagraph{Temporal graphs.}
	A \emph{temporal graph}~$\mathcal{G}$ is a finite sequence of graphs~$(G_1, \dots, G_T)$ which all have the same set of vertices~$V(\mathcal{G}) := V(G_1) = \dots = V(G_T)$.
	The graphs~$G_i$ are called \emph{snapshots}, $T$ is the \emph{lifetime of~$\mg$}, and~$i\in[T]$ is a \emph{time step}.
	We may write~$V$ instead of~$V(\mathcal{G})$ and~$E_i$ instead of~$E(G_i)$.
Moreover, for~$u,v\in V$ and~$i\in[T]$, $(v,i)$ is a \emph{temporal vertex} and~$(\{u,v\},i) = (uv,i)$ with~$uv \in E_i$ is a \emph{temporal edge} in snapshot~$G_i$. We say edge~$e$ \emph{appears} in~$G_i$ if~$e \in E_i$.
Moreover, we say that a vertex~$v$ is \emph{incident} with the temporal edge~$(e,i)$ if~$v \in e$.
For a temporal vertex $(v,i)$, we define the \emph{(open) neighborhood} by~$N_\mathcal{G} ( (v,i) ) := N_{G_i} (v)$ and the \emph{closed neighborhood} by~$N_\mathcal{G} [ (v,i) ] := N_{G_i} [v]$.
By~$G_ \downarrow (\mathcal{G}) := (V_\downarrow := V(\mathcal{G}), E_\downarrow : = \bigcup_{i = 1}^T E_i)$ we denote the \emph{underlying graph} of $\mathcal{G}$.
We let~$\Delta_i (\mathcal{G})=\Delta(G_i)$.
Furthermore, by~$\snapdelta (\mathcal{G}) := \max_{i=1}^T \Delta_i (\mathcal{G})$ we denote the \emph{maximum snapshot degree} of $\mathcal{G}$.
If~$E_i=\emptyset$ we say that~$G_i$ is \emph{empty}.
We may drop the subscript~$\cdot_\mg$ when it is clear from context.

\subparagraph{Parameter definitions.}

We give the precise definitions of the parameters related to the (vertex)-interval-membership width:~$\vimw$,~$\vimw[x]$, and~$\imw$.

Let $\mathcal{G} = (G_1, \dots, G_T)$ be a temporal graph.
The \emph{vertex-interval-membership sequence} of $\mathcal{G}$ is the sequence $(F_i)_{i \in [T]}$ of vertex-subsets $F_i \subseteq V(\mathcal{G})$ (called \emph{bags}), where each $F_i$ is defined as $$F_i := \{v \in V(\mathcal{G}) : \exists p,q \in [T] \text{ such that } p \leq i \leq q,  \deg_{G_p} (v) \geq 1 \text{ and } \deg_{G_q}(v) \geq 1 \}.$$
The \emph{vertex-interval-membership-width} of $\mathcal{G}$, denoted by $\vimw (\mathcal{G})$, is the maximum size of a bag in the vertex-interval-membership sequence, that is, $\vimw (\mathcal{G}) := \max \{|F_i| \colon i \in [T]\}$.

For given~$x\in [T]$ we introduce the parameter~$\vimw [x]$ as the size of the~$x$th largest bag of the vertex-interval-membership sequence.
Formally, let~$(F_{i_1}, \dots, F_{i_T})$ be an ordering of~$(F_i)_{i \in [T]}$ such that~$|F_{i_1}| \geq \dots \geq |F_{i_T}|$.
Then we define $$\vimw[x](\mg) := |F_{i_x}|.$$

Note that~$\vimw[i]$ can be much smaller than~$\vimw[1] = \vimw$: for example consider a temporal graph where every vertex only appears non-isolated for a short period of time.
If all snapshots have only a few edges, then~$\vimw$ is small.
However, a single snapshot with many edges is enough to make $\vimw$ arbitrarily large.
If only a few such snapshots exist, then~$\vimw[i]$ is already much smaller then~$\vimw$ for small~$i$.

The \emph{interval-membership sequence} of~$\mg$ is the sequence~$(F_i)_{i \in [T]}$ of edge-subsets~$F_i \subseteq E_\downarrow$ (again called \emph{bags}), where each~$F_i$ is defined as $$F_i := \{e \in E_\downarrow : \exists p,q \in [T] \text{ such that } p \leq i \leq q \text{ and } e \in E_p \cap E_q \}.$$
The \emph{interval-membership-width} of $\mathcal{G}$, denoted by $\imw (\mathcal{G})$, is the maximum size of a bag in the interval-membership sequence, that is, $\imw (\mathcal{G}) := \max \{|F_i|\colon i \in [T]\}$.

We remark that~$\vimw = \Omega(\sqrt{\imw})$ and that there exist temporal graphs with~$\imw = 1$ but arbitrary large~$\vimw$~\cite{vimwParameter}.
Finally, note that the vertex-interval-membership sequence and the interval-membership sequence can be computed in polynomial time with respect to the input size.

\subparagraph{Timelines.} Given a temporal graph $\mathcal{G} = (G_1, \dots , G_T)$, an \textit{activity interval} of a vertex $v$ is a triple $(v,a,b)\in V(\mathcal{G}) \times [T] \times [T]$ such that $a \leq b$. We say $a$ that is the \textit{starting time} and  $b$ is the \textit{end time} of an activity interval $(v,a,b)$.
The \textit{length} of the activity interval is defined by $b-a$.
A \textit{$k$-activity timeline} of the temporal graph $\mathcal{G}$ is a set of activity intervals $\mathcal{T} \subseteq \{(v,a,b) \in V(\mathcal{G}) \times [T] \times [T] : a \leq b\}$, which contains at most~$k$ activity intervals for each vertex.
A \textit{$k$-activity $\ell$-timeline} is a $k$-activity timeline in which each activity interval has length at most $\ell$.
Given a $k$-activity timeline~$\mathcal{T}$, a vertex $v$ is called \textit{active} in snapshot $G_i$, if there exists $(v,a,b) \in \mathcal{T}$ such that~$a \leq i \leq b$.
 We say that the activity interval~$(v,a,b)$ \textit{intersects} snapshot $G_i$, if $a \leq i \leq b$. 
A $k$-activity timeline $\mathcal{T}$ \textit{dominates} the temporal vertex $(v,i)$ if~${v \in N_{G_i} [\{ u : u \text{ is active in } G_i \}]}$.
A $k$-activity timeline $\mathcal{T}$ \textit{covers} the temporal edge $(e,i)$ if at least one of the two endpoints of~$e$ is active in snapshot~$G_i$.

\subparagraph{Parameterized complexity.}
Let~$L \subseteq \Sigma^*$ be a computational problem specified over some alphabet~$\Sigma$ and let~$p : \Sigma^* \to \mathds{N}$ be a \emph{parameter}, that is, $p$ assigns to each instance of~$L$ an integer parameter value (which we simply denote by~$p$ if the instance is clear from the context).
We say that~$L$ is \emph{fixed-parameter tractable (FPT) with respect to~$p$}
if~$L$ can be decided in $f(p) \cdot |I|^{\Oh(1)}$~time where~$|I|$ is the length of the input.
The corresponding hardness concept related to fixed-parameter tractability is
W[$t$]-hardness, $t \ge 1$; if a problem~$L$ is W[$t$]-hard with respect to~$p$, then~$L$ is assumed to \emph{not} be fixed-parameter tractable.
Moreover, $L$ is \emph{slice-wise polynomial (XP) with respect to~$p$} if~$L$ can be decided in $f (p) \cdot |I|^{g(p)}$~time.
Let~$(I, p)$ and~$(I', p')$ be two instances of~$L$.
A \emph{reduction to a (problem) kernel for~$L$} is a polynomial-time algorithm that computes an instance~$(I', p')$ such that
\begin{itemize}
\item $p' + |I'| \le g(p)$, and
\item $(I, p)$ is a yes-instance of~$L$ if and only if~$(I', p')$ is a yes-instance of~$L$.
\end{itemize}
The instance~$(I', p')$ is referred to as \emph{the problem kernel} and~$g$ is the \emph{size} of the kernel.
Furthermore, if~$g$ is a polynomial, we say that the kernel is a \emph{polynomial problem kernel}.
For more details about parameterized complexity we refer to the standard monographs~\cite{downey2012parameterized,CFK+15}.

\section{NP-Hardness Results}
\label{HardnessResults}

We show NP-hardness for \CT and \DT even when they are restricted to temporal graphs with a maximum snapshot degree of~1 and constant lifetime~$T$, interval number~$k$, and interval length~$\ell$.
First, we present the reduction for \CT.

\begin{theorem}
	\label{CovTimelineMaxDeg1Hardness}
	\CT is NP-hard even if $T=23, k= 2, \ell =4$, and the maximum snapshot degree is one.
\end{theorem}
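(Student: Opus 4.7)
The plan is to prove \Cref{CovTimelineMaxDeg1Hardness} by a polynomial-time reduction from a bounded-occurrence variant of 3-SAT (for instance \prb{(3,B)-SAT}, which is NP-hard for a suitable small constant~$B$). Each Boolean variable $x$ of the formula is encoded by a vertex $v_x$, and the two activity intervals of length at most~$\ell = 4$ allowed for $v_x$ are to be forced into one of two essentially disjoint \emph{templates} on the $T=23$ time steps, one representing $x=\mathrm{true}$ and the other $x=\mathrm{false}$. Clauses then become small gadgets whose temporal edges can be covered if and only if at least one of the three literals is set to true.

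For the \textbf{variable gadget}, I would use a small number of dedicated early and late snapshots that are each a matching (so the max-snapshot-degree-one condition is preserved) between $v_x$ and auxiliary forcing vertices whose own budget of $k=2$ short intervals is insufficient to cover all their incident edges unless $v_x$ itself contributes to the cover in exactly one of the two templates. A natural layout is to cut the $23$ time steps into a few blocks of length $\ell+1=5$ and reserve two blocks as the ``true window'' and two as the ``false window'' of each variable; the forcing snapshots then make any placement of $v_x$'s two intervals other than ``both inside the true window'' or ``both inside the false window'' leave an uncoverable edge.

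For the \textbf{clause gadget}, for each clause $C=(l_1\vee l_2\vee l_3)$ I would introduce a fresh clause vertex $c_C$ together with three clause-time-steps $t_1<t_2<t_3$ chosen so that consecutive $t_i$'s are separated by strictly more than $\ell=4$. At time $t_i$, I place the matching edge $\{v_{x_i},c_C\}$, where the slot $t_i$ is picked so that $v_{x_i}$ is active at $t_i$ precisely when literal $l_i$ is satisfied under its polarity (positive literals use the ``true window'' slots of the template, negative literals use the ``false window'' slots). Because the three $t_i$ are pairwise farther apart than $\ell$, no single length-$\ell$ interval can cover two of the three edges incident to $c_C$, so the $k=2$ intervals available to $c_C$ cover at most two of them; consequently at least one edge must be covered on the variable side, which exactly encodes that at least one literal of $C$ is true.

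The \textbf{main obstacle} will be carrying out this plan while keeping $T$, $k$, and $\ell$ at the claimed small constants and simultaneously honoring the max-snapshot-degree-one condition, which forbids putting two edges incident to the same vertex in the same snapshot. Many different clauses must share the $O(1)$ clause time steps, yet no variable vertex $v_x$ may appear in two matching edges at the same timestep; this is where the bounded-occurrence hypothesis enters, since the bipartite clause–variable incidence graph then has constant degree and can be properly edge-coloured with $O(1)$ colours, each colour class being safely placed as a matching inside a single snapshot. Fine-tuning the layout so that the variable-forcing sub-snapshots, the three well-spaced clause time steps (repeated $O(1)$ times to accommodate the edge-colouring), and the buffer time steps enforcing gaps larger than $\ell$ together fit into exactly $T=23$ snapshots with $k=2$ and $\ell=4$ is then a careful but essentially combinatorial counting argument rather than a conceptual difficulty.
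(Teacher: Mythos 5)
Your plan has two gaps that are more than bookkeeping, and the second one breaks the construction as described. First, the variable-forcing gadget is never actually built, and its most natural implementation forces nothing: if a forcing vertex $f$ is joined to $v_x$ by matching edges at the boundary time steps of the two ``true'' blocks (say at times $1,5,11,15$), then $f$ can cover all four of these edges by itself with its own two intervals $[1,5]$ and $[11,15]$, leaving $v_x$ completely unconstrained. To make $f$ overloaded you must give it at least three pairwise far-apart groups of edges whose coverage is in turn forced away from $f$, which starts a chain of auxiliary gadgets all competing for the same $23$ time steps. This is exactly the part you defer as ``essentially combinatorial counting,'' but it is the part where the construction can fail, and it is not exhibited.

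Second, and more decisively, the clause gadget is wrong for clauses whose literals share a polarity. You require $t_{i+1}-t_i>\ell=4$ so that no single interval of $c_C$ covers two clause edges; since any two time steps inside one block of length $\ell+1=5$ differ by at most $4$, at most one $t_i$ can lie in each block. A positive literal must be placed in one of the two ``true'' blocks, so a clause with three positive (or three negative) literals would need three pairwise well-separated times inside only two blocks, which is impossible. Consequently either $c_C$ can cover all three of its edges with two intervals (killing the backward direction) or the instance cannot be laid out. This can be repaired by first transforming the formula so that no clause is monotone, but that is a missing conceptual step, not a counting detail. For comparison, the paper avoids forcing and clause gadgets entirely by reducing from \prb{$3$-Coloring} on graphs of maximum degree four: three identical color blocks of five snapshots (each snapshot holding one class of a Vizing edge coloring of $G$) are separated by four empty snapshots, so no interval of length $4$ meets two blocks; with $k=2$ every vertex must be inactive in at least one block, that block is its color, and every edge occurs once in every block, which immediately yields properness.
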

\begin{proof}
	We reduce from the NP-hard problem \prb{$3$-Coloring} on graphs with maximum degree four~\cite{DBLP:conf/stoc/GareyJS74}.
In \prb{$3$-Coloring} the input is a graph and the question is whether the vertices can be colored with three colors such that no two adjacent vertices have the same color.

	\textit{Intuition:} The basic idea is to construct a temporal graph with three parts $C_1, C_2$ and~$C_3$ (call them \textit{color blocks}), each consisting of five snapshots and representing one of the three colors.
The part in which a vertex $v$ is \emph{not active} corresponds to the assigned color of $v$.
As each of the three parts will contain every edge of $G$ and only vertices of degree at most one, this ensures that all neighbors are active in the part where $v$ is not active.
Hence, they are assigned a different color.
The basic idea for encoding a given graph $G$ into snapshots of maximum degree one is to find a proper edge coloring with five colors and split the edge set with respect to this coloring, which can be done in polynomial time via Vizing's theorem.
Our aim is that no activity interval of any vertex can hit more than one color block.
To ensure this, we add sufficiently many empty snapshots between any two parts.

	\textit{Construction}: Let $(G= (V = \{v_1, \dots, v_n\},E))$ be an instance of \prb{$3$-Coloring} on graphs with maximum degree four.
We construct the following temporal graph~$\mathcal{G}$ where~$V(\mathcal{G})=V$.

	The temporal graph $\mathcal{G}$ consists of $23$~snapshots.
It can be seen as a sequence of five parts, namely $C_1, H_1, C_2, H_2, C_3$.
Parts~$H_1$ and~$H_2$ consist of $4$~empty snapshots and each of~$C_1,C_2,C_3$ consists of 5~snapshots.
Thus, snapshots~$6$ to~$9$ and~$15$ to~18 are empty.
We denote~$H_1$ and~$H_2$ as the \emph{empty blocks} and we call $C_1,C_2$ and $C_3$ \textit{color blocks} as they correspond to the three colors.
In the following we formally define the snapshots of the color blocks.
All three color blocks~$C_1, C_2$ and~$C_3$ will be identical.
Hence, we only formally define~$C_1$, that is, snapshots~1 to~5.

Here we exploit the fact that the graph~$G$ has maximum degree four, because this implies that there exists a proper edge coloring with at most five colors due to Vizing's theorem~\cite{V64}, which can be computed in polynomial time~\cite{DBLP:journals/ipl/MisraG92}.
Let~$ E(G) = F_1 \cup \dots \cup F_5$ be the partition of~$E(G)$ into five colors of some proper edge coloring.
Now, snapshot~$j$ contains exactly the edges of~$F_j$.
	Finally, observe that by construction, the maximum degree in each snapshot is at most one.
By setting $k =2$ and $\ell = 4$, we obtain the \CT instance $(\mathcal{G}, k, \ell)$.

	\textit{Correctness:}
	We show that $G$ is $3$-colorable if and only if $\mathcal{G}$ has a $2$-activity $4$-timeline~$\mathcal{T}$ covering all temporal edges of~$\mathcal{G}$.
In the following we say a vertex~$v$ is \emph{active} during~$C_j$ for~$j\in[3]$ to indicate that an activity interval of~$v$ starts at the first time step and ends and the last time step of the corresponding part.

$(\Rightarrow)$
Let~$\phi: V(G)\to [3]$ be a proper 3-coloring of~$G$.
We construct an activity timeline of~$\mathcal{G}$ which covers all temporal edges.
For each~$i \in [n]$, the vertex $v\in V(G)$ is active in the two color blocks, which do not correspond to the assigned color of $v$.
Clearly, this yields a 2-activity timeline~$\mathcal{T}$.
It remains to argue that each temporal edge is covered by~$\mathcal{T}$.
Since~$\phi$ is a proper 3-coloring, both endpoints of each edge~$uw\in E(G)$ have a different color.
Moreover, since in color block~$C_j$ all vertices are active which do not have color~$j$, we conclude that all temporal edges of~$\mathcal{G}$ are covered by~$\mathcal{T}$.

$(\Leftarrow)$
Let~$\mathcal{T}$ be a $2$-activity 4-timeline covering all temporal edges of~$\mathcal{G}$.
Observe that since parts~$H_1$ and~$H_2$ corresponding to snapshots~6 to~9 and~15 to~18 are empty, no activity interval of any vertex can hit more than one color block.
Consequently, since~$k=2$, for each vertex~$v$ there exists at least one color~$j\in[3]$ such that~$v$ is not active during any snapshot of color block~$C_j$.
Let~$\phi(v)\in[3]$ be such a color.
Note that if~$\phi(v)$ is not unique, we pick any color fulfilling the property.
We now argue that~$\phi$ is a proper 3-coloring of~$G$.
Assume towards a contradiction that this is not the case, that is, there exists at least one edge~$uw\in E(G)$ such that both endpoints~$u$ and~$w$ have the same color~$j$.
This implies that both~$u$ and~$w$ are not active during~$C_j$.
But since exactly one of the 5~snapshots corresponding to color block~$C_j$ contains the edge~$uw$, timeline~$\mathcal{T}$ is not covering all temporal edges of~$\mathcal{G}$, a contradiction.
Thus, $\phi$ is a proper 3-coloring of~$G$.
\end{proof}
Next, we extend the ideas of this reduction to \DT for which the reduction is more involved since we also need to deal with isolated temporal vertices.

\begin{theorem}

	\label{DomTimelineMaxDeg1Hardness}
	\DT is NP-hard even if $T=35, k= 3, \ell =6$ and the maximum snapshot degree is one.
\end{theorem}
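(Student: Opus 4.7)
The plan is to adapt the reduction from \Cref{CovTimelineMaxDeg1Hardness}. We again reduce from \prb{3-Coloring} on graphs of maximum degree four and build a temporal graph with three \emph{color blocks} $C_1, C_2, C_3$ separated by two empty blocks long enough that no activity interval of length at most $\ell = 6$ can touch two color blocks. Taking all five blocks to be of length~$7$ yields $T = 35$, and within each color block the Vizing partition of~$E(G)$ into five matchings fills five of the seven snapshots, exactly as in the \CT proof.

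The main new obstacle compared to \CT is that \DT requires every temporal vertex to be dominated, so a vertex $v$ that is isolated in some snapshot must itself be active there. Since $\deg_G(v) \leq 4$, the vertex $v$ is isolated in at least one of the Vizing snapshots of each color block, and so the clean \CT encoding ``$v$ has color $j$ iff $v$ is inactive throughout $C_j$'' no longer works directly. To repair this, I would introduce for each $v \in V(G)$ and each $j \in [3]$ a private \emph{partner} vertex $p_v^j$ that appears only in $C_j$, together with an edge $vp_v^j$ in every snapshot of $C_j$ in which $v$ is otherwise isolated. Because each partner is unique to its pair~$(v, j)$, the maximum snapshot degree remains one. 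The two extra snapshots per color block can carry additional partner edges to serve as forcing gadgets.

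For the $(\Rightarrow)$ direction, given a proper 3-coloring $\phi$, I would activate each $v \in V(G)$ by a single length-$6$ interval covering each of the two color blocks $C_j$ with $j \neq \phi(v)$, and activate each partner $p_v^{\phi(v)}$ by one length-$6$ interval covering $C_{\phi(v)}$. Every $G$-edge $uv$ is then dominated via whichever endpoint has color different from the current block; every partner edge $vp_v^j$ is dominated by $v$ if $j \neq \phi(v)$ and by $p_v^j$ otherwise; no vertex ever uses more than two intervals, which fits within $k = 3$.

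The main obstacle lies in the $(\Leftarrow)$ direction: extracting a proper 3-coloring from a given 3-activity 6-timeline~$\mathcal{T}$. The empty blocks again imply that each interval is confined to one color block, so each vertex $v$ distributes at most three intervals across the three color blocks. Defining $\phi(v)$ as a block in which $v$ is inactive requires ruling out the possibility that $v$ has at least one interval in every block; the partner gadget in the two extra snapshots per block must be calibrated so that a single interval of $v$ cannot simultaneously dominate $v$ and its partner~$p_v^j$ in all relevant snapshots of $C_j$, forcing at least one color block to be entirely vacant for~$v$. The allowance $k = 3$ (rather than~$2$) is precisely what provides the slack needed for this pigeonhole-style argument. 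Once $\phi$ is well-defined, the proof that $\phi$ is a proper 3-coloring proceeds exactly as in \Cref{CovTimelineMaxDeg1Hardness}, using that every $G$-edge appears in every color block.
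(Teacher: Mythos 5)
There is a genuine gap, and it is exactly the difficulty you name but do not resolve: in \DT every isolated temporal vertex must be \emph{active} to be dominated, and your construction creates far too many of them. First, your two empty separator blocks of $7$ snapshots each mean that \emph{every} vertex is isolated in $14$ snapshots. An activity interval of length $\ell = 6$ spans at most $7$ consecutive snapshots, and the only single interval covering all of one empty block is the one that coincides with it exactly; hence each vertex must spend at least $2$ of its $3$ intervals entirely inside the empty blocks, leaving only one interval for the color blocks. But your intended solution requires $v$ to be active in \emph{two} color blocks, so the forward direction already fails (indeed, as written, your timeline leaves $v$ undominated throughout both empty blocks). Second, each partner $p_v^j$ is isolated in the other two color blocks and in both empty blocks, i.e., in at least $28$ snapshots; dominating it there would require at least $\lceil 28/7\rceil = 4$ intervals, so with $k=3$ your instance is always a no-instance. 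The paper's construction avoids both problems: it uses a \emph{non-empty} prefix $H$ (with edges $v_iu_i$ and $u_i'u_i''$ in every snapshot of $H$, so vertices are dominated by neighbors rather than by being active), it has no empty separators at all (separation of intervals between color blocks is instead enforced by the first and last snapshots $L_j, R_j$ of each block together with the length bound), and it adds auxiliary vertices $v_i', u_i, u_i', u_i''$ with a case distinction on the edge sets precisely so that no vertex is ever isolated in any color-block snapshot. Your high-level plan (three color blocks, Vizing partition, color $=$ the block where $v$ is inactive) matches the paper, but the gadgetry you propose to handle isolation does not survive its own domination requirements, and repairing it would essentially force you to rebuild the paper's $H$-part and auxiliary-vertex machinery.
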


\begin{proof}
	We reduce from \prb{$3$-Coloring} on graphs with maximum degree four.
This problem is known to be NP-hard \cite{DBLP:conf/stoc/GareyJS74}.
In \prb{$3$-Coloring} the input is a graph and the question is whether the vertices can be colored with three colors such that no two adjacent vertices have the same color.
\Cref{Figure3ColorabilityToDomTimeline} sketches the following reduction.
We start by giving an intuition.

	\textit{Intuition:} The basic idea is to construct a temporal graph with three parts $C_1, C_2$ and $C_3$ (call them \textit{color blocks}), each consisting of seven snapshots and representing one of the three colors.
The part in which a vertex $v$ is not active corresponds to the color assigned to~$v$.
As each of the three parts will contain every edge of $G$ and only vertices of degree at most one, this ensures that all neighbors are active in the part where $v$ is not active.
Hence, they are assigned a different color.
The basic idea for encoding a given graph $G$ into snapshots of maximum degree one is to find a proper edge coloring and split the edge set with respect to this coloring.
In the following reduction we unfortunately need in addition to $C_1,C_2$ and $C_3$ another part~$H$ consisting of $14$ snapshots and some additional vertices.
This leads to $k = 3$ for the following reasons.
On the one hand we need to ensure that some activity intervals of the vertices coincide exactly with the color blocks.
On the other hand we need to take care of possible isolated vertices in the snapshots.

	\textit{Construction}: Suppose $(G= (V = \{v_1, \dots, v_n\},E))$ is a given instance of \prb{$3$-Coloring} on graphs with maximum degree four.
We construct the following temporal graph~$\mathcal{G}$.
	For each vertex $v_i \in V(G)$ we introduce five vertices.
Formally, we set
	\begin{align*}
		V(\mathcal{G}) := \bigcup_{i = 1}^n  \, \{v_i, v_i', u_i, u_i', u_i''\}.
	\end{align*}
	The vertex $v_i$ represents the vertex $v_i$ from $G$ in some snapshots of the temporal graph.
The vertex $v'_i$ ensures that two activity intervals of $v_i$ coincide exactly with two of the three color blocks and that $v_i$ is not isolated in any of the snapshots of the color blocks.
The vertices $u_i, u'_i$ and $u''_i$ ensure that $v'_i$ is not isolated in any of the color blocks.
However, the introduction of new vertices also leads to an increasing number of activity intervals.
The later introduced part $H$ of our temporal graph forces us to use them.

	The temporal graph $\mathcal{G}$ consists of $35$ snapshots.
It can be seen as a sequence of four parts, namely $H, C_1, C_2$ and $C_3$.
Part $H$ consists of $14$ snapshots and each of~$C_1,C_2,C_3$ consists of seven snapshots.
We call $C_1,C_2$ and $C_3$ \textit{color blocks} as they correspond to the three colors.
In the following we formally define the snapshots.

	The first part of the temporal graph is the part $H$ in which all snapshots have the same set of edges.
We set $$E_1 = \dots = E_{14} := \bigcup_{i=1}^n \, \{v_iu_i, u_i'u_i''\}$$ while $v_i'$ remains isolated in all these snapshots for each $i \in [n]$.

	The three color blocks $C_1, C_2$ and~$C_3$ are very similar to each other.
We formally define $C_1$ and state how $C_2$ and $C_3$ differ from $C_1$.
	The first snapshot (say $L_1$) and last snapshot (say $R_1$) of $C_1$ are equal.
They ensure that some activity intervals coincide exactly with the color block.
The five snapshots between $L_1$ and $R_1$ encode our graph $G$ with snapshots of maximum degree one.
Here we need the fact that our graph $G$ has maximum degree four, because this implies that there exists a proper edge coloring with at most five colors due to Vizing's theorem~\cite{V64}, which can be computed in polynomial time~\cite{DBLP:journals/ipl/MisraG92}.
Let $ E(G) = F_1 \cup \dots \cup F_5$ be the partition of $E(G)$ into five colors of some proper edge coloring.
With this we are now ready to formally define the snapshots of $C_1$.
Recall that the first $14$ snapshots are part of $H$, so the first and last snapshot of $C_1$ are $G_{15}$ and $G_{21}$ respectively.
We define
	\begin{align*}
		E_{15} = E_{21} := \bigcup_{i=1}^n \,  \{v_iv_i', u_i'u_i''\}.
	\end{align*}
	The five snapshots between them encode $G$, ensure that $v_i$ and $v'_i$ are not isolated and contain the edge $u'_iu_i''$.
Formally, we define for each $j \in [5]$ the edge set of~$G_{15+j}$ by
	\begin{align*}
		& E_{15 + j} := F_j \cup \bigcup_{i = 1}^n \, \{u_i'u_i''\} \cup E_{i,j}
	\end{align*}
	where we set $E_{i,j} = \{v_iv_i'\}$ if $v_i$ is an isolated vertex in the graph $(V(G), F_j)$, and~$E_{i,j} = \{v_i'u_i\}$, otherwise.

	As mentioned above, $C_2$ and $C_3$ are very similar to $C_1$.
Both of them consist of seven snapshots, where the first and last snapshot are equal and the five snapshots between them encode $G$.
The color part $C_2$ starts at snapshot $G_{22} $ and ends at snapshot $G_{28}$.
The color part $C_3$ starts at snapshot $G_{29} $ and ends at snapshot $G_{35}$.
The only difference between $C_1$ and the other color blocks is that $u_i$ swaps roles with $u_i'$ in $C_2$ and with $u_i''$ in~$C_3$ respectively.
The first snapshots of $C_2$ and $C_3$ are called $L_2$ and $L_3$ respectively and the last snapshots of $C_2$ and $C_3$ are called $R_2$ and $R_3$ respectively
	Finally, observe that by construction, the maximum degree in each snapshot is at most one.
By setting $k =3$ and $\ell = 6$, we obtain the \DT instance $(\mathcal{G}, k, \ell)$.
	\begin{figure}[t]
		\centering
		\begin{tikzpicture}[scale = 0.6, transform shape,
			V/.style = {circle, draw, fill=black},
			d/.style = {circle, draw, fill=black, inner sep = 0.5pt},
			d2/.style = {circle, draw, fill=black, inner sep = 1.5pt}
			]

						\node[V, label=above:\large $v_i$] (vi1) at (-11, 4)    {};
			\node[V, label=left:\large $v_i'$] (vi'1) at (-12, 3)    {};
			\node[V, label=right:\large $u_i$]  (ui1) at (-10, 3)       {};
			\node[V, label=left:\large $u_i'$] (ui'1) at (-12, 2)    {};
			\node[V, label=right:\large $u_i''$]  (ui''1) at (-10, 2)       {};
			\node[d] at (-11,1.5)   {};
			\node[d] at (-11,1.2)  {};
			\node[d] at (-11,0.9)  {};
			\node (G1-14) at (-11, -1.5) {\Large $G_1 = \dots = G_{14}$};
						\draw (vi1) edge (ui1);
			\draw (ui'1) edge (ui''1);

						\node[V, label=above:\large $v_i$] (vi2) at (-6, 4)    {};
			\node[V, label=left:\large $v_i'$] (vi'2) at (-7, 3)    {};
			\node[V, label=right:\large $u_i$]  (ui2) at (-5, 3)       {};
			\node[V, label=left:\large $u_i'$] (ui'2) at (-7, 2)    {};
			\node[V, label=right:\large $u_i''$]  (ui''2) at (-5, 2)       {};
			\node[d] at (-6,1.5)   {};
			\node[d] at (-6,1.2)  {};
			\node[d] at (-6,0.9)  {};
			\node (L1) at (-6, -1.5) {\Large $G_{15} = L_1$};
						\draw (vi2) edge (vi'2);
			\draw (ui'2) edge (ui''2);

						\draw (-1,3) ellipse (1.4 and 1.65);
			\node at (-1, 3.8) {\Large $(V, F_j)$};
			\node[V, label=above:\large $v_i$] (vi3) at (-1, 2.05)    {};
			\node[V, label=left:\large $v_i'$] (vi'3) at (-2, 1)    {};
			\node[V, label=right:\large $u_i$]  (ui3) at (0, 1)       {};
			\node[V, label=left:\large $u_i'$] (ui'3) at (-2, 0)    {};
			\node[V, label=right:\large $u_i''$]  (ui''3) at (0, 0)       {};
			\node (M1) at (-1, -1.5) {\Large $G_{15 + j}$ for $j \in [5]$};
						\draw[color=blue, line width=1.5pt] (vi3) edge (vi'3);
			\draw[color=red, line width=1.5pt] (vi'3) edge (ui3);
			\draw (ui'3) edge (ui''3);

						\node[V, label=above:\large $v_i$] (vi4) at (4, 4)    {};
			\node[V, label=left:\large $v_i'$] (vi'4) at (3, 3)    {};
			\node[V, label=right:\large $u_i$]  (ui4) at (5, 3)       {};
			\node[V, label=left:\large $u_i'$] (ui'4) at (3, 2)    {};
			\node[V, label=right:\large $u_i''$]  (ui''4) at (5, 2)       {};
			\node[d] at (4,1.5)   {};
			\node[d] at (4,1.2)  {};
			\node[d] at (4,0.9)  {};
			\node (R1) at (4, -1.5) {\Large $G_{21} = R_1$};
						\draw (vi4) edge (vi'4);
			\draw (ui'4) edge (ui''4);

						\node[d] at (8.8,2)   {};
			\node[d] at (9.1,2)  {};
			\node[d] at (9.4,2)  {};

						\node at (9, -1.5) {\Large $\dots, G_{28}, \dots ,G_{35}$};

						\draw[dashed] (-8.5, -1) -- (-8.5, 5);
			\draw[dashed] (-3.5, -1) -- (-3.5, 5);
			\draw[dashed] (1.5, -1) -- (1.5, 5);
			\draw[dashed] (6.5, -1) -- (6.5, 5);

						\draw [decorate,decoration={brace,amplitude=10pt,mirror,raise=4pt}, line width=0.5pt] (-13,-2) -- (-9,-2);
			\node at (-11, -3.5) {\LARGE $H$};
			\draw [decorate,decoration={brace,amplitude=10pt,mirror,raise=4pt}, line width=0.5pt] (-8,-2) -- (6,-2);
			\node at (-1, -3.5) {\LARGE $C_1$};
			\draw [decorate,decoration={brace,amplitude=10pt,mirror,raise=4pt}, line width=0.5pt] (7,-2) -- (8.9,-2);
			\node at (7.95, -3.5) {\LARGE $C_2$};
			\draw [decorate,decoration={brace,amplitude=10pt,mirror,raise=4pt}, line width=0.5pt] (9.1,-2) -- (11,-2);
			\node at (10.05, -3.5) {\LARGE $C_3$};
		\end{tikzpicture}
		\caption{A sketch of the reduction from~\Cref{DomTimelineMaxDeg1Hardness}.
Let $G$ be a graph with maximum degree at most four, where $E(G) = F_1 \cup \dots \cup F_5$ is a partition of the edges into the colors.
The blue edge in the figure exists in $G_{15+j}$ if and only if $v_i$ is isolated in the graph~$(V, F_j)$.
Otherwise, the red edge exists.
The parts $C_2$ and $C_3$ are constructed in a similar way as $C_1$, but $ u_i$ swaps roles with $u_i'$ and $u_i''$ respectively.}
		\label{Figure3ColorabilityToDomTimeline}
	\end{figure}
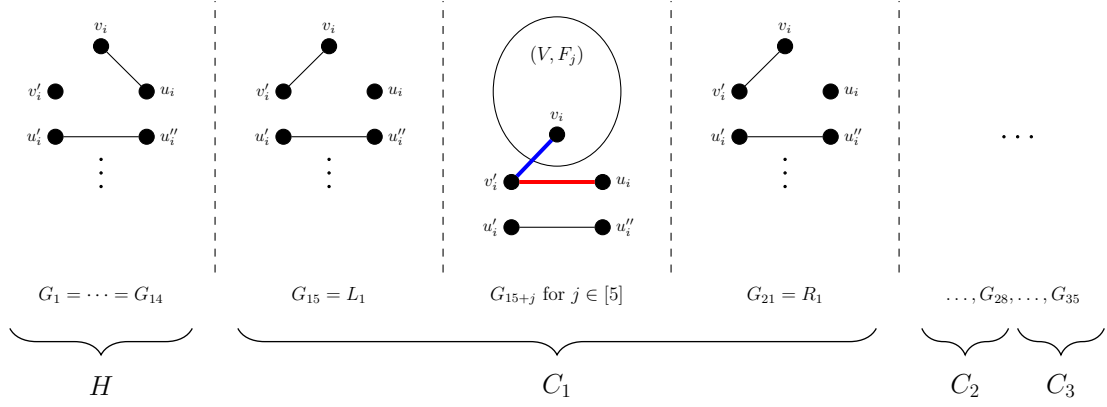

	\textit{Correctness:} We show that $G$ is $3$-colorable if and only if there is a $3$-activity $6$-timeline~$\mathcal{T}$ dominating all temporal vertices of~$\mathcal{G}$.
From now on we call the first part of $H$, consisting of the first seven snapshots, $H_1$ and we call the second part, consisting of the other seven snapshots, $H_2$.
The reason for this is simply because the activity intervals have length six (so they intersect seven snapshots).
In the following we say a vertex~$v$ in active during~$H_i$ ($i\in[2]$) or during~$C_j$ ($j\in[3]$) to indicate that an activity interval of~$v$ starts at the first time step and ends and the last time step of the corresponding part.

	($\Rightarrow$) Suppose $\phi : V(G) \rightarrow [3]$ is a proper $3$-coloring of $G$.
We construct an activity timeline~$\mathcal{T}$ dominating all temporal vertices of~$\mathcal{G}$.
For each $i \in [n]$, the vertex $v_i$ is active in the two color blocks, which do not correspond to the assigned color of $v_i$ and the vertex $v'_i$ is active in the color block, which corresponds to the assigned color of $v_i$.
All the remaining activity intervals of $v_i, v'_i, u_i, u'_i$ and $u''_i$ ensure that all of their temporal vertices are dominated.
In particular, these remaining activity intervals do not depend on the color of $v_i$.
Informally, our activity timeline $\mathcal{T}$ is constructed in the following way.
For each $i \in [n]$ with $\phi(v_i) = j$, let
	\begin{align*}
		& v_i \text{ be active in } H_1 \text{ and in } C_r \text{ for } r \neq j, \\
		& v_i' \text{ be active in } H_1, H_2, C_j, \\
		& u_i \text{ be active in } H_2, C_1, C_2, \\
		& u_i' \text{ be active in } H_1, C_2, C_3, \\
		& u_i'' \text{ be active in } H_2, C_1, C_3.
	\end{align*}
	Formally, we define for each color $j \in [3]$ the set $\mathcal{T}_j$ by
	\begin{align*}
		& \mathcal{T}_1 := \bigcup_{\substack{v_i \in V: \\ \phi(v_i) = 1}} \{ (v_i, 22, 28), (v_i, 29, 35), (v_i', 15, 21 ) \}, \\
		& \mathcal{T}_2 := \bigcup_{\substack{v_i \in V: \\ \phi(v_i) = 2}} \{ (v_i, 15, 21), (v_i, 29, 35), (v_i', 22, 28 ) \}, \\
		& \mathcal{T}_3 := \bigcup_{\substack{v_i \in V: \\ \phi(v_i) = 3}} \{ (v_i, 15, 21), (v_i, 22, 28), (v_i', 29, 35 ) \}
	\end{align*}
	and we define for each $i \in [n]$ the set
	\begin{align*}
		\mathcal{H}_i := \{ & (v_i, 1, 7), (v_i',1,7), (v_i', 8, 14), (u_i, 8,14), (u_i, 15,21), (u_i, 22, 28) \\
		& (u_i',1,7), (u_i',22,28), (u_i ', 29, 35), (u_i'', 8,14), (u_i'',15,21), (u_i'', 29,35)\}.
	\end{align*}
	This allows us to define $$\mathcal{T} := \mathcal{T}_1 \, \cup \, \mathcal{T}_2 \, \cup \, \mathcal{T}_3\,  \cup\,  \bigcup_{i = 1}^n \, \mathcal{H}_i.$$ Note that by construction $\mathcal{T}$ is a $3$-activity $6$-timeline and it remains to show that~$\mathcal{T}$ dominates all temporal vertices in $\mathcal{G}$.

	First, consider the part $H$ (the first $14$ snapshots).
The four vertices $v_i,u_i,u'_i$ and~$u''_i$ are dominated in every snapshot of $H$, because the edges~$v_iu_i, u'_iu''_i$ exist in every snapshot of $H$ and $\{(v_i,1,7), (u_i,8,14), (u'_i,1,7), (u''_i,8,14)\}\subseteq \mathcal{T}$.
The remaining vertex $v'_i$ is dominated in these snapshots, because it is active in all snapshots of them.

	Next, we consider the color part $C_1$.
The two vertices $u'_i$ and $u''_i$ are dominated, because the edge $u'_iu''_i$ exists in every snapshot of $C_1$ and $(u'_i,15,21) \in \mathcal{T}$.
The vertex $u_i$ is dominated, because it is active in all of $C_1$.
For $v_i$ there are two cases.
If $\alpha(v_i) = 1$, then $v_i$ is not active in $C_1$ and $v'_i$ is active in all of $C_1$.
Consequently,~$v'_i$ is dominated in $C_1$ and~$v_i$ is dominated in each snapshot of $C_1$ either by $v'_i$ or by a neighbor with respect to~$G$.
For this note that each neighbor $u$ of $v_i$ in $G$ is active in $C_1$, since $\alpha (u) \neq 1$.
If $\alpha (v_i) \neq 1$, then $v_i$ is active in $C_1$ and therefore dominated.
In this case the vertex $v'_i$ is not active in $C_1$ and is dominated in each snapshot of~$C_1$ either by $v_i$ or by $u_i$.
Therefore all temporal vertices in the color block $C_1$ are dominated by $\mathcal{T}$.
The same reasoning holds for the color blocks $C_2$ and $C_3$ by swapping the roles of $u_i$ with $u'_i$ and $u''_i$ respectively.

	($\Leftarrow$) For the other direction suppose $\mathcal{T}$ is a $3$-activity $6$-timeline dominating all temporal vertices of~$\mathcal{G}$.
We want to construct a valid $3$-coloring for $G$.
The idea is to assign $v_i$ the color $j$ for which $v_i$ is not active in $C_j$.
For this purpose, we need to show that two of the three activity intervals of $v_i$ correspond to exactly two of the color blocks.
We split the proof into smaller parts for more clarity.

	Since $v'_i$  is isolated in the first fourteen snapshots, we observe the following.
	\begin{observationProof}
		\label{Observation1}
		At least two activity intervals of $v_i'$ are completely contained in the part~$H$.
	\end{observationProof}
	Using~\Cref{Observation1}, we can prove the next claim.
	\begin{claim}
		\label{Claim2}
		The vertex $v_i$ has exactly one activity interval, which intersects part $H$.
This activity interval either corresponds to the time steps of~$H_1$ or~$H_2$.
	\end{claim}
	\begin{claimproof}
		First, note that in each first snapshot $L_j$ and last snapshot $R_j$ of a color block~$C_j$ ($j \in [3]$), one of the two vertices $v_i$ and $v_i'$ needs to be active.
By~\Cref{Observation1}, the vertex $v_i'$ has at most one activity interval, which intersects the color blocks.
Further, observe that each activity interval can intersect at most one snapshot $L_j$ and one snapshot $R_j$ at the same time (and this only if the activity interval does not intersect $H$), because the length of an activity interval is at most six.
This implies that $v'_i$ is active in at most one $L_j$ and in at most $R_j$.
Moreover, it follows that $v_i$ has at most one activity interval which intersects $H$, because the activity intervals from $v_i$ need to intersect in sum at least two first and two last snapshots of the color blocks and any activity interval intersecting $H$ can only intersect $L_1$ and no $R_j$.

		Now we show that at most one activity interval of $u_i$ intersects $H$.
For contradiction, assume that two activity intervals of $u_i$ intersect part $H$.
This implies that the remaining third activity interval of $u_i$ has to intersect~$R_1$, since $u_i$ is isolated in this snapshot (and no activity interval can intersect~$H$ and~$R_1$ at the same time).
Consequently, $u_i$ is not active in both~$R_2$ and $R_3$.
Since~$u_i$ is adjacent to $u_i''$ and~$u_i'$ in~$R_2$ and $R_3$ respectively, it follows that both~$u_i'$ and~$u_i''$ need to be active in both~$R_2$ and $R_3$ (once for dominating $u_i$ and once they are isolated).
No single activity interval can intersect both $R_2$ and $R_3$ and therefore at most one activity interval from each~$u_i'$ and~$u_i''$ can intersect the parts $H$ and $C_1$.
As $u'_iu''_i$ is an isolated edge in all snapshots of $H$ and $C_1$, we obtain a contradiction as not every temporal vertex of $u_i'$ and $u_i''$ can be dominated in~$H$ and $C_1$.

		Hence, we conclude that $u_i$ and $v_i$ have at most one activity interval which intersects $H$.
This directly implies that~$v_i$ is active either in exactly $H_1$ or $H_2$ (recall that $v_iu_i$ is an isolated edge in all snapshots of $H$).
	\end{claimproof}

	From~\Cref{Observation1} and~\Cref{Claim2} we conclude that there are in total three activity intervals of~$v_i$ and $v_i'$ (two of them are from $v_i$), which intersect the color blocks.
As mentioned above, in each first snapshot $L_j$ and in each last snapshot $R_j$ of a color block $C_j$, one of the vertices $v_i$ or $v'_i$ needs to be active.
Also recall that any activity interval intersects at most one snapshot $L_j$ and at most one snapshot~$R_j$ at the same time.
Consequently, each of these three remaining activity intervals needs to intersect a different $L_j$ and different $R_j$ than the other ones.
This directly implies that the three remaining activity intervals of $v_i$ and $v_i'$ occupy exactly the three color blocks $C_1, C_2$ and $C_3$.

	We can define a $3$-coloring of $G$ by assigning $v_i$ the color $j$ if and only if $v_i$ is not active in the color block~$C_j$.
If $v_i$ is not active in $C_j$, then by construction, each $w \in N_G (v_i)$ has to be active in~$C_j$ (recall that $v_iw$ is an edge in some of the snapshots of $C_j$ and the maximum snapshot degree is one, so $v_i$ has to be dominated by $w$ in some snapshot of the color block $C_j$) and therefore is assigned a different color than~$v_i$.
\end{proof}

\section{The Influence of Membership-Width Based Parameters}
\label{sec-membership}

Now, we investigate the membership-width based parameters~$\vimw$ and~$\imw$.
More precisely, we study the parameter combinations~$\vimw +\,k+\ell$ and~$\imw +\,k+\ell$. These combinations are motivated by the fact that \CT and \DT are W[1]-hard with respect to~$n+\ell$; for \CT this follows from previous work~\cite{DBLP:journals/mst/FroeseKZ24} and for \DT, we will show this in \Cref{sec-input-paras}.

\subsection{The Parameter Vertex-Interval-Membership-Width}

In this section, we provide FPT-algorithms with respect to~$\vimw + \,k+\ell$ for \PCT and \PDT.
Moreover, we show that~$\vimw$ can be replaced by smaller parameters~$\vimw[x]$ for all problems except \PDT (the precise value of~$x$ differs for \PCT and \DT).
Simultaneously, these algorithms also are XP-algorithms for \PCT and \PDT parameterized by~$\vimw$ alone.
Initially, we focus on \PCTT and we show that \PCT admits an FPT-algorithm for~$\vimw + \,k+\ell$.

\begin{theorem}
	
	\label{CovTimeLine_vimw+k+l}
	\PCT is solvable in $\big((k+1)(\ell + 2)\big)^{2\cdot \vimw} \cdot (n+T)^{\Oh(1)}$~time.
\end{theorem}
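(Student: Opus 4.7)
The plan is to solve \PCT{} by a dynamic program sweeping over the vertex-interval-membership sequence $(F_1, \ldots, F_T)$. The crucial structural fact is that each vertex $v$ has a contiguous lifetime $[p_v, q_v]$, and $v \in F_i$ iff $p_v \le i \le q_v$. Any activity interval of $v$ lying (even partly) outside $[p_v, q_v]$ contributes nothing to the cover, since $v$ is isolated there; hence we may assume without loss of generality that every chosen interval of $v$ is contained in its lifetime. In particular, if $v \in F_i$ but $v \notin F_{i+1}$, then no interval of $v$ extends past time $i$.

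At each time $i$, the DP state records, for every $v \in F_i$, a pair $(a_v, c_v)$, where $a_v \in \{0, 1, \ldots, k\}$ counts the activity intervals of $v$ that have already ended strictly before time $i$, and $c_v \in \{\bot, 0, 1, \ldots, \ell\}$ describes a currently active interval: $c_v = \bot$ means no interval is active at time $i$, and $c_v = r$ means $v$'s active interval ends at time $i+r$. We keep only legal states, i.e., $a_v + [c_v \ne \bot] \le k$, giving at most $(k+1)(\ell+2)$ states per vertex and at most $\big((k+1)(\ell+2)\big)^{\vimw}$ entries at time $i$. The table entry $D_i[\sigma]$ stores the maximum number of temporal edges in $G_1, \ldots, G_i$ coverable by a timeline that is consistent with $\sigma$ at time $i$.

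Transitions from $\sigma$ at time $i$ to $\sigma'$ at time $i+1$ factor through $F_i \cup F_{i+1}$ and are checked vertex by vertex. For $v \in F_i \cap F_{i+1}$: if $c_v = r > 0$ the interval continues, forcing $c_v' = r-1$ and $a_v' = a_v$; if $c_v = 0$ the interval ends, forcing $a_v' = a_v + 1$ and allowing $c_v' \in \{\bot\} \cup \{0, \ldots, \ell\}$ as the optional start of a new interval; if $c_v = \bot$ then $a_v' = a_v$ and $c_v'$ is either $\bot$ or the start of a new interval, always subject to $a_v' + [c_v' \ne \bot] \le k$. Vertices in $F_{i+1} \setminus F_i$ enter fresh with $a_v' = 0$, and vertices in $F_i \setminus F_{i+1}$ are discarded after confirming $c_v \in \{\bot, 0\}$, which is ensured automatically by the lifetime argument. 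For each compatible pair $(\sigma, \sigma')$, the profit added is the number of edges $uv \in E_{i+1}$ with $c_u' \ne \bot$ or $c_v' \ne \bot$ (and both $u, v \in F_{i+1}$, since any edge-endpoint in $G_{i+1}$ is non-isolated there), computable in polynomial time per snapshot.

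Since $|F_i \cup F_{i+1}| \le 2\cdot \vimw$, enumerating all compatible state pairs at each time step costs at most $\big((k+1)(\ell+2)\big)^{2\cdot \vimw}$ and, summed over the $T$ steps together with initial conditions (at time~$1$ only states with $a_v = 0$ are admissible) and the final check that some entry reaches value $\ge t$, yields the claimed running time $\big((k+1)(\ell+2)\big)^{2\cdot\vimw} \cdot (n+T)^{\Oh(1)}$. Rather than a conceptual obstacle, I expect the main care to lie in the bookkeeping: justifying the lifetime-restriction so that transitions depend only on $F_i \cup F_{i+1}$, and verifying that the local vertex-by-vertex compatibility rules exactly characterise the sequences of states arising from a valid $k$-activity $\ell$-timeline.
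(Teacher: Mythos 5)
Your proposal is correct and follows essentially the same route as the paper's proof: a dynamic program over the vertex-interval-membership sequence whose state records, for each vertex of the current bag, which of its at most $k$ intervals it is in and where within that interval it is (your $(a_v,c_v)$ is just a reparametrisation of the paper's $\curr(v),\pos(v)$), with vertex-wise local compatibility checks between consecutive bags and the per-snapshot covered-edge count as profit. The only cosmetic differences are that the paper normalises intervals to have full length $\ell$ and restricts the final table lookup via a preprocessing assumption, whereas you allow shorter intervals and scan all final states; both yield the same $\bigl((k+1)(\ell+2)\bigr)^{2\vimw}\cdot(n+T)^{\Oh(1)}$ bound.
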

\begin{proof}
	Let $(\mathcal{G} = (G_1, \dots, G_T),k, \ell, t)$ be an instance of \PCT.
Further, suppose that $T \geq k \cdot (\ell + 1)+1$ (otherwise, it is a trivial instance) and that the underlying graph $G_\downarrow$ contains no isolated vertex~$v$.
Otherwise, since~$v$ does not cover any temporal edge, $v$ can be safely removed and consequently $t$ remains unchanged.
Moreover, we assume that each vertex~$v$ is contained in at least $k\cdot(\ell+1)+1$~bags (recall that these are consecutive) because otherwise we can simply greedily pick $k$ activity intervals of~$v$ which contain all snapshots where~$v$ is non-isolated and thus we cover all temporal edges incident with~$v$.
Thus, it is safe to remove~$v$ from~$\mathcal{G}$ and reduce~$t$ by the number of temporal edges incident with~$v$ over all snapshots.

	The idea is to use dynamic programming over the lifetime of the temporal graph.
	Each table entry corresponds to the maximal number of covered temporal edges where a specific set of vertices in the bag is active.
In other words, the dynamic program keeps track of the activity of the vertices which are contained in the bag $F_i$ of the currently considered time step $i \in [T]$.
While iterating over the lifetime of the temporal graph, we need to ensure that the activities at a time step are compatible with the activities at the neighboring time steps.
Moreover, for a vertex~$v$ let~$i$ and~$j$ be the index of the smallest and largest snapshot where~$v$ is incident to an edge, respectively.
By definition, $v\in F_x$ for any~$i\le x\le j$.
Furthermore, since~$v$ is isolated in all snapshots~$G_x$ where~$x<i$ or~$x>j$, it is safe to assume that all $k$~activity intervals of~$v$ are used when~$v$ is contained in the bags~$F_x$.

	Recall that the vertex-interval-membership sequence can be computed in polynomial time. Since we assumed that the underlying graph contains no isolated vertex, it follows that each vertex is contained in at least one bag of the sequence.

	\emph{Table Definition:}
	We denote $\overleftarrow{F}_i := F_1 \cup \dots \cup F_i$ and $\overline{F}_i := \overleftarrow{F}_i \setminus F_i$.
In other words, $\overline{F}_i$ is the set of vertices which are isolated in all snapshots~$G_i, \dots, G_T$ since we assumed that~$G_\downarrow$ contains no isolated vertices.
Moreover, for $i \in [T]$ we define functions~${\curr : F_i \rightarrow [0,k],}$ and $\pos : F_i \rightarrow [0, \ell + 1]$.
The functions~$\curr$ and~$\pos$ indicate for each vertex~$v\in F_i$ whether an activity interval of~$v$ is active during snapshot~$G_i$.
More precisely, the function value $\curr(v)$ determines the number of the current activity interval of $v$ and the function value~$\pos(v)$ determines the position in this activity interval.
Here, $\curr (v) = 0$ means that the current time step is before the first activity interval and $\pos(v) = 0$ means that $v$ is currently not active, that is, the $\curr(v)$-th activity interval of $v$ is already over and the next one has not started yet.
Now the entry $\DP[i, \curr, \pos]$ contains the maximum number of covered temporal edges in the first~$i$ snapshots~$G_1, \dots , G_i$, while $i$ is at the~$\pos (v)$-th position of the $\curr(v)$-th activity interval of $v \in F_i$.

The formal definition of the table now is
	\begin{align*}
		\DP[i, \curr, \pos] \, \hat{=} \, \max_{\mathcal{T}}  \, & |\{(e, j) \in E_\downarrow \times [i] : (e, j) \text{ is covered by } \mathcal{T}\}|,
	\end{align*}
	where the maximum is taken over all $k$-activity $\ell$-timelines $$\mathcal{T} \subseteq (\overline{F}_i \times [i-1] \times [T]) \cup (F_i \times [i] \times [T]),$$ such that for each $v \in F_i$
	\begin{enumerate}[label=(\roman*)]
		\item there are at most $\curr(v)$ activity intervals of $v$ in $\mathcal{T}$,
		\item $\pos (v) \leq i$, and
		\item $(v, i - \pos(v) + 1, b) \in \mathcal{T}$ for~$b=\min(i - \pos(v) + 1 + \ell, T)$ if~$\pos (v) > 0$.
	\end{enumerate}

	We need Conditions~(i)-(iii) for the following reasons:
	Condition~(i) ensures that each vertex has at most~$k$ activity intervals.
Moreover, condition (ii) ensures that no activity interval starts before time step one.
Finally, condition (iii) ensures that each activity interval has length~$\ell$, except if the remaining number of snapshots is too small.

\emph{Initialization:}
	For all $\curr : F_1 \rightarrow [0,k]$ and $\pos : F_1 \rightarrow [0, 1]$, for which $\curr(v) = 0$ implies $\pos (v) = 0$, the table is initialized by

	$$\DP [1, \curr, \pos] = |E_{G_1}(\{v \in F_1: \pos(v) = 1\}, F_1)|.$$

	Note that $\overline{F}_1 = \emptyset$ and therefore only activity timelines $\mathcal{T} \subseteq F_1 \times [1] \times [T]$ are considered in the definition of $\DP [1, \curr, \pos]$.
The initialization is correct, since the table entry~$\DP [1, \curr, \pos]$ contains the number of temporal edges which are covered by the active vertices from $F_1$ in~$G_1$.
Observe that the other endpoints of the covered edges are also contained in $F_1$ by definition.

	\emph{Recurrence:}
	The remaining table entries are computed recursively.
Intuitively, $\DP[i, \curr, \pos]$ is computed by trying all activity interval choices for the vertices of $F_{i-1}$, such that there is no conflict with $\curr$ and $\pos$.
	Formally, we set
	\begin{align*}
		\DP[i, \curr, \pos] = \max_{\curr', \pos'} \, &\DP[i-1, \curr', \pos']
		 + |E_{G_i}(\{v \in F_i: \pos(v) \geq 1\},F_i) |
	\end{align*}

	where the maximum is taken over all $\curr' : F_{i-1} \rightarrow [0,k]$ and $\pos' : F_{i-1} \rightarrow [0, \ell + 1]$ which are \emph{compatible} with $\curr$ and $\pos$.
Informally, $\curr'$ and $\pos'$ are \emph{compatible} with $\curr$ and $\pos$ if they provide a correct extension of the activities of $F_{i-1} \cap F_i$ at time step~$i$ to time step~$i-1$.
Formally, four~conditions need to be fulfilled:
First, if an interval of~$v$ is already active at time step~$i$, that is, if~$\pos(v)\ge 2$, then an activity interval of~$v$ is already active during time step~$i-1$.
This can be formalized as follows.
\begin{equation}
  \label{eq:cond1}
          \text{If } \pos(v) \geq 2\text{, then } \curr' (v) \leq \curr(v) \text{ and } \pos' (v) = \pos (v) - 1.
\end{equation}
Second, if a new activity interval of~$v$ starts at time step~$i$ either~$v$ was inactive at time step~$i-1$ or the previous activity interval of~$v$ ended at time step~$i-1$.
This can be formalized as follows.
\begin{equation}
  \label{eq:cond2}
          \text{If } \curr(v) > 1 \text{ and } \pos (v) = 1\text{, then } \curr' (v) \leq \curr (v) - 1 \text{ and } \pos' (v) \in \{\min (i-1, \ell + 1),0\}.
\end{equation}
Third, if the first activity interval of~$v$ starts at time step~$i$, then in the previous time step~$i-1$ vertex~$v$ cannot be active. This can be formalized as follows.
\begin{equation}
  \label{eq:cond3}
\text{If } \curr(v) = 1 \text{ and } \pos (v) = 1\text{, then } \curr' (v) = 0 \text{ and } \pos' (v) = 0.
\end{equation}
Finally, if~$v$ is not active during time step~$i$, either~$v$ is also not active during time step~$i-1$ or an activity interval of~$v$ ended at time step~$i-1$.
This can be formalized as follows.
\begin{equation}
  \label{eq:cond4}
  \text{If } \curr(v) \geq 1 \text{ and } \pos (v) = 0\text{, then } \curr' (v) \leq \curr (v) \text{ and } \pos' (v) \in \{ \min (i-1, \ell + 1),0\} .
\end{equation}

	The minimum in Conditions~\ref{eq:cond2} and~\ref{eq:cond4} ensures that the activity intervals do not have a starting point smaller than one.

	\emph{Correctness:}
	We show the correctness of the computation by proving inequalities in both directions.

	($\leq$) Suppose the maximum in the definition of the table entry $\DP[i, \curr, \pos]$ is attained for~$\mathcal{T}$ and assume that no activity intervals from $\mathcal{T}$ overlap.
We define $\mathcal{T}'$ by taking all activity intervals from $\mathcal{T}$ which are relevant for a table entry of $i-1$.
Formally, we define
	\begin{align*}
		\mathcal{T}' := & \{(v,a,b) \in \mathcal{T} : a \leq i-2, v \in \overline{F}_{i-1}\} \\
			& \cup \{(v,a,b) \in \mathcal{T} : a \leq i - 1, v \in F_{i-1}\},
		\end{align*}
		and, in accordance with $\mathcal{T}'$, functions $\curr' : F_{i-1} \rightarrow [0,k], \pos' : F_{i-1} \rightarrow [0, \ell + 1]$ by
		\begin{align*}
			\curr'(v) &:= |\{(v,a,b) \in \mathcal{T}' : a,b \in [T]\}| \\
			\pos'(v) &:=
			\begin{cases}
				i-a & \text{if } (v,a,b) \in \mathcal{T}'  \text{ with } a \leq i \leq b\\
				0 & \text{otherwise.}
			\end{cases}
		\end{align*}
	Note that $\mathcal{T}' \subseteq \mathcal{T}$.
We show that $\curr'$ and $\pos'$ satisfy Conditions (1)--(4).
This means that the table entry $\DP[i-1, \curr', \pos']$ is considered in the maximum on the right side of the recursive computation.
By definition of $\curr'$ and $\pos'$ it is clear that $\mathcal{T}'$ is then considered in the definition of the table entry $\DP[i-1, \curr', \pos']$.
Now let $v \in F_{i-1} \cap F_i$.
	\begin{enumerate}[label=(\arabic*)]
		\item If $\pos(v) \geq 2$, then $(v,a,b) \in \mathcal{T}$ for some $a \leq i-1, b \in [T]$ and by definition of~$\mathcal{T}'$, each such activity interval of $v$ in $\mathcal{T}$ is also contained in $\mathcal{T}'$.
Consequently, we have $\curr'(v) \leq \curr(v)$ and $\pos'(v) = \pos (v) - 1$.
		\item If $\curr(v) > 1$ and $\pos (v) = 1$, then $(v,i,b) \in \mathcal{T} \setminus \mathcal{T}'$ for some $b \in [T]$.
So the activity timeline~$\mathcal{T}'$ contains at most $\curr(v)-1$ activity intervals of $v$ and therefore $\curr'(v) \leq \curr(v) - 1$.
Now either $(v,a,i-1) \in \mathcal{T}'$ for some $a \in [i-1]$ or $v$ is not active in $G_{i-1}$ with respect to the activity intervals from $\mathcal{T}'$.
This means that either $\pos' (v) = \min (i-1, \ell + 1)$ or $\pos' (v) = 0$.
		\item If $\curr(v) = 1$ and $\pos(v) =1$, then $(v,i,b) \in \mathcal{T} \setminus \mathcal{T}'$ for some $b \in [T]$.
Consequently, $\mathcal{T}'$~contains no activity interval of $v$ and therefore $\curr' (v) = \pos' (v) = 0$.
		\item If $\curr (v) \geq 1$ and $\pos (v) = 0$, then either $(v,a,i-1) \in \mathcal{T}'$ for some $a \in [i-1]$ or $v$ is not active in $G_{i-1}$ with respect to the activity intervals from $\mathcal{T}'$.
So by definition $\curr' (v) \leq \curr(v)$ and either $\pos' (v) = \min (i-1, \ell + 1)$ or $\pos' (v) = 0$.
	\end{enumerate}
	Now consider the set $\mathcal{A}$ of temporal edges until time step~$i$ that are covered by $\mathcal{T}$ and the set $\mathcal{A'}$ of temporal edges until time step~$i-1$ that are covered by $\mathcal{T'}$.
	        There are two cases for a temporal edge~$(vw,j)\in \mathcal{A} \setminus \mathcal{A}'$.

\emph{Case 1:  $j = i$.} Since the temporal edge~$(vw,j)$ is covered by~$\mathcal{T}$ we can conclude that~$vw \in E_{G_i}(\{v \in F_i: \pos(v) \geq 1\}, F_i)$.

\emph{Case~2: $j < i$.} We have $(vw,j)\in \binom{\overleftarrow{F}_{i-1}}{2}  \times [i-1]$.
This implies the existence of an activity interval~$(v,a,b)$ or~$(w,a,b)$ in~$\mathcal{T}$ with~$a \leq i-1$.
However, then the activity interval is by definition contained in~$\mathcal{T}'$ and consequently~$(vw,j) \in \mathcal{A}'$.

Hence, any temporal edge in $\mathcal{A} \setminus \mathcal{A}'$ is contained in~$E_{G_i}(\{v \in F_i: \pos(v) \geq 1\}, F_i)$. We thus have the desired inequality 
\begin{align*}
  \DP [i,\curr,\pos] = |\mathcal{A}| & \leq |\mathcal{A}'| + |E_{G_i}(\{v \in F_i: \pos(v) \geq 1\},F_i) |\\ & \leq  \DP[i-1, \curr', \pos'] + |E_{G_i}(\{v \in F_i: \pos(v) \geq 1\},F_i) |.
\end{align*}

	($\geq$) Suppose the maximum on the right side of the computation is attained for~$\curr',\pos'$ and the maximum in the definition of $\DP[i-1, \curr', \pos']$ is attained for $\mathcal{T}'$.
We define $\mathcal{T}$ by extending $\mathcal{T}'$ in the following way:
For $v \in F_{i-1} \cap F_i$ the activity interval $(v,i,\min(i+ \ell, T))$ is added if and only if $\pos (v) = 1$. These new intervals together with the already active ones ($\pos(v) \geq 2$) cover all edges of~$E_{G_i}(\{v \in F_i: \pos(v) \geq 1\},F_i)$. Hence, the total number of edges covered by $\mathcal{T}$ is $\DP[i-1, \curr', \pos'] + |E_{G_i}(\{v \in F_i: \pos(v) \geq 1\},F_i)|$.

Because $\mathcal{T}$ is considered in the definition of $\DP[i, \curr, \pos]$, we have $$\DP[i, \curr, \pos] \geq \DP[i-1, \curr', \pos'] + |E_{G_i}(\{v \in F_i: \pos(v) \geq 1\},F_i)|.$$
Hence, the desired inequality holds.

	\emph{Result:}
	Finally, we return yes if and only if $\DP[T, \curr, \pos] \geq t$ for $\curr : F_T \rightarrow \{k\}$ and some function $\pos : F_T \rightarrow \{0, \ell + 1\}$.
The restriction to these functions is correct, since we can assume without loss of generality that a vertex is active in $G_T$ if and only if its~$k$-th activity interval ends at time step $T$ (recall that we assume that each vertex is in at least $k(\ell+1)+1$~bags).
Consequently, all $k$-activity $\ell$-timelines of $\mathcal{G}$ are considered in the definition of the table entry $\DP[T, \curr, \pos]$ and all covered temporal edges of $\mathcal{G}$ are counted in the respective maximum of the definition.

	\emph{Running Time:}
	For each time step $i \in [T]$, there are $\big((k+1)(\ell + 2)\big)^{\vimw}$ choices for the functions $\curr : F_i \rightarrow [0,k]$ and $\pos : F_i \rightarrow [0, \ell + 1]$.
This upper-bounds the size of our dynamic programming table by $T \cdot \big((k+1)(\ell + 2)\big)^{\vimw}$.
In each recursive computation of a table entry, there are also $\Oh( \big((k+1)(\ell + 2)\big)^{\vimw})$ choices for $\curr'$ and $\pos'$.
The remaining summands of the recursive formula can be computed in~$\Oh(n)$ time.
Hence, the overall running time can be bounded by~$\big((k+1)(\ell + 2)\big)^{2\cdot \vimw} \cdot (n+T)^{\Oh(1)}$.
\end{proof}

Now, we show that for \PCT the parameter~$\vimw=\vimw[1]$ can be replaced by an even smaller parameter,~$\vimw[k(\ell+1)+1]$.
The algorithm for this parameter has two steps: First it applies preprocessing step that handles the large bags. Then, it invokes the algorithm of \Cref{CovTimeLine_vimw+k+l} for~$\vimw=\vimw[1]$.

\begin{proposition}
	\label{CovTimeLine_vimw+k+l2}
	\PCT is solvable in $\big((k+1)(\ell + 2)\big)^{4\cdot \vimw[x]} \cdot (n+T)^{\Oh(1)}$~time where~$x=k(\ell+1)+1$.
\end{proposition}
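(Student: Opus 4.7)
The plan is to use the algorithm of \Cref{CovTimeLine_vimw+k+l} as a black box after a preprocessing step that handles the bags whose size exceeds $\vimw[x]$. First, I would compute the vertex-interval-membership sequence and identify the indices $i_1 < \ldots < i_{x-1}$ of its $x - 1 = k(\ell+1)$ largest bags; these are the only indices $i$ with $|F_i| > \vimw[x]$.

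The preprocessing then branches on a compressed description of the activity state at each large bag. Specifically, at each $F_{i_j}$ I would guess a $(\curr, \pos)$ configuration restricted to the \emph{boundary} $B_j := (F_{i_j} \cap F_{i_j-1}) \cup (F_{i_j} \cap F_{i_j+1})$. When both neighbors are small bags---the generic case, since there are only $k(\ell+1)$ large bags overall---$|B_j| \leq 2\vimw[x]$. The vertices of $F_{i_j}$ outside $B_j$ have their entire lifetime contained in a single large bag and so their activity can be decided locally within the branching; the structural bound that each vertex has at most $k(\ell+1)$ active time steps in total, combined with a pigeonhole argument over the $x = k(\ell+1)+1$ largest bags, keeps the number of such local configurations within $((k+1)(\ell+2))^{\Oh(\vimw[x])}$. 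After fixing a branch, the residual instance has effective width $\vimw[x]$ at every time step, and a single invocation of \Cref{CovTimeLine_vimw+k+l} handles it in $((k+1)(\ell+2))^{2\vimw[x]} \cdot (n+T)^{\Oh(1)}$ time. Multiplying the branching factor (at most $((k+1)(\ell+2))^{2\vimw[x]}$) by this DP cost then gives the claimed bound of $((k+1)(\ell+2))^{4\vimw[x]} \cdot (n+T)^{\Oh(1)}$.

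The main obstacle will be handling stretches of consecutive large bags, where neither side offers a small neighbor to bound the boundary directly. Here I plan to process the whole stretch jointly, using that any such stretch has length at most $k(\ell+1)=x-1$ (since this bounds the total number of large bags) and that the DP state entering and exiting the stretch is supported on the two adjacent small bags, each of size at most $\vimw[x]$. Inside the stretch, only vertices whose activity interval is fully contained in the stretch generate additional ``internal'' branching; the bound of $k$ intervals of length at most $\ell$ per vertex, together with the interval-lifetime structure of the vertex-interval-membership sequence, will be used to argue that the set of such vertices whose configurations must be enumerated has size $\Oh(\vimw[x])$, keeping the overall branching factor within $((k+1)(\ell+2))^{2\vimw[x]}$ and preserving correctness of the subsequent call to \Cref{CovTimeLine_vimw+k+l}.
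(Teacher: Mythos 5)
Your overall plan---single out the $k(\ell+1)$ largest bags, argue that only $\Oh(\vimw[x])$ vertices per large bag really matter, and then fall back on \Cref{CovTimeLine_vimw+k+l}---matches the paper's strategy in outline, but the execution has a genuine gap exactly at the point you flag as ``the main obstacle.'' For a maximal stretch $[i,j]$ of consecutive large bags, the vertices whose entire non-isolated lifetime lies inside the stretch (i.e.\ the set $\bigcup_{y\in[i,j]}F_y\setminus(F_{i-1}\cup F_{j+1})$) can number up to $\vimw[1]$, which is unbounded in $\vimw[x]+k+\ell$. Your claim that ``the set of such vertices whose configurations must be enumerated has size $\Oh(\vimw[x])$'' is asserted but never argued, and no pigeonhole over the $x$ largest bags will deliver it: there is no bound on how many vertices live only inside the stretch. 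The paper's key observation is that \emph{none} of these vertices needs any enumeration: since a stretch of large bags spans at most $k(\ell+1)$ time steps and each such vertex is isolated outside the stretch, $k$ activity intervals of length $\ell$ (covering $\ell+1$ snapshots each) suffice to keep the vertex active during its entire non-isolated lifetime, so all temporal edges incident with it are covered greedily. One then deletes these vertices, decreases $t$ accordingly, and observes that every surviving vertex of a large bag must also appear in $F_{i-1}$ or $F_{j+1}$ (bags containing a vertex are consecutive), so every bag of the reduced instance has size at most $2\vimw[x]$. This greedy-deletion step is the missing ingredient; it is specific to the covering objective (and indeed fails for \PDT, cf.\ \Cref{PDomTimeLine_vimw2+k+l}), so it cannot be replaced by a generic boundary-guessing argument.

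A second, smaller issue is the accounting of your branching factor. You guess a $(\curr,\pos)$ configuration on a boundary of size up to $2\vimw[x]$ at each of the $k(\ell+1)$ large bags; if these guesses are independent, the product is $\big((k+1)(\ell+2)\big)^{2\vimw[x]\cdot k(\ell+1)}$, not $\big((k+1)(\ell+2)\big)^{2\vimw[x]}$, and the claimed exponent $4\vimw[x]$ is not met. The paper sidesteps this entirely: after the greedy deletion there is no outer branching at all, just one run of the dynamic program of \Cref{CovTimeLine_vimw+k+l} on an instance whose effective vertex-interval-membership-width is $2\vimw[x]$, which directly yields the exponent $4\vimw[x]$.
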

\begin{proof}
Let~$(F_i)_{i\in [T]}$ be the bags of the vertex-interval-membership sequence of~$\mathcal{G}$.
Similar to the proof of \Cref{CovTimeLine_vimw+k+l}, it is safe to assume that each vertex~$v$ is contained in at least $k\cdot (\ell+1)$~bags (by definition the bags containing~$v$ need to be consecutive) because otherwise, we can greedily pick the $k$-activity intervals of~$v$ such that~$v$ is active during all these snapshots and thus all temporal edges incident to~$v$ are covered and we can reduce~$t$ accordingly.

We call the $k\cdot(\ell+1)$~largest bags of~$(F_i)_{i\in [T]}$ \emph{large}.
The idea is to pick the intervals greedily for all vertices that are only contained in large bags.
Intuitively, this works since the large bags appear consecutively.
Afterwards, we  remove all these vertices and adapt~$t$ accordingly.
Finally, we can invoke the algorithm of \Cref{CovTimeLine_vimw+k+l}.

Now, let~$[i,j]\subset [T]$ be a sequence of consecutive indices such that each bag~$F_x$ for each~$x\in[i,j]$ is large.
By definition~$j-i\le k\cdot(\ell+1) - 1$.
Now, let~$\mathcal{A}= \bigcup_{x\in[i,j]}F_x\setminus (F_{i-1}\cup F_{j+1})$.
If~$i=0$ or~$j=T$, then~$F_{i-1}$ or~$F_{j+1}$ is simply the empty set.
Suppose~$\mathcal{A} \ne \emptyset$.
By definition, each vertex in~$\mathcal{A}$ is isolated in~$G_1, \ldots, G_{i-1}$ and also in~$G_{j+1}, \ldots, G_T$.
By our assumption that each vertex is contained in at least $k\cdot(\ell+1)$~bags, we can conclude that~$j-i=k\cdot(\ell+1)-1$.
Now observe that by picking the $k$-activity intervals for each vertex in~$\mathcal{A}$ greedily, we can cover all temporal edges which are incident to some vertex of~$\mathcal{A}$.
Hence, it is safe to remove~$\mathcal{A}$ from~$\mathcal{G}$ and to reduce~$t$ by the number of temporal edges covered by the vertices in~$\mathcal{A}$.
For the remaining instance, we use the algorithm of \Cref{CovTimeLine_vimw+k+l}.
After removing~$\mathcal{A}$, we have~$|F_x| \le |F_{i-1}| + |F_{j+1}| \le 2 \vimw [k (\ell + 1) + 1]$ for every~$x \in [i,j]$ and consequently~\Cref{CovTimeLine_vimw+k+l} yields the desired running time.
Now, suppose~$\mathcal{A} = \emptyset$ and let~$F_x$ be a large bag contained in a sequence of large bags with indices~$[i,j]$.
In this case, we can conclude~$|F_x| \le |F_{i-1}| + |F_{j+1}| \le 2 \vimw [k (\ell + 1) + 1]$ and again apply~\Cref{CovTimeLine_vimw+k+l}.
\end{proof}

Now, we focus on \PDTT.
Initially, we show how to adapt the algorithm of \Cref{CovTimeLine_vimw+k+l} for \PDT to obtain an algorithm with the same running time for \PDT. However, we need to be more careful since isolated vertices in the snapshots can no longer be ignored as for \PCT.
This means that vertices which are not contained in the current bag of the vertex-interval-membership sequence might have to be active.
Hence, the update of the dynamic program additionally needs to consider the case that activity intervals of vertices appear before or after their first or last occurrence in a bag.

\begin{theorem}

	\label{DomTimeLine_vimw+k+l}
	\PDT is solvable in $\big((k+1)(\ell + 2)\big)^{2\cdot \vimw} \cdot (n+T)^{\Oh(1)}$ time.
\end{theorem}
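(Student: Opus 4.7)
The plan is to adapt the DP from \Cref{CovTimeLine_vimw+k+l}, keeping the same state $\DP[i,\curr,\pos]$ with $\curr\colon F_i\to[0,k]$ and $\pos\colon F_i\to[0,\ell+1]$, but redefining it as the maximum number of dominated temporal vertices $(u,j)$ with $j\le i$ (rather than covered edges). The within-bag transition remains structurally the same: iterate over all $(\curr',\pos')$ compatible with $(\curr,\pos)$ under Conditions~\eqref{eq:cond1}--\eqref{eq:cond4}, and add the number of temporal vertices newly dominated in $G_i$. This additional term can be evaluated locally: a vertex $v\in F_i$ contributes to the count if $\pos(v)\ge 1$ (self-domination) or if $\pos(v)=0$ and some $w\in N_{G_i}(v)$ has $\pos(w)\ge 1$ (domination by an active neighbor). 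Since every non-isolated vertex of $G_i$ lies in $F_i$, this test is fully decidable from the DP state.

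The essential new difficulty is dealing with temporal vertices $(v,i)$ where $v\notin F_i$, i.e., $i<a_v$ or $i>b_v$, where $[a_v,b_v]$ is the range of bags containing $v$. Such a $v$ is isolated in $G_i$, so $(v,i)$ is dominated iff an activity interval of $v$ covers $i$; but these intervals live entirely in the pre-lifetime $[1,a_v-1]$ or post-lifetime $[b_v+1,T]$, outside the DP's tracked range. The plan is to handle them by precomputation: for each $v$ and each pair $(c,p)\in[0,k]\times[0,\ell+1]$, compute $\mathrm{pre}_v(c,p)$ = the maximum number of self-dominated temporal vertices $(v,j)$ with $j<a_v$ over all choices of activity intervals of $v$ consistent with $v$ entering $F_{a_v}$ in state $(\curr(v),\pos(v))=(c,p)$; and symmetrically $\mathrm{post}_v(c,p)$ for the post-lifetime. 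Each of these is a one-dimensional subproblem solvable in polynomial time by a straightforward greedy/DP over the isolated region.

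The DP then incorporates these boundary contributions as follows. The initialization at $i=1$ adds $\mathrm{pre}_v(\curr(v),\pos(v))$ for every $v\in F_1$ (this sum is zero when $a_v=1$). The recurrence at a step where $v$ first enters the bag (i.e., $v\in F_i\setminus F_{i-1}$, so $i=a_v$) adds $\mathrm{pre}_v(\curr(v),\pos(v))$. Symmetrically, when $v$ leaves (i.e., $v\in F_{i-1}\setminus F_i$) the recurrence adds $\mathrm{post}_v(\curr'(v),\pos'(v))$. Vertices that never appear in any bag are completely isolated throughout, and can be handled upfront by a separate greedy packing of $k$~intervals per vertex maximizing coverage of $[1,T]$. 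The final answer is the maximum over all $(\curr,\pos)$ at $i=T$ of $\DP[T,\curr,\pos]$, plus the post-lifetime contributions $\mathrm{post}_v(\curr(v),\pos(v))$ for remaining $v\in F_T$ plus the completely-isolated contribution, compared against $t$.

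The anticipated main obstacle is correctly accounting for activity intervals that straddle the boundary $a_v$ or $b_v$ of a vertex's lifetime: such an interval simultaneously contributes to self-domination on the outside of the bag range and to the DP's tracked activity inside. This is exactly why the auxiliary tables are indexed by the full state $(c,p)$ and not only by the number of intervals used: a nonzero $p$ at entry encodes that the last of the $c$ intervals is in progress and already fixed to start at $a_v-p+1$, so $\mathrm{pre}_v(c,p)$ only needs to optimize the placement of the remaining $c-1$ intervals within $[1,a_v-p]$, avoiding any double counting. Granting this, the running-time analysis is identical to \Cref{CovTimeLine_vimw+k+l}: the DP has $T\cdot\big((k+1)(\ell+2)\big)^{\vimw}$ entries, each updated by scanning $\big((k+1)(\ell+2)\big)^{\vimw}$ previous states in $\Oh(n)$ time per state, and precomputing the $\mathrm{pre}_v$ and $\mathrm{post}_v$ tables costs $(n+T)^{\Oh(1)}$. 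The overall bound is $\big((k+1)(\ell+2)\big)^{2\cdot \vimw}\cdot(n+T)^{\Oh(1)}$.
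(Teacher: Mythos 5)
Your proposal is correct and follows essentially the same route as the paper's proof: the same DP state $(\curr,\pos)$ over the bags $F_i$ with the same compatibility conditions, the closed-neighborhood count for snapshot~$i$, and boundary contributions for self-domination of isolated temporal vertices added exactly when a vertex enters or leaves its bag range (the paper writes these as explicit $\min(\cdot\,(\ell+1),\cdot)$ formulas rather than precomputed tables $\mathrm{pre}_v,\mathrm{post}_v$, which are equivalent). The only fix needed is cosmetic: since the post-lifetime contributions of departed vertices refer to time steps $j>i$, the table invariant should be stated, as in the paper, as counting dominated temporal vertices of $F_i$ up to time~$i$ \emph{plus} those of vertices already outside the bag over the whole lifetime, rather than only pairs $(u,j)$ with $j\le i$.
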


\begin{proof}
	Let $(\mathcal{G} = (G_1, \dots, G_T),k, \ell, t)$ be an instance of \PDT.
Again, suppose that $T \geq k \cdot (\ell + 1)$ (otherwise, it is a trivial instance) and that the underlying graph $G_\downarrow$ contains no isolated vertex (otherwise, decrease $t$ by $k \cdot (\ell + 1)$ and delete the vertex).

	The idea of the dynamic program over the lifetime of the temporal graph is analogous to the algorithm of \Cref{CovTimeLine_vimw+k+l}.
Recall that the dynamic program keeps track of the activity of the vertices which are contained in the bag $F_i$ of the currently considered time step $i \in [T]$.
Again, while iterating over the lifetime of the temporal graph, we need to ensure that the activities at a time step are compatible with the activities at the neighboring time steps.
Moreover, we need to be careful as it might be optimal for some vertices to be active at time steps where they are not contained in the respective bag to dominate some isolated vertices.

	Recall that the vertex-interval-membership sequence can be computed in polynomial time.
Since we assumed that the underlying graph contains no isolated vertex, it follows that each vertex is contained in at least one bag of the sequence.

	\emph{Table Definition:}
	Again we let $\overleftarrow{F}_i := F_1 \cup \dots \cup F_i$ and $\overline{F}_i := \overleftarrow{F}_i \setminus F_i$.
Recall that~$\overline{F}_i$ is the set of vertices which are isolated in all snapshots $G_i, \dots, G_T$.
Moreover, again for $i \in [T]$ we define functions~${\curr : F_i \rightarrow [0,k],}$ and $\pos : F_i \rightarrow [0, \ell + 1]$ which indicate for each vertex~$v\in F_i$ whether an activity interval of~$v$ is active during snapshot~$G_i$.
Recall that the function value~$\curr(v)$ determines the current activity interval of $v$ and the function value $\pos(v)$ determines the position in this activity interval.

Now, the entry $\DP[i, \curr, \pos]$ contains the combined maximum number of dominated temporal vertices of~$\overline{F}_i$ in the whole temporal graph and of $F_i$ in the snapshots~$G_1, \dots , G_i$, while $i$ is at the~$\pos (v)$-th position of the $\curr(v)$-th activity interval of $v \in F_i$.
We are now ready to present the formal definition of the table.
	\begin{align*}
		\DP[i, \curr, \pos] \, \hat{=} \, \max_{\mathcal{T}}  \, & |\{(v, j) \in \overline{F}_i \times [T] : (v, j) \text{ is dominated by } \mathcal{T}\}\\
		& \cup \{(v, j) \in F_i \times [i] : (v, j) \text{ is dominated by } \mathcal{T}\}|,
	\end{align*}
	where the maximum is taken over all $k$-activity $\ell$-timelines $$\mathcal{T} \subseteq (\overline{F}_i \times [T] \times [T]) \cup (F_i \times [i] \times [T]),$$ such that for each $v \in F_i$
	\begin{enumerate}[label=(\roman*)]
		\item there are at most $\curr(v)$ activity intervals of $v$ in $\mathcal{T}$,
		\item $\pos (v) \leq i$, and
		\item $(v, i - \pos(v) + 1, b) \in \mathcal{T}$ for some $b \geq 1$ if~$\pos (v) > 0$.
	\end{enumerate}

We need conditions~(i)-(iii) for the identical reasons as for \PCT:
	Condition~(i) ensures that each vertex has at most~$k$ activity intervals.
Moreover, condition (ii) ensures that no activity interval starts before time step one.
Finally, condition (iii) ensures that each activity interval has length~$\ell$, except if the remaining number of snapshots is too small.

	\emph{Initialization:}
	For all $\curr : F_1 \rightarrow [0,k]$ and $\pos : F_1 \rightarrow [0, 1]$, for which $\curr(v) = 0$ implies $\pos (v) = 0$, the table is initialized by
	$$\DP [1, \curr, \pos] = |N_{G_1} [\{v \in F_1: \pos(v) = 1\} ]|.$$
	Note that $\overline{F}_1 = \emptyset$ and therefore only activity timelines $\mathcal{T} \subseteq F_1 \times [1] \times [T]$ are considered in the definition of $\DP [1, \curr, \pos]$.
The initialization is correct, since the table entry $\DP [1, \curr, \pos]$ contains the number of vertices which are dominated by active vertices from $F_1$ in $G_1$ and the dominated vertices are also contained in $F_1$ by definition.

	\emph{Recurrence:}
Intuitively, $\DP[i, \curr, \pos]$ is computed by trying all activity interval choices for the vertices of $F_{i-1}$, such that there is no conflict with $\curr$ and $\pos$.
Clearly, all vertices which are either active during time step~$i$ or a neighbor of an active vertex in snapshot~$G_i$ get dominated (summand~(A1)).
However, we also need to consider how many of the vertices from $F_{i}$ and $F_{i-1}$ are isolated in $G_{1}, \dots , G_{i-1}$ and $G_i, \dots, G_{T}$ respectively.
In particular, we need to count how many of these isolated temporal vertices are dominated by a solution candidate.
More precisely, two cases are possible:
First, if~$v \in F_i$ is isolated in~$G_1, \ldots, G_{i-1}$, then in each time step in~$[i-1]$, which intersects an activity interval of~$v$, exactly one vertex, that is, $v$ itself is dominated (summands~(A2) and~(A3)).
Second, if~$v \in F_{i}$ is isolated in~$G_{i+1}, \ldots, G_{T}$, then in each time step in~$[i+1,T]$, which intersects an activity interval of~$v$, exactly one vertex, that is, $v$ itself is dominated (summands~(A4) and~(A5)).
Consequently, the formal update is as follows:
	\begin{align*}
		\DP[i, \curr, \pos] =\\ \max_{\curr', \pos'} \, &\DP[i-1, \curr', \pos']\\
		& + |N_{G_i} [\{v \in F_i: \pos(v) \geq 1\} ]| && (A1)\\
		& + \sum_{ \substack{v \in F_i \setminus F_{i-1}: \\ \pos(v) = 0}} \min \big(\curr(v) (\ell + 1), i-1\big) && (A2) \\
		& + \sum_{ \substack{v \in F_i \setminus F_{i-1}: \\ \pos(v) > 0}} \min \big((\curr(v) - 1) (\ell + 1) + \pos(v) - 1, i-1\big) && (A3)\\
		& + \sum_{ \substack{v \in F_{i-1} \setminus F_{i}: \\ \pos'(v) = 0}} \min( (k - \curr'(v)) (\ell + 1), T - i + 1) && (A4)\\
		& + \sum_{ \substack{v \in F_{i-1} \setminus F_{i}: \\ \pos'(v) > 0}} \min( (k - \curr'(v)) (\ell + 1) + \ell + 1- \pos'(v)), T - i + 1) &&(A5),
	\end{align*}
	where the maximum is taken over all $\curr' : F_{i-1} \rightarrow [0,k]$ and $\pos' : F_{i-1} \rightarrow [0, \ell + 1]$ which are \emph{compatible} with $\curr$ and $\pos$.
	Here, compatibility has the identical meaning as in the algorithm in \Cref{CovTimeLine_vimw+k+l} for \PCT:
Informally, $\curr'$ and $\pos'$ are compatible with $\curr$ and $\pos$ if they provide a correct extension of the activities of $F_{i-1} \cap F_i$ at time step~$i$ to time step $i-1$.
Formally, four~conditions need to be fulfilled:
First, if an interval of~$v$ is already active at time step~$i$, that is, if~$\pos(v)\ge 2$, then an activity interval of~$v$ is already active during time step~$i-1$.
This can be formalized as follows.
\setcounter{equation}{0}
\begin{equation}
  \label{eq:cond1DS}
          \text{If } \pos(v) \geq 2\text{, then } \curr' (v) \leq \curr(v) \text{ and } \pos' (v) = \pos (v) - 1.
\end{equation}
Second, if a new activity interval of~$v$ starts at time step~$i$ either~$v$ was inactive at time step~$i-1$ or the previous activity interval of~$v$ ended at time step~$i-1$.
This can be formalized as follows.
\begin{equation}
  \label{eq:cond2DS}
          \text{If } \curr(v) > 1 \text{ and } \pos (v) = 1\text{, then } \curr' (v) \leq \curr (v) - 1 \text{ and } \pos' (v) \in \{\min (i-1, \ell + 1),0\}.
\end{equation}
Third, if the first activity interval of~$v$ starts at time step~$i$, then in the previous time step~$i-1$ vertex~$v$ cannot be active. This can be formalized as follows.
\begin{equation}
  \label{eq:cond3DS}
\text{If } \curr(v) = 1 \text{ and } \pos (v) = 1\text{, then } \curr' (v) = 0 \text{ and } \pos' (v) = 0.
\end{equation}
Finally, if~$v$ is not active during time step~$i$, either~$v$ is also not active during time step~$i-1$ or an activity interval of~$v$ ended at time step~$i-1$.
This can be formalized as follows.
\begin{equation}
  \label{eq:cond4DS}
  \text{If } \curr(v) \geq 1 \text{ and } \pos (v) = 0\text{, then } \curr' (v) \leq \curr (v) \text{ and } \pos' (v) \in \{ \min (i-1, \ell + 1),0\} .
\end{equation}
	The minimum in two of the conditions ensures that the activity intervals do not have a starting point smaller than one.

	\emph{Correctness:}
	($\leq$) Suppose the maximum in the definition of the table entry $\DP[i, \curr, \pos]$ is attained for $\mathcal{T}$ and assume that no activity intervals from $\mathcal{T}$ overlap.
We define $\mathcal{T}'$ by taking all activity intervals from $\mathcal{T}$, which are relevant for a table entry of $i-1$.
Formally, we define
	\begin{align*}
		\mathcal{T}' := & \{(v,a,b) \in \mathcal{T} : v \in \overline{F}_{i-1}\} \\
		& \cup \{(v,a,b) \in \mathcal{T} : a \leq i-1, v \in F_{i-1}\},
		\end{align*}
		and, in accordance with $\mathcal{T}'$, functions $\curr' : F_{i-1} \rightarrow [0,k], \pos' : F_{i-1} \rightarrow [0, \ell + 1]$ by
		\begin{align*}
			\curr'(v) &:= |\{(v,a,b) \in \mathcal{T}' : a,b \in [T]\}| \\
			\pos'(v) &:=
			\begin{cases}
				i-a & \text{if } (v,a,b) \in \mathcal{T}'  \text{ with } a \leq i \leq b\\
				0 & \text{otherwise.}
			\end{cases}
		\end{align*}
	Note that $\mathcal{T}' \subseteq \mathcal{T}$.
We show that $\curr'$ and $\pos'$ satisfy conditions (1)-(4).
This means that the table entry $\DP[i-1, \curr', \pos']$ is considered in the maximum on the right side of the recursive computation.
By definition of $\curr'$ and $\pos'$ it is clear that $\mathcal{T}'$ is then considered in the definition of the table entry $\DP[i-1, \curr', \pos']$.
Now let $v \in F_{i-1} \cap F_i$.
	\begin{enumerate}[label=(\arabic*)]
		\item If $\pos(v) \geq 2$, then $(v,a,b) \in \mathcal{T}$ for some $a \leq i-1, b \in [T]$ and by definition of~$\mathcal{T}'$, each such activity interval of $v$ in $\mathcal{T}$ is also contained in $\mathcal{T}'$.
Consequently, we have $\curr'(v) \leq \curr(v)$ and $\pos'(v) = \pos (v) - 1$.
		\item If $\curr(v) > 1$ and $\pos (v) = 1$, then $(v,i,b) \in \mathcal{T} \setminus \mathcal{T}'$ for some $b \in [T]$.
So the activity timeline $\mathcal{T}'$ contains at most $\curr(v)-1$ activity intervals of $v$ and therefore $\curr'(v) \leq \curr(v) - 1$.
Now either $(v,a,i-1) \in \mathcal{T}'$ for some $a \in [i-1]$ or $v$ is not active in $G_{i-1}$ with respect to the activity intervals from $\mathcal{T}'$.
This means that either $\pos' (v) = \min (i-1, \ell + 1)$ or $\pos' (v) = 0$.
		\item If $\curr(v) = 1$ and $\pos(v) =1$, then $(v,i,b) \in \mathcal{T} \setminus \mathcal{T}'$ for some $b \in [T]$.
Consequently, $\mathcal{T}'$ contains no activity interval of $v$ and therefore $\curr' (v) = \pos' (v) = 0$.
		\item If $\curr (v) \geq 1$ and $\pos (v) = 0$, then either $(v,a,i-1) \in \mathcal{T}'$ for some $a \in [i-1]$ or $v$ is not active in $G_{i-1}$ with respect to the activity intervals from $\mathcal{T}'$.
So by definition $\curr' (v) \leq \curr(v)$ and either $\pos' (v) = \min (i-1, \ell + 1)$ or $\pos' (v) = 0$.
	\end{enumerate}
	It remains to compare the set
	\begin{align*}
		& \mathcal{A} := \{(v, j) \in \overline{F}_i \times [T] : (v, j) \text{ is dominated by } \mathcal{T}\}\\
		& \cup \{(v, j) \in F_i \times [i] : (v, j) \text{ is dominated by } \mathcal{T}\}
	\end{align*}
	with the set
	\begin{align*}
		& \mathcal{A}':= \{(v, j) \in \overline{F}_{i-1} \times [T] : (v, j) \text{ is dominated by } \mathcal{T}'\}\\
		& \cup \{(v, j) \in F_{i-1}\times [i-1] : (v, j) \text{ is dominated by } \mathcal{T}'\}.
	\end{align*}
        
	Since $\mathcal{T}' \subseteq \mathcal{T}, \, \overline{F}_{i-1} \subseteq \overline{F}_i$ and $F_{i-1} \subseteq F_i \cup \overline{F}_i$, it is clear that $\mathcal{A}' \subseteq \mathcal{A}$.
We distinguish between three possible cases, when temporal vertices could be contained in $\mathcal{A} \setminus \mathcal{A}'$:
\textbf{a) } $(u,j)\in F_i\times \{i\}$, \textbf{b) } $(u,j)\in F_i \times [i-1]$, and \textbf{c) }  $(u,j)\in \overline{F_i} \times [T]$.
These cases are analog to the proof of \Cref{CovTimeLine_vimw+k+l} for \PCT; but now they are more involved since also isolated vertices are important for domination.

	\begin{enumerate}[label=\alph*)]
		\item If $(u,j) \in F_i \times \{i\}$ is contained in $\mathcal{A}$, then $u \in N_{G_i} [\{v \in F_i: \pos(v) \geq 1\} ]$ and therefore either $u \in F_{i-1} \cap F_i$ or $u \in F_i \setminus F_{i-1}$.
If $u \in F_{i-1} \cap F_i$, then we have~$(u, i) \notin \mathcal{A}'$ because $\mathcal{A}'$ only contains temporal vertices of $F_{i-1}$ from the first $i-1$ snapshots.
Otherwise, if $u \in F_i \setminus F_{i-1}$, then $u \notin \overline{F}_{i-1}$ and subsequently $(u,i) \notin \mathcal{A}'$.
In both cases, $(u,i)$ is counted in summand~(A1).

		\item If~$u \in F_i \cap F_{i-1}$, then~$u$ is dominated by some activity interval~$(w,a,b)$ with~$a \leq i-1$ and~$w \in \overleftarrow{F}_{i-1}$.
		Therefore~$\mathcal{T}'$ dominates~$u$ and therefore~$u \in \mathcal{A}'$.
		If~$u \in  F_{i} \setminus F_{i-1}$, then $\mathcal{A}\setminus \mathcal{A}'$ possibly contains temporal vertices $(u,j)$ for some~$j \in [i-1]$.
Since $u$ is isolated in these time steps, we conclude that~$(u,j) \in \mathcal{A}\setminus \mathcal{A}'$ for $j \in [i-1]$ if and only if $(u,a,b) \in \mathcal{T}$ with ${a \leq j \leq b}$.
If $\pos(u) = 0$, then there are at most $\curr(u)$ activity intervals of $u$ from $\mathcal{T}$ in~$G_1, \dots, G_{i-1}$ and therefore at most $\min(\curr(u)(\ell + 1), i-1)$ isolated temporal vertices of $u$ from $G_1, \dots, G_{i-1}$ are contained in $\mathcal{A} \setminus \mathcal{A}'$ (summand~(A2)).
If $\pos(u) > 0$, then there are at most $\curr(u) - 1$ activity intervals and a part of the $\curr(u)$-th activity interval of $u$ from $\mathcal{T}$ in $G_1, \dots, G_{i-1}$.
So in this case there are at most~$\min((\curr(u)-1)(\ell + 1) + \pos(u) - 1, i-1)$ isolated temporal vertices of $u$ from $G_1, \dots, G_{i-1}$ contained in $\mathcal{A} \setminus \mathcal{A}'$ (summand~(A3)).

		\item If~$u \in \overline{F}_i \cap \overline{F}_{i-1}$, then~$u$ is dominated by some activity interval of itself or by some~$(w,a,b)$ with~$a \leq i-1$ and~$w \in \overleftarrow{F}_{i-1}$.
		Again, by definition~$\mathcal{T}'$ dominates~$u$ and therefore~$u \in \mathcal{A}'$.

		If~$u \in \overline{F}_i \setminus \overline{F}_{i-1} = F_{i-1} \setminus F_i$, then $\mathcal{A}\setminus \mathcal{A}'$ possibly contains temporal vertices $(u,j)$ for $j \in [i,T]$.
Since $u$ is isolated in these time steps, we have $(u,j) \in \mathcal{A}\setminus \mathcal{A}'$ for $j \in [i, T]$ if and only if $(u,a,b) \in \mathcal{T}$ with $a \leq j \leq b$.
If $\pos(u) = 0$, then there are at most $(k - \curr'(u))$ activity intervals of $u$ from $\mathcal{T}$ in $G_i, \dots, G_T$ and therefore at most $\min((k-\curr'(u))(\ell + 1), T - i + 1)$ isolated temporal vertices of $u$ in $G_i, \dots, G_T$ are contained in $\mathcal{A} \setminus \mathcal{A}'$ (summand~(A4)).
If $\pos(u) > 0$, then there are at most~$(k - \curr'(u))$ activity intervals and a part of the $(\curr(u)-1)$-th activity interval of $u$ from $\mathcal{T}$ in $G_i, \dots, G_T$.
Consequently, in this case there are at most~$\min((k - \curr'(u))(\ell + 1) + \ell + 1 - \pos(u), T -i+1)$ isolated temporal vertices of $u$ in $G_i, \dots, G_T$ contained in $\mathcal{A} \setminus \mathcal{A}'$ (summand~(A5)).
	\end{enumerate}
	Combining the previous observations, we have $$\DP [i,\curr,\pos] = |\mathcal{A}| \leq |\mathcal{A}'| + (A1) + (A2) + (A3) + (A4) + (A5)$$ and therefore the desired inequality holds.

	($\geq$) Suppose the maximum on the right side of the computation is attained for~$\curr',\pos'$ and the maximum in the definition of $\DP[i-1, \curr', \pos']$ is attained for $\mathcal{T}'$.
We define $\mathcal{T}$ by extending $\mathcal{T}'$ in the following way.
	\begin{itemize}
		\item For $v \in F_{i-1} \cap F_i$ the activity interval $(v,i,\min(i+ \ell, T))$ is added if and only if $\pos (v) = 1$.
With these intervals $\mathcal{T}$ dominates all of $N_{G_i}[\{v \in F_i : \pos(v) \geq 1\}]$.
		\item For $v \in F_i \setminus F_{i-1}$ (so $v$ is isolated in the snapshots $G_1, \dots, G_{i-1}$) there are two cases.
If ${\pos(v) = 0}$, then $\curr(v)$ activity intervals of $v$ are added in the time interval $[i-1]$ such that $\min(\curr(v)(\ell + 1), i-1)$ isolated temporal vertices of~$v$ in $G_1, \dots, G_{i-1}$ are dominated by $\mathcal{T}$.
Otherwise, if $\pos(v) > 0$, then $(\curr(v)-1)$ activity intervals of $v$ are added in the time interval $[i-1]$ such that ${\min((\curr(v)-1)(\ell + 1) + \pos(v) -1, i-1)}$ isolated temporal vertices of $v$ in $G_1, \dots, G_{i-1}$ are dominated by $\mathcal{T}$.
		\item For $v \in \overline{F}_i \setminus \overline{F}_{i-1} = F_{i-1} \setminus F_i$ (so $v$ is isolated in $G_i, \dots, G_T$) there are also two cases.
If $\pos'(v) = 0$, then $(k -\curr'(v))$ activity intervals of $v$ are added in the time interval $[i, T]$ such that $\min((k-\curr'(v))(\ell + 1), T-i+1)$ isolated temporal vertices of $v$ in $G_i, \dots, G_T$ are dominated by $\mathcal{T}$.
If $\pos'(v) > 0$, then then $(k -\curr'(v))$ activity intervals of $v$ are added in the time interval $[i,T]$ so that $\min((k -\curr'(v))(\ell + 1) + \ell + 1 -\pos(v), i-1)$ isolated temporal vertices of~$v$ in $G_i, \dots, G_T$ are dominated by $\mathcal{T}$.
	\end{itemize}
	By definition of the table entries we have $$\DP[i, \curr, \pos] \geq \DP[i-1, \curr', \pos'] + (A1) + (A2) + (A3) + (A4) + (A5),$$ because $\mathcal{T}$ is considered in the definition of $\DP[i, \curr, \pos]$.
Therefore the desired inequality holds.

	\emph{Result:}
	Finally, we return yes if and only if $\DP[T, \curr, \pos] \geq t$ for $\curr : F_T \rightarrow \{k\}$ and some function $\pos : F_T \rightarrow \{0, \ell + 1\}$.
The restriction to these functions is correct, since we can assume without loss of generality that a vertex is active in $G_T$ if and only if its~$k$-th activity interval ends at time step $T$.
Further, note that $\overline{F}_T \cup F_T = V(\mathcal{G})$.
Consequently, all $k$-activity $\ell$-timelines of $\mathcal{G}$ are considered in the definition of the table entry $\DP[T, \curr, \pos]$ and all dominated temporal vertices of $\mathcal{G}$ are counted in the respective maximum of the definition.

	\emph{Running Time:}
	For each time step $i \in [T]$, there are $\big((k+1)(\ell + 2)\big)^{\vimw}$ choices for the functions $\curr : F_i \rightarrow [0,k]$ and $\pos : F_i \rightarrow [0, \ell + 1]$.
This upper-bounds the size of our dynamic programming table by $T \cdot \big((k+1)(\ell + 2)\big)^{\vimw}$.
In each recursive computation of a table entry, there are also $\Oh( \big((k+1)(\ell + 2)\big)^{\vimw})$ choices for $\curr'$ and $\pos'$.
The remaining summands of the recursive formula can be computed in~$\Oh(n)$ time.
Hence, the overall running time is bounded by~$\big((k+1)(\ell + 2)\big)^{2\cdot \vimw} \cdot (n+T)^{\Oh(1)}$.
\end{proof}

Recall that for \PCT we showed that~$\vimw=\vimw[1]$ can be replaced by the much smaller parameter~$\vimw[k\cdot(\ell+1)+1]$, the size of the $k\cdot(\ell+1)+1$~largest bag.
Hence, one may ask whether~$\vimw$ can be replaced by~$\vimw[i]$ with some suitably chosen~$i$ for \DT and \PDT.
We show that both problems behave quite differently:
First, we show we that~$\vimw$ can be replaced by~$\vimw[\ell+2]$ for \DT by having a preprocessing step and then adapting the algorithm behind \Cref{DomTimeLine_vimw+k+l}.
Second, we show that replacing~$\vimw=\vimw[1]$ by~$\vimw[2]$ is not possible for \PDT.
More precisely, we show NP-hardness even if there are only two snapshots one of which is edgeless (which implies~$\vimw[2]=0$).

We start with our result for \DT.

For \PDT, we show that we cannot replace~$\vimw=\vimw[1]$ by~$\vimw[2]$, since we prove para-NP-hardness for~$\vimw[2]+k+\ell$.
We start with our algorithmic result for \DT.

\begin{proposition}

	\label{DomTimeLine_vimw+k+l2}
	\DT is solvable in $\big((k+1)(\ell + 2)\big)^{4\cdot \vimw[x]} \cdot (n+T)^{\Oh(1)}$~time where~$x=\ell+2$.
\end{proposition}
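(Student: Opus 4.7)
The plan is to imitate the two-stage strategy of Proposition~\ref{CovTimeLine_vimw+k+l2}: first a preprocessing step that empties the interior of every maximal run of consecutive large bags, and then an invocation of the dynamic program of Theorem~\ref{DomTimeLine_vimw+k+l} on the reduced instance. The crucial difference to \PCT is that a short-lifetime vertex is not cheap for \DT, since it must still be dominated in every snapshot where it is isolated, and there the only possible dominator is the vertex itself. This is what forces the weaker threshold $x = \ell + 2$ rather than $k(\ell + 1) + 1$, because a single length-$\ell$ interval suffices to activate an $\mathcal{A}$-vertex throughout its entire lifetime.

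Concretely, I would declare the $\ell + 1$ largest bags of $(F_i)_{i\in[T]}$ to be \emph{large}. Since there are only $\ell + 1$ of them, every maximal run $[i,j]$ of consecutive large bags satisfies $j - i \le \ell$. For such a run, each vertex of $\bigcup_{x \in [i,j]} F_x$ either lies in an adjacent small bag $F_{i-1}$ or $F_{j+1}$ (each of size at most $\vimw[\ell+2]$ by definition) or has its whole lifetime inside $[i,j]$, in which case it belongs to $\mathcal{A}_{[i,j]} := \bigcup_{x \in [i,j]} F_x \setminus (F_{i-1} \cup F_{j+1})$. For every $v \in \mathcal{A}_{[i,j]}$ with lifetime $[a_v, b_v] \subseteq [i,j]$, vertex~$v$ is isolated in $[1, a_v - 1] \cup [b_v + 1, T]$ and so must be active there. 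I would commit $v$'s intervals greedily: $\lceil (a_v - 1)/(\ell + 1) \rceil$ intervals tiling the prefix $[1, a_v - 1]$, $\lceil (T - b_v)/(\ell + 1) \rceil$ intervals tiling the suffix $[b_v + 1, T]$, and, if $k$ permits, one further length-$\ell$ interval starting at $a_v$ that activates $v$ throughout $[a_v, b_v]$; if the total ever exceeds $k$, return NO. An exchange argument (using that $b_v - a_v \le \ell$ and that $v$'s behavior outside $[a_v, b_v]$ only matters for dominating $v$ itself) shows that this greedy choice is without loss of generality. Afterwards I would delete $v$ from the instance and pass on the temporal vertices already dominated by $v$'s committed intervals as initial credits for the dynamic program.

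Once preprocessing is done, $\mathcal{A}_{[i,j]}$ is empty for every large-bag run, so $|F_x| \le |F_{i-1}| + |F_{j+1}| \le 2\vimw[\ell+2]$ for $x$ inside such a run and $|F_x| \le \vimw[\ell+2]$ for $x$ outside any run. Hence the reduced temporal graph has $\vimw \le 2 \vimw[\ell+2]$, and running the dynamic program of Theorem~\ref{DomTimeLine_vimw+k+l} on it yields the claimed running time $((k+1)(\ell+2))^{4 \vimw[\ell+2]} \cdot (n+T)^{\Oh(1)}$. The step I expect to be the main obstacle is the exchange argument in the preprocessing: when $k$ is too tight to afford the extra $[a_v, b_v]$-interval, some prefix/suffix intervals necessarily spill into $[a_v, b_v]$, and one must argue that this spillage, together with the domination chosen by the dynamic program on the reduced instance, suffices to dominate every remaining snapshot of $v$'s lifetime; one must also verify that the initial credits for deleted vertices integrate cleanly with the isolated-vertex summands (A2)--(A5) of the recurrence.
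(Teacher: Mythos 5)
Your preprocessing matches the paper's: you declare the $\ell+1$ largest bags large, observe that every maximal run $[i,j]$ of consecutive large bags spans at most $\ell+1$ time steps, single out the set $\mathcal{A}_{[i,j]}$ of vertices that are non-isolated only inside the run, and argue that the isolated appearances of such a vertex outside $[i,j]$ (at least $T-(\ell+1)\ge (k-1)(\ell+1)+1$ of them once one assumes $T\ge k(\ell+1)+1$) force essentially all $k$ of its activity intervals, which can therefore be committed greedily, returning no if they cannot cover all isolated appearances. Up to this point the two arguments coincide, and your bound $|F_x|\le|F_{i-1}|+|F_{j+1}|\le 2\vimw[\ell+2]$ on the residual bags is exactly the paper's.

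The gap is in the second stage. You delete each $v\in\mathcal{A}_{[i,j]}$ and run the \emph{unmodified} dynamic program of \Cref{DomTimeLine_vimw+k+l} on the residual instance, passing ``initial credits'' for what $v$'s committed intervals dominate. Credits can record what $v$ dominates, but they cannot record what must still dominate $v$: since all $k$ intervals of $v$ are consumed by the isolated prefix and suffix and $T\ge k(\ell+1)+1$, there is always at least one snapshot $G_x$ with $x\in[i,j]$ in which $v$ is \emph{not} active, and there $v$ must be dominated by an active neighbour from $F_{i-1}\cup F_{j+1}$. (Your ``if $k$ permits, add one further interval over $[a_v,b_v]$'' branch essentially never fires, for exactly this counting reason.) Once $v$ is deleted, the dynamic program carries no constraint enforcing this, so it accepts instances in which $v$ is left undominated; note also that for \DT this is a feasibility requirement, not a quantity one can offset with credits. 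The paper therefore does not delete $\mathcal{A}$: it keeps these vertices implicitly, restricts the tracked set to $Q_x=F_x\setminus\mathcal{A}$ (which is what yields the $2\vimw[\ell+2]$ exponent), sets a table entry to $-\infty$ whenever the chosen active set fails to dominate $\mathcal{A}_{ij}$ in a snapshot where $\mathcal{A}_{ij}$ is inactive, and replaces the summand (A1) by $|N_{G_x}[\mathcal{A}_{ij}\cup\{v\in F_x:\pos(v)\ge 1\}]|$ in the snapshots where $\mathcal{A}_{ij}$ is active. You flagged this as the main obstacle yourself, but the deletion-plus-credits framework cannot express the missing constraint; you need such a modified recurrence (or an equivalent device) to close the proof.
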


\begin{proof}
Similarly to the proofs before, it is safe to assume that~$T\ge k\cdot(\ell+1)+1$ because otherwise each vertex can be active in each snapshot and thus every vertex is dominated.
Let~$(F_i)_{i\in [T]}$ be the bags of the vertex-interval-membership sequence of~$\mathcal{G}$.
Analogously to \Cref{CovTimeLine_vimw+k+l2}, we denote the $\ell+1$~largest bags of~$(F_i)_{i\in [T]}$ as \emph{large} and each remaining bag is denoted as \emph{small}.
The observation is that for each vertex~$v$ which is only contained in large bags, there are a huge number of snapshots were~$v$ is isolated.
To dominate~$v$ in such snapshots, $v$ has to be active during them.
This essentially fixes all $k$-activity intervals of~$v$.
Afterwards, we use a slight modification of the dynamic program of \Cref{DomTimeLine_vimw+k+l} to find the best activity intervals for the remaining vertices.

Next, let~$[i,j]\subseteq [T]$ be a maximal consecutive sequence of indices such that each bag~$F_x$ is large for any~$x\in[i,j]$.
Furthermore, let~$\mathcal{A}_{ij}=\bigcup_{x\in[i,j]}F_x\setminus(F_{i-1}\cup F_{j+1})$. The set~$\mathcal{A}_{ij}$ contains the vertices which are isolated in~$G_1, \ldots, G_{i-1}$ and in~$G_{j+1}, \ldots, G_T$.
Note that if~$i=1$ or~$j=T$ the respective sets~$F_{i-1}$ and~$F_{j+1}$ are empty.
Moreover, if~$\mathcal{A}_{ij}$ is empty, then we can conclude~$|F_x|\le |F_{i-1}|+|F_{j+1}|\le 2\cdot \vimw[\ell+2]$ for all~$x \in [i,j]$ and subsequently \Cref{DomTimeLine_vimw+k+l} yields the desired running time.
Hence, we assume that at least one such~$\mathcal{A}_{ij}$ is non-empty.

Observe that since~$v\in\mathcal{A}_{ij}$ is isolated in~$G_1, \ldots, G_{i-1}$ and in~$G_{j+1}, \ldots, G_T$, the vertex~$v$ must be active during~$[i-1]$ and during~$[j+1,T]$ to ensure that it is dominated at these time steps.
Since by definition~$j-i\le \ell$, in total these are at least $T-(\ell+1)\ge k\cdot(\ell+1)+1-(\ell+1)=(k-1)\cdot(\ell+1)+1$~snapshots.
Since $(k-1)$~activity intervals can dominate at most $(k-1)(\ell+1)$ of these isolated vertices, we conclude that all $k$-activity intervals of~$v$ are necessary  to dominate~$v$ in these snapshots.
In particular, if $T\ge k\cdot(\ell+1)+(\ell+1)+1$, then we deal with a no-instance since in at least one snapshot where~$v$ is isolated, no activity interval of~$v$ can be active.

Now, we can assign the $k$-activity intervals of~$v$ greedily:
To dominate~$v$ in~$G_1, \ldots, G_{i-1}$ we let the first activity interval of~$v$ start at snapshot~1, the second at snapshot~$\ell+2$ and so on.
By the observation above, we know that all~$k$ activity intervals of~$v$ are used in this way.
(If there is still some snapshot where~$v$ is isolated and not dominated, then return no.)

Recall that for \PCT in the proof of \Cref{CovTimeLine_vimw+k+l2} we could simply remove the vertices in~$\mathcal{A}_{ij}$ and then call the dynamic program of \Cref{CovTimeLine_vimw+k+l} for the smaller bag size.
But for \DT, however, we cannot simply remove the vertices in~$\mathcal{A}_{ij}$, since each vertex in~$\mathcal{A}_{ij}$ also needs to be dominated in snapshots~$G_i,\ldots , G_j$ and may also dominate some other vertices in~$G_i,\ldots, G_j$ if it is active.
\textbf{1)} Note that in some of these snapshots the vertices of~$\mathcal{A}_{ij}$ are active due to our greedy choice of the activity intervals for the vertices in~$\mathcal{A}_{ij}$; but since~$T\ge k\cdot(\ell+1)+1$, in at least one snapshot the vertices still need to be dominated.
\textbf{2)} Moreover, in each snapshot~$G_x$ where~$x\in[i,j]$ and the vertices in~$\mathcal{A}_{ij}$ are active, all temporal vertices in~$N_{G_{x}}[\mathcal{A}_{ij}]$ are already dominated.
Also note that in snapshot~$G_x$ for any~$x\in[i,j]$ either all vertices of~$\mathcal{A}_{ij}$ are active or none of them.

Due to Properties \textbf{1)} and \textbf{2)} we adapt the dynamic program of the algorithm of \Cref{DomTimeLine_vimw+k+l} as follows:
For each consecutive sequence~$F_i, \ldots, F_j$ of large bags we let~$\mathcal{A}_{ij}=\bigcup_{x\in[i,j]}F_x\setminus(F_{i-1}\cup F_{j+1})$ be the vertices which are isolated in~$G_1,\ldots, G_{i-1}$ and~$G_{j+1},\ldots, G_T$.
Also, let~$\mathcal{A}$ be the union of all sets~$\mathcal{A}_{ij}$, that is, the set of all vertices which are only non-isolated in some large bags.
In the new dynamic program we only consider the subset~$Q_i=F_i\setminus \mathcal{A}$ of vertices.
Consequently, for a small bag there is no change.
But for a large bag~$F_x$ we now only consider vertices~$Q_x=F_x\setminus \mathcal{A}$.
Observe that each vertex in~$Q_x$ is either contained in~$F_{i-1}$ or in~$F_{j+1}$.
Thus, $|Q_x|\le |F_{i-1}|+|F_{j+1}|\le 2\cdot \vimw[\ell+2]$.
We next describe how we change the update for bag~$F_x$ to respect \textbf{1)} and \textbf{2)}.
Recall that in~$G_x$ all vertices of~$\mathcal{A}\setminus\mathcal{A}_{ij}$ are isolated and thus active and that the vertices in~$\mathcal{A}_{ij}$ are either all active or not.
\textbf{1)} if no vertex in~$\mathcal{A}_{ij}$ is active, the active vertices in~$F_x\setminus \mathcal{A}_{ij}$ need to dominate~$\mathcal{A}_{ij}$.
Let~$S_x=\{v \in F_x: \pos(v) \geq 1\}$.
If~$\mathcal{A}_{ij}\subseteq N_{G_x}[S]$, then nothing changes, but otherwise if~$\mathcal{A}_{ij}\not\subseteq N_{G_x}[S]$, then we set the corresponding entry~$\DP[i, \curr, \pos]$ to~$-\infty$ since at least one vertex in~$\mathcal{A}_{ij}$ does not get dominated in~$G_x$.
\textbf{2)} if all vertices in~$\mathcal{A}_{ij}$ are active, then all vertices of~$N_{G_x}[\mathcal{A}_{i,j}]$ are already dominated, and thus we change summand~$(A1)$ to~$|N_{G_x} [\mathcal{A}_{ij}\cup \{v \in F_x: \pos(v) \geq 1\} ]|$.

The correctness and the running time of the updated dynamic program can be shown analogously as in the proof of \Cref{DomTimeLine_vimw+k+l}.
Only the exponent of the new running time is different since in each big bag we consider at most $2\cdot \vimw[\ell+2]$~vertices.
\end{proof}

Finally, we show hardness for constant~$k, \ell$, and~$\vimw[2]$ for \PDT.

\begin{theorem}

\label{PDomTimeLine_vimw2+k+l}
\PDT is NP-hard even if~$T=2$, $k=1$, $\ell=0$, the underlying graph is bipartite, planar, has maximum degree~3, and one snapshot is edgeless.
\end{theorem}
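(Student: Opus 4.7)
My plan is to reduce from \prb{Dominating Set} restricted to bipartite planar graphs of maximum degree~$3$, a problem that is known to be NP-hard. The key observation is that under the parameter restrictions $T=2$, $k=1$, $\ell=0$, and one edgeless snapshot, the temporal problem collapses to a classical static question about the single non-trivial snapshot.

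First I would note that, since $k=1$ and $\ell=0$, every vertex admits at most one activity interval consisting of a single time step, and since $G_2$ has no edges, a temporal vertex $(v,2)$ is dominated only if $v$ itself is active at time~$2$. Writing $A_1, A_2 \subseteq V(\mg)$ for the disjoint sets of vertices active at times~$1$ and~$2$ respectively, the number of dominated temporal vertices equals $|N_{G_1}[A_1]| + |A_2|$. For any fixed $A_1$, this is maximized by $A_2 := V(\mg) \setminus A_1$, so the instance is a yes-instance iff there exists $A \subseteq V(\mg)$ with $|N_{G_1}[A]| - |A| \ge t - n$.

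Next, a short combinatorial identity shows that $\max_{A \subseteq V(G)}(|N_G[A]| - |A|) = n - \gamma(G)$ for any graph $G$: the upper bound holds because $A \cup (V(G) \setminus N_G[A])$ is always a dominating set, and the bound is attained by any minimum dominating set. Hence the \PDT-instance is a yes-instance iff $\gamma(G_1) \le 2n - t$. Given this equivalence, the reduction is transparent: from an instance $(G, s)$ of \prb{Dominating Set} on bipartite planar graphs of maximum degree~$3$, output $\mg := (G, (V(G), \emptyset))$ with $k := 1$, $\ell := 0$, and $t := 2|V(G)| - s$. The underlying graph $G_\downarrow = G$ inherits all required structural properties, and $G_2$ is edgeless by construction, so the produced instance respects every stated restriction.

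The main obstacle is securing the source NP-hardness of \prb{Dominating Set} on bipartite planar graphs of maximum degree~$3$. If a direct citation is not available, I would derive it from the classical NP-hardness on planar graphs of maximum degree~$3$ by attaching a small fixed gadget at each vertex that simultaneously restores bipartiteness and shifts the minimum dominating-set size by a value depending only on the number of vertices (or edges). A naive single-edge subdivision of every edge does produce a bipartite planar max-degree-$3$ graph, but the resulting shift in~$\gamma$ is not constant (as easy examples such as $K_{1,3}$ show), so designing a gadget whose $\gamma$-contribution is genuinely uniform is the delicate part of the argument.
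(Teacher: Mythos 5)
Your reduction is correct and is essentially the same as the paper's: both reduce from \prb{Dominating Set} on bipartite planar graphs of maximum degree~3, build $\mg=(G,(V(G),\emptyset))$ with $t=2n-s$, and the paper's correctness argument (showing $A\cup(V\setminus N[A])$ is a dominating set of size at most $s$) is exactly the combinatorial identity $\max_A(|N[A]|-|A|)=n-\gamma(G)$ that you isolate. The "main obstacle" you worry about in your last paragraph is not an obstacle: NP-hardness of \prb{Dominating Set} on bipartite planar max-degree-3 graphs is a known result that the paper simply cites (Garey and Johnson), so no gadget construction is needed.
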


\begin{proof}
We reduce from the classic \DS problem, where the input consists of a graph~$H$ and an integer~$k$ and the goal is to find a dominating set of size most~$k$.
Here, a vertex set~$S$ is a dominating set if~$N[S]=V(H)$.
\DS is NP-hard even if the graph is bipartite, planar and has a maximum degree of~3~\cite{GareyJohnsonBook}.

Now, we construct a \PDT instance~$(\mathcal{G},k=1,\ell=0,t)$ as follows:
We set~$T=2$.
The first snapshot~$G_1$ of~$\mathcal{G}$ is a copy of~$H$ and the second snapshot~$G_2$ is edgeless.
Clearly, the underlying graph of~$\mathcal{G}$ has maximum degree~3.
Finally, we set~$t=2n-k$ where~$n=|V(H)|$.

If~$S$ is a dominating set, then letting~$S$ be active in~$G_1$, and letting the vertices~$V(H)\setminus S$ be active in~$G_2$ dominates $t=2n-k$~vertices of~$\mathcal{G}$.
Conversely, suppose we have a solution for~$(\mathcal{G},k=1,\ell=0,t=2n-k)$.
Let~$A$ be the vertices which are active in~$G_1$ and let~$B$ be the vertices which are active in~$G_2$.
Clearly, $|A|+|B|\leq n$.
Let~$\alpha=|N[A]|$ and~$\beta=|B|$.
By definition~$\alpha+\beta\ge t = 2n-k$.
Since~$\alpha\le n$ and~$\beta\le n$, we obtain~$\alpha\ge n-k$, $\beta\ge n-k$ and thus~$|B|\ge n-k$ and~$|A|\le k$.
Hence, if~$\alpha=n$, $A$ is a dominating set of size at most~$k$ and we are done.
Consequently, in the following it is safe to assume that~$\alpha<n$.
We let~$Z=V(H)\setminus N[A]$ be the vertices of~$G_1$ which are not dominated by~$A$.
Moreover, we let~$\gamma=\beta-(n-k)$.
Note that~$0\le\gamma\le k$.
Now, observe that if~$|Z|>\gamma$, $A$ and~$B$ dominate only $n-|Z|+\gamma +n-k\le 2n-k-1$~vertices, a contradiction.
Thus, $|Z|\le \gamma$.
Now, we claim that~$S=A\cup Z$ is a dominating set of~$H$ of size at most~$k$.
By definition, $S$ is a dominating set.
For the size bound, observe that~$k =\gamma+n-\beta$ and since~$\gamma\ge 0$ we obtain that~$S$ has size at most~$k$.
\end{proof}

\subsection{The Parameter Interval-Membership-Width}

Now, we study the influence of~$\imw$ instead of~$\vimw$.
Initially, we show that the parameter~$\imw+\,k+\ell$ yields a polynomial kernel for \DT.
This also implies a polynomial kernel for~$\vimw+\,k+\ell$ and~$n+k+\ell$ for \DT.
More precisely, we show an even stronger result:
We provide a polynomial kernel for~$q+k+\ell$, where $q$ is the maximal number of edges in any snapshot.
Note that~$q$ is never larger than~$\imw$, since any bag~$F_i$ contains~$E_i$. Moreover, in the temporal graph~$\mathcal{G}$ with a star on~$n$ leaves as underlying graph and lifetime $T=2n$, where the edge towards the~$i$-th leaf is active at snapshots~$i$ and~$i+n$, we have~$\imw=n$ and~$q=1$.

\begin{theorem}
\label{thm-kernel-imw-k-l}
\DT has a kernel where both the number of snapshots and vertices are cubic in~$q+k+\ell$ which can be computed in linear time with respect to the input size.
\end{theorem}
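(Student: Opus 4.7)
The plan is to design three reduction rules that either return \emph{no} or remove pure isolated vertices, and to verify that after exhaustively applying them both $n$ and $T$ are $\Oh((q+k+\ell)^3)$. Since each rule is just a test on easily computable counts, the kernelization will run in linear time in the input size.

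The first rule handles \emph{pure isolated vertices}---vertices $v$ never appearing in any edge. Such a vertex must be active at every time step in any valid timeline, forcing $T \le k(\ell+1)$; if $T > k(\ell+1)$ we return \emph{no}, otherwise $v$'s $k$ intervals can be chosen greedily to cover $[T]$ and $v$ can be deleted from the instance, since its intervals do not interact with the rest. After the rule, every remaining vertex appears in some edge, so $n \le 2|E_\downarrow| \le 2qT$. The second rule tests each vertex's isolation count: if some vertex $v$ is isolated in more than $k(\ell+1)$ snapshots, return \emph{no}, because $v$ would need dedicated activity in each such snapshot while having a total budget of only $k(\ell+1)$ time slots. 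After this rule, every vertex is non-isolated in at least $T - k(\ell+1)$ snapshots, and counting non-isolated (vertex, snapshot) pairs in two ways yields $n\,(T - k(\ell+1)) \le \sum_{i \in [T]} |V(E_i)| \le 2qT$, whence $T \ge 2k(\ell+1)$ implies $n \le 4q$.

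The third rule exploits the global activity budget: every snapshot must contain at least one active vertex (to self-dominate any isolated vertex, or to dominate a non-isolated endpoint of an edge), while the total number of (vertex, time) active pairs in any timeline is at most $nk(\ell+1)$; hence $T \le nk(\ell+1)$, otherwise return \emph{no}. Combining the three rules, if $T \ge 2k(\ell+1)$ then the second rule gives $n \le 4q$ and the third rule gives $T \le 4qk(\ell+1) = \Oh((q+k+\ell)^3)$, while if $T < 2k(\ell+1)$ then $T = \Oh((k+\ell)^2)$ is already cubic and the first rule's bound $n \le 2qT = \Oh((q+k+\ell)^3)$ finishes that case. All three rules can be realized by a single linear-time pass over the temporal graph, precomputing for each vertex its number of isolated snapshots and for each snapshot its number of edges. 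The main step to justify carefully is the correctness of the three \emph{no}-rules---that each stated obstruction really does preclude any valid $k$-activity $\ell$-timeline---while the cubic bounds themselves follow from routine counting.
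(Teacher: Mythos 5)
Your kernelization is correct and achieves the same cubic bounds, but the middle step differs from the paper's argument in an interesting way. The paper splits on whether $q\ge n/2$ or $q<n/2$: in the latter case it observes that $p$ active vertices can dominate at most $p+q$ vertices in a snapshot, so each snapshot needs at least $n/2$ active vertices, which against the global activity budget $nk(\ell+1)$ forces $T\le 2k(\ell+1)$; it then bounds $n$ in that short-lifetime regime by noting that at most $4qk(\ell+1)$ vertices are ever non-isolated (any further vertex is isolated in every snapshot, which trivially decides the instance). You instead bound $n$ via a per-vertex isolation budget: your Rule~2 (a vertex isolated in more than $k(\ell+1)$ snapshots forces a no-answer) followed by double-counting non-isolated vertex--snapshot pairs gives $n\le 4q$ whenever $T\ge 2k(\ell+1)$, and this replaces the paper's domination-capacity observation entirely. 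Your Rule~3 ($T\le nk(\ell+1)$ since every snapshot needs an active vertex) and Rule~1 (pure isolated vertices) coincide with the paper's Case~1 and Case~2.2 respectively. Both routes are elementary counting and yield kernels of the same order ($n,T\in\Oh(qk(\ell+1))$); your per-vertex isolation rule is arguably the more local and reusable lemma, while the paper's capacity bound $p+q$ is the observation that would generalize if one wanted a finer trade-off between the number of active vertices per snapshot and $q$. The only points worth tightening are the trivial boundary cases ($n=0$ for Rule~3, and noting that Rule~2 already subsumes the no-branch of Rule~1), neither of which affects correctness.
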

\begin{proof}
In order to provide the kernel we bound the number of snapshots~$T$ and the number~$n$ of vertices of the underlying graph based on the following case distinction.

\emph{Case~1: $q\ge n/2$.} 
By definition each $k$-activity $\ell$-timeline has active vertices in at most $n\cdot k\cdot (\ell+1)$~snapshots.
Hence, if~$T> 2\cdot q\cdot k\cdot (\ell+1) \ge n\cdot k\cdot (\ell+1)$, then the instance is a no-instance. 

\emph{Case~2.1: $q < n/2$ and $T> 2\cdot k\cdot (\ell+1)+1$.} 
We show that in this case, we are dealing with a no-instance. Since each snapshot contains at most $q$~edges, any set of~$p$ vertices can dominate at most~$p + q$~vertices in this snapshot.
As~$q < n / 2$, it follows that at least~$n/2$~vertices must be active in order to dominate all temporal vertices of the corresponding snapshot~$G_x$.
Moreover, note that the total number of active vertices in any $k$-activity $\ell$-timeline is at most~$z=n\cdot k\cdot (\ell+1)$.
Consequently,~$T\le z/(n/2)\le 2\cdot k\cdot (\ell+1)$ for any yes-instance.

\emph{Case~2.2: $q < n/2$ and $T\le 2\cdot k\cdot (\ell+1)+1$.}
If~$n\le 4\cdot q\cdot k\cdot (\ell+1)$, then we have the desired bound. 
Otherwise, if~$n \ge 4\cdot q\cdot k\cdot (\ell+1) +1$, we can solve the instance in polynomial time as follows. Since each snapshot has at most $q$~edges, at most $2q$~vertices are non-isolated in each snapshot.
Moreover, since there are at most $2\cdot k\cdot (\ell+1)$~snapshots, in total at most $4\cdot q\cdot k\cdot (\ell+1)$~vertices are non-isolated in the temporal graph.
Consequently, since~$n \ge 4\cdot q\cdot k\cdot (\ell+1) +1$ there exists at least one vertex~$v$ which is isolated in every snapshot.
Thus, if~$T\ge k\cdot(\ell+1)+1$ we deal with a trivial no-instance and if~$T\le k\cdot(\ell+1)$ we deal with a trivial yes-instance.

Thus, in each case we either solve the instance (giving a kernel of constant size) or obtain a kernel with at most $2\cdot q\cdot k\cdot (\ell+1)$~snapshots and at most ${4\cdot q\cdot k\cdot(\ell+1)}$~vertices.
Moreover, the kernel can be computed in linear time as it involves only computing~$n$,~$q$, and~$T$.
\end{proof}

We show that the remaining problems are NP-hard if~$\imw + \,k+\ell \in \Oh(1)$.

\begin{theorem}

\label{thm-cov-np-h-imw-k-l}
\CT is NP-hard even if~$k=2$, $\ell=0$, $\imw = 4$, each edge of the underlying graph exists in exactly one snapshot, and the underlying graph has a constant maximum degree.
\end{theorem}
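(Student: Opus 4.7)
My plan is to reduce from $3$-SAT with a bounded number of occurrences per variable, which remains NP-hard; a reduction from $3$-Coloring on bounded-degree graphs as in \Cref{CovTimelineMaxDeg1Hardness} would also be plausible. The structural goal is to build a temporal graph in which (i) each edge appears in exactly one snapshot (so $\imw$ equals the maximum number of edges in any single snapshot), (ii) no snapshot contains more than four edges, and (iii) the underlying graph has maximum degree at most some constant. The key observation I would exploit is that with $k=2$ and $\ell=0$, any vertex of underlying-graph degree $d \geq 3$ must leave at least $d-2$ of its incident temporal edges to be covered by the other endpoint, so the ``skipped'' edges of such a vertex encode a small discrete choice that can be used to represent variable assignments.

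First, I would introduce, for each variable $x_i$, a \emph{variable gadget}: a constant-size collection of vertices whose incident edges are distributed over a block of consecutive snapshots. I would design the gadget so that there are essentially only two admissible activity patterns within it, corresponding to $x_i = \text{TRUE}$ and $x_i = \text{FALSE}$; these patterns leave specific designated edges uncovered at dedicated times, which can then be consumed by clause gadgets. Second, for each clause $C_j = \ell_1 \vee \ell_2 \vee \ell_3$, I would introduce a \emph{clause gadget} containing a critical edge $e_j$ at a private clause-time $t_j$, wired to the three variable gadgets so that $e_j$ is covered at time $t_j$ only if at least one of $\ell_1, \ell_2, \ell_3$ is in the satisfying state. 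To keep $\imw \leq 4$, I would interleave variable and clause edges over sufficiently many snapshots and, where necessary, insert empty buffer snapshots; to keep the underlying maximum degree constant, each vertex would have only a constant number of incident edges, and long-distance consistency between different occurrences of the same variable would be propagated via chains of low-degree auxiliary vertices rather than through a single high-degree ``anchor''. The correctness proof would then split into the standard two directions: a satisfying assignment yields a covering $2$-activity $0$-timeline by selecting the corresponding TRUE/FALSE pattern in each variable gadget, and conversely any covering timeline induces a truth assignment that satisfies every clause via its covered critical edge.

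The main obstacle I expect is reconciling $\imw \leq 4$ with the bounded underlying maximum degree. Standard SAT or $3$-Coloring gadgets tend to use either many edges at a single time (violating $\imw \leq 4$) or a single high-degree coordinator vertex (violating the degree bound). My construction must therefore be genuinely \emph{serial}: information from a gadget has to be propagated through a sequence of snapshots, each carrying only a few edges, using chains of degree-bounded auxiliary vertices whose activities are forced in a controlled way. Getting this cascading mechanism to unambiguously encode exactly two variable states---neither permitting a third ``mixed'' pattern nor accidentally making every assignment feasible---is where I would expect most of the effort to go, and is the point where the argument is most likely to require non-trivial case analysis.
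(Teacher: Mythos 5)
Your proposal is a strategy outline rather than a proof, and the part you yourself flag as the crux --- designing a constant-size variable gadget with exactly two admissible activity patterns, and a degree-bounded, $\imw\le 4$ propagation mechanism that synchronizes the occurrences of a variable --- is exactly what is missing. The difficulty is real: with $k=2$, $\ell=0$ and every edge in a private snapshot, a chain vertex that is forced to spend one activity on an incoming edge still has one activity left, so a naive path propagates an ``at least one endpoint pays'' disjunction rather than a deterministic state, and ruling out mixed patterns requires a concrete gadget you have not exhibited. So as it stands there is a genuine gap; the reduction cannot be checked for correctness because the objects it quantifies over are not defined.

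For comparison, the paper avoids serial propagation entirely. It reduces from \prb{$3$-Coloring} on maximum-degree-four graphs and replaces each vertex~$v$ by a constant-size gadget of $16$ vertices: four ``port'' copies $c_v^1,\dots,c_v^4$ for each color $c\in\{r,y,b\}$ and four centers $x_v^1,\dots,x_v^4$. For each $i\in[4]$ there are three snapshots, each a $4$-edge star with center $x_v^i$ and leaves $\{c_v^j\colon j\in[4]\}$ for one color $c$; each original edge $e=uv$ contributes three single-edge snapshots $G_{e,c}$ joining matching ports. Since $k=2$, each center $x_v^i$ can serve at most two of its three stars, so for some color all four leaves must be active in the third star; doing this for all four values of $i$ forces, by pigeonhole over three colors, one color class of~$v$ to exhaust both activities inside the gadget, and that color class then cannot cover its edge snapshots --- which is precisely the coloring constraint. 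The synchronization you hoped to achieve with auxiliary chains is thus obtained by a counting argument inside a single constant-size gadget, which is what makes $\imw=4$ and constant underlying degree compatible. If you want to salvage your SAT-based route you would need to supply gadgets achieving an analogous ``exhaustion'' effect; the pigeonhole-over-stars trick is the idea your sketch is missing.
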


\begin{proof}
We reduce from the NP-hard \prb{$3$-Coloring} on graphs with maximum degree four~\cite{DBLP:conf/stoc/GareyJS74}.

	\textit{Construction}: Let $(G= (V,E))$ be an instance of \prb{$3$-Coloring} on graphs with maximum degree four.
We construct the following temporal graph~$\mathcal{G}$.
	For each vertex $v \in V$, we fix an arbitrary ordering of the edges incident with~$v$ and add $16$ vertices.
Formally, we set
	\begin{align*}
		V(\mathcal{G}) := \bigcup_{v\in V}  \, \{r_v^i,y_v^i,b_v^i, x_v^i\colon i\in [4]\}.
	\end{align*}
	The vertex sets~$R_v := \{r_v^i\colon i\in [4]\}$, $Y_v := \{y_v^i\colon i\in [4]\}$, and~$B_v := \{b_v^i\colon i\in [4]\}$ will indicate the color choices of the vertex.
	The temporal graph $\mathcal{G}$ consists of $12|V| + 3|E|$ snapshots.
In the following we formally define the snapshots.
Note that the order of the snapshots does not matter for the instance, since~$\ell = 0$ and each edge will appear only in a single snapshot.

	For each vertex~$v\in V$, each color~$c\in \{r,y,b\}$, and each~$i\in [4]$, we add a snapshot~$G_{v,c,i}$ with edge set~$\{x_v^ic_v^j\colon j\in [4]\}$.
 For each edge~$e=\{u,v\}\in E$ and each color~$c\in \{r,y,b\}$, we add a snapshot~$G_{e,c}$ that has only the single edge~$c_u^ic_v^j$, such that~$e$ is the~$i$th edge incident with~$u$ in~$G$ and the~$j$th edge incident with~$v$ in~$G$.
	This completes the construction of~$\mg$.
Note that each edge of the underlying graph exists in exactly one snapshot:
Edges incident with a vertex~$x_v^i$ only exist in the three snapshots~$G_{v,r,i}$, $G_{v,y,i}$, and~$G_{v,b,i}$, and all these three snapshots are pairwise edge disjoint.
All other edges are part of only a single snapshot by construction of the snapshots~$G_{e,c}$.
Moreover, note that each vertex~$q$ of~$R_v\cup Y_v\cup B_v$ has at most one snapshot outside of~$\{G_{v,c,i}\colon c\in \{r,y,b\}, i \in [4]\}$ in which it has an active incident edge.
If such a snapshot exists, we may call it the~\emph{relevant} snapshot of~$q$.
Finally, we set~$k=2$ and~$\ell=0$.

\textit{Intuition}:
Let~$v\in V$.
For each~$i\in [4]$, to cover all edges of the snapshots~$\{G_{v,c,i}\colon c\in \{r,y,b\}\}$ the temporal vertex~$x_v^i$ can be used only twice, which implies that for some~$C\in \{R,Y,B\}$, all vertices of~$C_v$ are part of the timeline~$\mt$ which covers all temporal edges of~$\mg$ for at least one of the snapshots~$\{G_{v,c,i}\colon c\in \{r,y,b\}\}$.
Since~$i$ has 4 possible values and each vertex can be part of~$\mt$ for at most two snapshots, there is some~$C\in \{R,Y,B\}$, such that all vertices of~$C_v$ in~$\mt$ are from the snapshots~$\{G_{v,c,i}\colon c\in \{r,y,b\},i\in[4]\}$.
As we will show, this resembles a color choice for which we can show that it implies a proper~$3$-coloring of~$G$.
Intuitively, this is the case since for the selected color~$c$, for each edge~$e=vw$ of~$G$, the unique temporal edge of~$G_{e,c}$ must be covered, which can only be achieved by a vertex of~$C_w$.
Thus, the color of~$w$ cannot be~$c$ simultaneously.

\textit{Correctness}:
We now show that~$G$ is 3-colorable if and only if there is a~$k$-activity $\ell$-timeline covering all temporal edges of~$\mg$.

$(\Rightarrow)$
Let~$\pi\colon V \to \{r,y,b\}$ be a proper~$3$-coloring of~$G$.
We define a $k$-activity $\ell$-timeline~$\mt$ covering all temporal edges of~$\mathcal{G}$ as follows.
We initialize~$\mt=\emptyset$.
For a fixed vertex~$v\in V$ we let~$c := \pi(v)$, and we let~$\{\alpha,\beta\} = \{r,y,b\} \setminus \{c\}$.
We add the following temporal vertices to~$\mt$:
\begin{enumerate}
\item\label{item 1} the vertices of~$\{c_v^i\colon i\in [4]\}$ from both snapshots~$G_{v,c,1}$ and~$G_{v,c,2}$,
\item\label{item 2} the vertex~$x_v^1$ from both snapshots~$G_{v,\alpha,1}$ and~$G_{v,\beta,1}$,
\item\label{item 3} the vertex~$x_v^2$ from both snapshots~$G_{v,\alpha,2}$ and~$G_{v,\beta,2}$,
\item\label{item 4} the vertex~$x_v^3$ from both snapshots~$G_{v,c,3}$ and~$G_{v,\alpha,3}$,
\item\label{item 5} the vertex~$x_v^4$ from both snapshots~$G_{v,c,4}$ and~$G_{v,\beta,4}$,
\item\label{item 6} the vertices of~$\{\alpha_v^i\colon i\in [4]\}$ from snapshot~$G_{v,\alpha,4}$,
\item\label{item 7} the vertices of~$\{\beta_v^i\colon i\in [4]\}$ from snapshot~$G_{v,\beta,3}$, and
\item\label{item relevant} each vertex~$q$ of~$\{\alpha_v^i,\beta_v^i\colon i\in [4]\}$ from its relevant snapshots (if such a snapshot exists).
\end{enumerate}
Note that each vertex of the underlying graph is part of at most two snapshots in~$\mt$.
We now show that in each snapshot, for each temporal edge, at least one endpoint is in~$\mt$.

To this end, observe that for each vertex~$v$, each temporal edge of any of the snapshots of~$\{G_{v,c,i}\colon c\in \{r,y,b\}, i \in [4]\}$ is covered by exactly one temporal vertex described in \Cref{item 1,item 2,item 3,item 4,item 5,item 6,item 7}.
Thus, consider a temporal edge~$(e',t)$ that is not incident with any vertex~$x_w^i$.
By definition of all snapshots outside of~$\{G_{v,c,i}\colon c\in \{r,y,b\}, i \in [4]\}$, $e' = c_u^ic_v^j$ for some edge~$e=uv\in E$ and a color~$c\in \{r,y,b\}$, such that~$e$ is the~$i$th edge incident with~$u$ in~$G$ and the~$j$th edge incident with~$v$ in~$G$.
In particular, $G_{e,c}$ is the unique snapshot in which~$e'$ is active, implying that this is the relevant snapshot of both~$c_u^i$ and~$c_v^j$.
Since~$\pi$ is a proper coloring of~$G$, there at least one~$w\in \{u,v\}$, such that~$\pi(w) \neq c$.
Without loss of generality assume that~$w=v$.
Thus, by~\Cref{item relevant}, $(c_v^j,t)\in \mt$.
Hence, the temporal edge~$(e',t)$ is covered, implying that~$\mt$ is a~$k$-activity $\ell$-timeline covering all temporal edges of~$\mathcal{G}$.

$(\Leftarrow)$
Let~$\mt$ be a $k$-activity $\ell$-timeline that covers all temporal edges of~$\mg$.
First, we show that for each vertex~$v\in V$, there is at least one color~$C \in \{R,Y,B\}$, such that none of the vertices of~$C_v$ is part of~$\mt$ in its respective relevant snapshot.
To this end, note that for each~$i\in [4]$, $x_v^i$ can only be part of~$\mt$ for at most two of the three snapshots~$G_{v,r,i}$, $G_{v,y,i}$, and~$G_{v,b,i}$.
This is due to the fact that~$k = 2$.
Since in each of these three snapshots, the edge set corresponds to a star with center~$x_v^i$ and leaf set~$\{c_v^i\colon i\in [4]\}$ for each~$c\in \{r,y,b\}$, this implies that, in order to have a vertex cover in each snapshot, for at least one set~$C_v\in \{R_v,Y_v,B_v\}$, all vertices of~$C_v$ are part of~$\mt$ for at least one snapshot of~$G_{v,r,i}$, $G_{v,y,i}$, and~$G_{v,b,i}$.
Since this property is true for each~$i\in[4]$, for at least one of the sets~$C_v\in \{R_v,Y_v,B_v\}$, all vertices of~$C_v$ use both possible appearances in~$\mt$ in the snapshots of~$\{G_{v,c,i}\colon c\in \{r,y,b\}, i \in [4]\}$.

With this property at hand, we can now define a~$3$-coloring of~$G$.
Let~$\pi\colon V \to \{r,y,b\}$ where for each vertex~$v\in V$, $\pi(v)$ is an arbitrary fixed color~$c\in \{r,y,b\}$ for which all vertices of~$\{c_v^j\colon j\in [4]\}$ use both possible appearances in~$\mt$ in the snapshots of~$\{G_{v,c,i}\colon c\in \{r,y,b\}, i \in [4]\}$.
By the above argumentation, $\pi$ is properly defined.
It remains to show that for each edge~$uv\in E$, $\pi(u) \neq \pi(v)$.
Assume towards a contradiction that there is an edge~$e=uv\in E$ with~$\pi(u) = \pi(v)$.
Let~$c := \pi(u)$, and let~$i$ and~$j$ be such that~$e$ is the~$i$th edge incident with~$u$ in~$G$ and~$e$ is the~$j$th edge incident with~$v$ in~$G$.
Then, by definition, $G_{e,c}$ is a snapshot of~$\mg$ that contains exclusively the edge~$c_u^ic_v^j$.
By definition of~$\pi$ and since~$\pi(u) = \pi(v) = c$, for each of the two vertices~$c_u^i$ and~$c_v^j$, both possible appearances in~$\mt$ are only in the snapshots of~$\{G_{v,c,i}\colon c\in \{r,y,b\}, i \in [4], v\in V\}$.
Hence, $c_u^ic_v^j$ is not covered by~$\mt$.
This contradicts the fact that~$\mt$ covers all temporal edges of~$\mg$.
Thus, we obtain~$\pi(u) \neq \pi(v)$ which implies that~$\pi$ is a proper~$3$-coloring of~$G$.

\end{proof}

\begin{theorem}

\label{thm-pdt-np-h-imw-k-l}
	\PDT is NP-hard even if~$k=1$, $\ell=0$, $\imw = 36$ and the underlying graph has a constant maximum degree.
\end{theorem}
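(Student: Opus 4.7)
I will reduce from \DS on cubic graphs, which is NP-hard. Given a cubic graph $H = (V,E)$ and an integer $k$, I aim to construct a \PDT instance $(\mathcal{G}, 1, 0, t)$ with $\imw \le 36$, bounded maximum degree in the underlying graph, and a target $t$ such that $H$ admits a dominating set of size at most $k$ if and only if the constructed \PDT instance is a yes-instance.

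The two-snapshot reduction of \Cref{PDomTimeLine_vimw2+k+l} packs all of $E(H)$ into a single snapshot and hence has $\imw = |E(H)|$. To avoid this, I plan to distribute the edges of $H$ across many snapshots via a proper edge-coloring with at most $\Delta(H)+1 = 4$ colors (Vizing's theorem). For each vertex $v \in V(H)$ I introduce four copies $v^1, \ldots, v^4$, one per color; for each edge $uv$ of color~$c$, I put the edge $u^c v^c$ in its own color-$c$ snapshot, so that every snapshot has maximum degree one and every edge of the underlying graph appears in exactly one snapshot. To emulate the in/out dominating-set choice of the earlier reduction across the four copies, I attach to each $v \in V(H)$ a constant-size synchronization gadget living in $O(1)$ consecutive snapshots. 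The gadget is designed so that its maximal contribution to the dominated count is attained exactly when the copies $v^1, \ldots, v^4$ are activated consistently: either all in their respective color snapshots (encoding $v \in D$) or all inactive (encoding $v \notin D$). An additional edgeless penalty snapshot plays the role of $G_2$ from \Cref{PDomTimeLine_vimw2+k+l}, ensuring that the threshold $t$ can only be achieved when at most $k$ vertices of $H$ are chosen in. Correctness then follows by the same exchange argument: any solution meeting $t$ can be standardized into an activation pattern corresponding to a dominating set of $H$ of size at most $k$.

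The main obstacle will be designing the synchronization gadget so that (i) it forces consistent behavior of $v^1,\ldots,v^4$ at the optimum, (ii) it uses only $O(1)$ vertices and edges per vertex of $H$, and (iii) every bag $F_i$ of the interval-membership sequence contains at most $36$ edges. Condition~(iii) demands interleaving the color snapshots, the synchronization gadgets, and the penalty snapshot so that no time step has more than a bounded number of edges simultaneously alive, where an edge is alive at time $i$ if it appears both at some time $p \le i$ and some time $q \ge i$. A careful layout that keeps each synchronization gadget within a constant-size window of consecutive snapshots, combined with a tight accounting of the contributions from the color matchings, will yield the claimed bound $\imw \le 36$.
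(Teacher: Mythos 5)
Your proposal is a plan rather than a proof, and it defers exactly the step on which the whole reduction hinges: the synchronization gadget. With $k=1$ and $\ell=0$ each copy $v^1,\dots,v^4$ has a single length-zero activity interval, so you must force four independent vertices to make a consistent binary choice while contributing a \emph{state-independent} amount to the dominated count (otherwise the threshold arithmetic leaks). This is not a routine detail; the paper's proof of \Cref{thm-pdt-np-h-imw-k-l} spends most of its effort on precisely this task for just \emph{two} copies per literal, using crown graphs on $6+6$ vertices spread over six snapshots together with a delicate exchange argument, and the constant $36$ is exactly the edge count of those crowns. You assert that a gadget with properties (i)--(iii) exists but give no construction, so the claimed bound $\imw \le 36$ and the "standardization" of solutions are both unsupported.

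There is also a semantic obstacle you do not address, which I believe makes the \DS-based route substantially harder than you anticipate. In \Cref{PDomTimeLine_vimw2+k+l} a single active vertex $v$ dominates its entire closed neighborhood in one snapshot, which is what lets the count $2n-k$ encode "every vertex has at least one dominator''. Once you split $v$ into four copies living in four different color snapshots, a vertex $w\notin D$ that is dominated in $H$ by one neighbor $u\in D$ via a $c$-colored edge has only the single copy $w^c$ dominated; its other copies $w^{c'}$ sit in other snapshots with other neighbors and are dominated only if those neighbors also happen to be in $D$ (or if $w^{c'}$ spends its one interval there, which collides with your penalty-snapshot accounting). The partial count therefore rewards $w$ proportionally to the \emph{number} of its dominators in $D$, not the indicator "at least one'', and the target $t$ cannot be set to separate yes- from no-instances without an additional equalization gadget. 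This is exactly why the paper reduces from \prb{3-Sat-(2,2)} instead: the bounded number of occurrences per variable caps $\imw$ for free, and the clause gadget's auxiliary vertices $U_j$ are introduced specifically "to make it also worth setting two or three literals to true'', i.e., to flatten the contribution so that only the at-least-one condition matters. You would need an analogue of that equalization for between one and three dominators per vertex of $H$, plus a treatment of self-domination and of the fourth, unused color class at each degree-$3$ vertex; none of this is present in the proposal.
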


\begin{proof}
	We give a reduction from \prb{3-Sat-(2,2)}.
	In this 3-SAT variant each variable appears exactly two times positively and exactly two times negatively.
	This problem is NP-hard \cite{DBLP:journals/dam/DarmannD21}.

	\textit{Construction}: Let~$\phi$ be an instance of \prb{$3$-Sat-(2,2)} with variable set~$X := \{x_1, \dots, x_n\}$ and clause set~$C:= \{C_1, \dots, C_m\}$.
	We fix some order on the set of clauses which induces an order on the (non-)negated appearances of each variable.
	As a result, we can denote the first and second nonnegated appearance of~$x_i$ by~$x_{i}^1$ and~$x_{i}^2$ and the first and second negated appearances by~$\overline{x}_{i,1}$ and~$\overline{x}_{i,2}$.
	We construct the following temporal graph~$\mg$.
	For each variable~$x_i$ we introduce two sets with seven vertices
	\begin{equation*}
		V_i^1 := \{x_i^1, \overline{x}_i^1, p_i^1, q_i^1, r_i^1, s_i^1, t_i^1\}, \\
		V_i^2 := \{x_i^2, \overline{x}_i^2, p_i^2, q_i^2, r_i^2, s_i^2, t_i^2\}
	\end{equation*}
	and add six snapshots~$G_{6i - 5}, \dots, G_{6i}$.
	The first three of these six snapshots only contain a crown graph on~$(V_i^1 \cup V_i^2) \setminus \{\overline{x}_i^1, \overline{x}_i^2\}$, that is, a biclique with partite sets~$V_i^1\setminus\{\overline{x}_i^1\}$ and~$V_i^2\setminus\{\overline{x}_i^2\}$ without the edges of the form~$v_i^1v_i^2$ where~$v\in\{x,p,q,r,s,t\}$.
	We remark that each pair of the form~${v^1v^2}$ in the crown graph dominates the whole crown graph.
	For the remaining three snapshots we do the same but for the set~$(V_i^1 \cup V_i^2) \setminus \{x_i^1, x_i^2\}$.
	In the following, these 6~snapshots for variable~$x_i$ are denoted by~$X_i$.
	We denote these as the \emph{variable snapshots}.
		Note that this constructs the first~$6n$ snapshots of our temporal graph~$\mg$.

	For each clause~$C_j$ we add six vertices
	\begin{equation*}
		U_j := \{a_j,a_j', b_j, b_j', c_j, c_j'\}
	\end{equation*}
	and introduce one snapshot~$G_{6n + j}$ to which we refer by~$G_{C_j}$.
	This snapshot contains a triangle on the three vertices corresponding to the literal appearances in~$C_j$ (recall that the vertex~$x_i^j$ stands for the $j$-th appearance of~$x_i$).
	Furthermore, we make each of the three vertices adjacent to two different vertices of~$U_j$.
	This constructs the snapshots~$G_{6n + 1}, \dots, G_{6n + m}$ which we call \emph{clause snapshots}.

	Finally, for every~$j \in [m]$ we add six more snapshots~$G_{6n +m + 6j - 5}, \dots, G_{6n + m + 6j}$.
	Here, the first three of the six snapshots only contain the triangle on~$\{a_j, b_j, c_j\}$ and the remaining three only contain the triangle on~$\{a_j', b_j', c_j'\}$.
	We refer to these six snapshots by~$Z_j$.
	These are the \emph{dummy snapshots}.
	This finishes the construction of our temporal graph~$\mg = (G_1, \dots, G_{6n + 7m})$.

	Observe that the interval-membership-width of~$\mg$ is exactly the number of edges in the crown graphs and therefore we have~$\imw = 36$.

	Finally, we set~$k = 1, \ell = 0, t = 76n + 21m$ and obtain the \PDT-instance~$\mathcal{I} = (\mg, k, \ell, t)$.

	\textit{Intuition}:
	Intuitively, the six variable snapshots for each variable form a selection gadget.
	In particular, the size of the crown graphs and the existence of the other vertices from~$V_i^1 \cup V_i^2$ force that either both~$x_i^1$ and~$x_i^2$ or both~$\overline{x}_i^1$ and~$\overline{x}_i^2$ are active in~$X_i$ and therefore not in the clause snapshots.
	Dominating the triangle in the snapshot~$G_{C_j}$ corresponds to satisfying the clause~$C_j$ and the vertices from~$U_j$ ensure that it is also worth to set two or three literals to true in one clause.
	The last~$6m$ dummy snapshots are constructed with the purpose to take care of the vertices from~$U_j$ such that they do not influence the clause snapshots.

	\textit{Correctness}:
	We proceed by showing that~$\phi$ is satisfiable if and only if there is a~$k$-activity $\ell$-timeline~$\mt$ for~$\mg$ which dominates at least~$t = 76n + 21m$ temporal vertices.

	$(\Rightarrow)$
	Let~$\alpha$ be a satisfying assignment of~$\phi$.
	If~$\alpha (x_i) = 1$, then we place the activity intervals of~$(V_i^1 \cup V_i^2) \setminus \{x_i^1, x_i^2\}$ such that all constructed crown graphs in~$X_i$ are dominated.
	Note that this is possible since~$|V _i^1 \cup V_i^2| = 14$ and there are six snapshots where one pair of vertices~$\{v_i^1, v_i^2\}$ for~$v\in \{\overline{x}, p, q, r, s ,t\}$ in each snapshot is enough to dominate the crown graphs.
	Moreover, we let~$x_i^1$ and~$x_i^2$ be active in the snapshot corresponding to the clause in which they are contained.
	This means if~$x_{i,1} \in C_j$, then we let~$x_i^1$ be active in the snapshot~$G_{C_j}$.
	If~$\alpha (x_i) = 0$, then we do the same but for~$V_i^1 \cup V_i^2 \setminus \{\overline{x}_i^1, \overline{x}_i^2\}$ and~$\overline{x}_i^1, \overline{x}_i^2$.
	In particular, this already dominates 76 temporal vertices for each variable via the crown graphs in all variable snapshots and the four temporal vertices from the sets~$U_j$ which are dominated in the clause snapshots.

	Since~$\alpha$ is a satisfying assignment, we conclude that in all clause snapshots~$G_{C_j}$ the triangle, which corresponds to the clause, is dominated.
	This gives us~$3m$ additionally dominated temporal vertices in the clause snapshots.

	Finally, for every~$j \in [m]$ we use the activities of the six vertices from~$U_j$ to dominate the triangle in every of the six snapshots in~$Z_j$.
	This yields 18 dominated temporal vertices for each~$j \in [m]$ in the dummy snapshots and therefore the solution dominates $ 76n + 21m$~temporal vertices in total.

	$(\Leftarrow)$
	Let~$\mathcal{T}$ be a solution to our \PDT-instance and suppose~$\mathcal{T}$ dominates at least $ 76n + 21m$~temporal vertices.

	First of all, observe that for each~$j \in [m]$ we can assume that every vertex from~$U_j$ is active in some dummy snapshot in~$Z_j$.
	This is due to the fact that every vertex from~$U_j$ has degree at most two in all snapshots and in~$Z_j$ each of them can dominate three temporal vertices without affecting any vertex outside of~$U_j$.
	This yields~$18m$ dominated temporal vertices.
	Consequently, no vertex of any~$U_j$ is active during any variable snapshot or any clause gadget.

	Second, observe that it is safe to assume that all vertices from~$V_i^1 \cup V_i^2$ are either active during a variable snapshot in~$X_i$ or in some clause snapshot since in every other snapshot each vertex in~$V_i^1 \cup V_i^2$ is isolated.
	We now show that in every snapshot in~$X_i$ exactly one pair~$\{v_i^1,v_i^2\} \subseteq V_i^1 \cup V_i^2$ for some~$v\in \{x,\overline{x}, p, q, r, s ,t\}$ is active.

	If there is a snapshot~$G_{j}$ associated with~$X_i$ in which no vertex of~$V_i^1 \cup V_i^2$ is active, then there exists a vertex~$v_i^1 \in V_i^1$ from the crown graph which is active $a)$ in a snapshot outside of~$X_i$ or $b)$ in a snapshot in~$X_i$ together with some other vertex of~$V_i^1$ since exactly 6 snapshots are associated with~$X_i$.
	In this case, making~$v_i^1$ active in~$G_{j}$ yields a solution which dominates strictly more vertices since in case $a)$ only one temporal vertex is dominated and in case $b)$ making~$v_i^1$ inactive reduced the number of dominated temporal vertices by only one.
	Thus, in the following, it is safe to assume that in each snapshot~$G_{j}$ associated with~$X_i$ at least one vertex of~$V_i^1$ is active.

Now, suppose there exist a snapshot~$G_j$ associated with~$X_i$ such that only vertices of~$V_i^1$ are active and no vertex of~$V_i^2$.
Since in each snapshot associated with~$X_i$ at least one vertex of~$V_i^1$ is active, in~$G_j$ at most 2~vertices of~$V_i^1$ are active, without loss of generality say~$p_i^1$ and~$q_i^1$.
Now, consider vertex~$p_i^2$.
Making~$p_i^2$ active during~$G_j$ dominates the 5~temporal vertices~$V_i^1\setminus\{p_i^1,q_i^1\}$.
Next, consider the snapshot~$G_z$ where~$p_i^2$ is currently active.
If~$G_z$ is not associated with~$X_i$, then~$p_i^2$ is isolated in~$G_z$.
Otherwise, if~$G_z$ is associated with~$X_i$ at least one vertex of~$V_i^1\setminus\{p_i^1,q_i^1\}$, say~$s_i^1$, is active during~$G_z$.
Hence, making~$p_i^2$ inactive at~$G_z$ only makes (a subset of the) 5~temporal vertices~$V_i^1\setminus\{p_i^1,s_i^1\}$ undominated.
Consequently, it is safe to assume that~$p_i^2$ is active during~$G_j$.
Thus, we have now shown that it is safe to assume that in every snapshot in~$X_i$ at least one pair~$\{v_i^1,v_i^2\} \subseteq V_i^1 \cup V_i^2$ for some~$v\in \{x,\overline{x}, p, q, r, s ,t\}$ is active.

Since any pair~$\{v_i^1,v_i^2\} \subseteq V_i^1 \cup V_i^2$ for some~$v\in \{x,\overline{x}, p, q, r, s ,t\}$ dominates all temporal vertices of the crown, it is safe to assume that per associated snapshot with~$X_i$ exactly one pair~$\{v_i^1,v_i^2\} \subseteq V_i^1 \cup V_i^2$ is active.
					As this holds for every~$i \in [n]$, we know that by construction at most~$4n + 3m$ temporal vertices in the clause snapshots can be dominated.
	Consequently, the solution~$\mathcal{T}$ has to dominate in total $76n$ temporal vertices in the parts~$X_i$.
	However, this can only be the case if exactly one pair~$\{v^1,v^2\}\subseteq V_i^1 \cup V_i^2$ is active in every snapshot of~$X_i$.
	Since each vertex from~$V_i^1 \cup V_i^2 \setminus \{x_i^1, x_i^2, \overline{x}_i^1, \overline{x}_i^2\}$ appears only isolated outside of~$X_i$ and each vertex from~$\{x_i^1, x_i^2, \overline{x}_i^1, \overline{x}_i^2\}$ dominates at least two vertices from~$U_j$ in the corresponding clause snapshot, it follows that either~$\{x_i^1, x_i^2\}$ or~$\{\overline{x}_i^1, \overline{x}_i^2\}$ are not active in~$X_i$ since~$\mathcal{T}$ maximizes the number of dominated temporal vertices.

	Finally, this allows us to define an assignment~$\alpha$ by setting~$\alpha (x_i) = 1$ if and only if~$\{x_i^1, x_i^2\}$ are not active in~$X_i$.
	To see that this is a satisfying assignment, recall that~$\mathcal{T}$ dominates at least $76n + 21m$ temporal vertices.
	By the reasoning above it follows that~$\mathcal{T}$ dominates in total at least~$4n + 3m$ temporal vertices in the clause snapshots.
	Clearly, this can only be the case if in every clause snapshot one of the vertices of the contained triangle is active, which by definition of~$\alpha$ means that~$\alpha$ sets at least one literal appearance in every clause to~$1$.
\end{proof}

\section{Complexity with respect to Input Parameters}
\label{sec-input-paras}

Finally, we study the influence of the input parameters~$n,k,\ell$, and~$t$ on the complexity of \PCTT and \PDTT.
Both \CT and \DT are NP-hard for constant values of~$k$ and~$\ell$.
Consequently, larger parameters need to be considered to obtain FPT-algorithms or XP-algorithms.
Froese et al.~\cite[Theorem~8]{DBLP:journals/mst/FroeseKZ24} showed that \CT admits an FPT-algorithm for~$n+k$.
A~similar algorithm also works for \DT.

\begin{theorem}

	\label{DomTimeline_n+k}
	\DT is solvable in time $\Oh(n^{2 + nk}T)$.
\end{theorem}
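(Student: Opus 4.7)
The plan is to adapt the brute-force FPT-algorithm of Froese et al.~\cite{DBLP:journals/mst/FroeseKZ24} (Thm.~8 there, for \CT) to the domination setting. Since $n$ is small, we can afford to enumerate a carefully pruned space of candidate timelines and then test each one directly.

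The first step is a normalization lemma that restricts the starting times of activity intervals to a set~$S \subseteq [T]$ of size $\Oh(n)$. We would take $S$ to consist of~$1$, $T$, and, for each vertex~$w \in V(\mathcal{G})$, the first and last time step at which~$w$ is non-isolated; this gives~$|S| \le 2n+2$. We would then prove by a sliding argument that for any~$k$-activity $\ell$-timeline~$\mathcal{T}$ dominating all temporal vertices, there exists a~$k$-activity $\ell$-timeline~$\mathcal{T}'$ dominating all temporal vertices in which every interval~$(v,a,b)$ satisfies~$a \in S$: between two consecutive elements of~$S$, no vertex changes its isolation status, so shifting an interval leftward to the nearest canonical start time cannot un-dominate any previously dominated temporal vertex (the only domination that could be lost concerns isolated temporal vertices of~$v$ itself or of its neighbors, but the candidate set~$S$ was chosen precisely to capture these transitions).

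Given the normalization lemma, the algorithm enumerates, for each vertex~$v$, an unordered choice of at most~$k$ start times from~$S$, yielding at most~$(|S|+1)^k = \Oh(n^k)$ options per vertex, and thus at most~$\Oh(n^{nk})$ candidate timelines overall. For each start time~$a \in S$ we set the end time to~$\min(a+\ell, T)$. For each candidate timeline, we verify in~$\Oh(n^2 T)$ time whether all temporal vertices are dominated: iterate over the~$T$ snapshots and, for each of the~$n$ vertices, inspect its~$\Oh(n)$ candidate dominators. Multiplying these bounds yields the claimed running time~$\Oh(n^{nk} \cdot n^2 \cdot T) = \Oh(n^{2+nk} T)$.

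The main obstacle is the normalization lemma. While the sliding argument works cleanly for \CT because edge-covering is monotone in the active set at each snapshot, for \DT the analogous argument must simultaneously handle isolated vertices, which can only be dominated by themselves. The candidate set~$S$ therefore has to be chosen to include all transitions between isolated and non-isolated status for every vertex, and the shifting argument must ensure that shifting to the nearest element of~$S$ preserves coverage of such ``self-dominating'' requirements. A careful case distinction on whether the leftmost/rightmost time step of each interval serves a non-isolated vertex or a self-isolated vertex will complete the lemma.
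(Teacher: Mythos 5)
Your overall strategy (normalize start times to a small canonical set~$S$, then enumerate $\Oh(|S|^{nk})$ candidate timelines and verify each) is genuinely different from the paper's proof, which uses a depth-bounded search tree: it repeatedly finds the first undominated temporal vertex~$(v,i)$ and branches over the at most~$n$ vertices~$u \in N_{G_i}[v]$, adding the interval~$(u,i,i+\ell)$ and recursing. That branching rule generates exactly the start times that are needed on the fly, so no normalization lemma is required there.

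The gap in your version is that the normalization lemma is false for your choice of~$S$. Take a single vertex~$v$ that is isolated in every snapshot (or non-isolated only at time~$1$), with $T=10$, $k=2$, $\ell=4$. Then $S=\{1,T\}$, but the only dominating $2$-activity $4$-timeline uses the intervals $[1,5]$ and $[6,10]$, and the start time~$6$ is not in~$S$; your enumeration would wrongly reject this yes-instance. The underlying problem is twofold. First, $S$ records only the first and last non-isolated appearance of each vertex, so isolation-status transitions in between are not captured, and a vertex can have up to~$T-1$ such transitions, so the fix you sketch (``include all transitions'') does not keep $|S| \in \Oh(n)$. Second, and more fundamentally, the sliding argument itself breaks: shifting $(v,a,a+\ell)$ left to $(v,a',a'+\ell)$ loses coverage on $[a'+\ell+1,\,a+\ell]$, and when~$v$ must be continuously self-active over a long isolated stretch (or must dominate a neighbor precisely at the right endpoint of the interval), the forced start times lie on an arithmetic progression $a,\,a+\ell+1,\,a+2(\ell+1),\dots$ that is not determined by any $\Oh(n)$-sized set of transition times. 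The paper's leftmost-undominated-vertex branching sidesteps exactly this: after placing $[i,i+\ell]$ on an isolated vertex, the next undominated temporal vertex of that vertex sits at time $i+\ell+1$, so the progression is produced automatically. To salvage your proof you would either need a provably sufficient canonical set of size $\Oh(n)$ per interval slot (which the examples above rule out as stated) or switch to the adaptive branching formulation.
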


\begin{proof}
	Let $(\mathcal{G} = (G_1, \dots, G_T), k, \ell)$ be an instance of \DT.

	The algorithm starts with an empty solution candidate $\mathcal{T} = \emptyset$ and recursively adds activity intervals to this set.
Furthermore, it keeps one counter $k_v$ for each vertex $v \in V(\mathcal{G})$ for the number of activity intervals.

	If $k_v > k$ for some $v \in V$, then return no.
If $\mathcal{T}$ dominates all temporal vertices in $\mathcal{G}$, then return yes.
Otherwise, let $i$ be the minimum time step such that $G_i$ contains an undominated vertex $v$.
For each $u \in N_{G_i}[v]$ the algorithm branches by adding $(u,i,i+\ell)$ to~$\mathcal{T}$, increasing $k_u$ by one and continuing recursively.

	The size of the closed neighborhood of $v$ in $G_i$ is at most $n$.
In each recursive call the algorithm selects one activity interval and a solution can contain at most~$nk$ activity intervals.
Consequently, the search tree is of size $\Oh(n^{nk})$.
For each recursive call, the algorithm needs to find the first undominated temporal vertex.
This can be done in $\Oh(n^2T)$ time and therefore we obtain an overall running time of~${\Oh(n^{nk} \cdot n^2T)}$.
\end{proof}

Froese et al.~\cite[Theorem~12]{DBLP:journals/mst/FroeseKZ24} showed that \CT is W[1]-hard for~$n$ even if~$\ell=1$ with a reduction from \prb{Unary Bin Packing} and using a multicolored and a nonuniform variant of \CT as intermediate problems in the reduction.
In \prb{Nonuniform \CT} we are not given a single number of permitted activity intervals $k$, but instead a number $k_v$ for each vertex $v$.
In a similar way we introduce \prb{Nonuniform \DT}, which we will use as an intermediate problem to show hardness for~$n+\ell$ for \DT.

\begin{theorem}

	\label{W1hardnessN}
	\DT parameterized by $n$ is $\W[1]$-hard even if~$\ell = 1$.
\end{theorem}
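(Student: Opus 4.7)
The plan is to follow the blueprint used by Froese et al.~\cite[Thm.~12]{DBLP:journals/mst/FroeseKZ24} for \CT\ and adapt it to \DT. I would go through an intermediate problem \prb{Nonuniform \DT}, which differs from \DT\ only in that every vertex~$v$ carries its own individual cap $k_v$ on the number of activity intervals. The reduction chain is then (i)~from \prb{Unary Bin Packing} parameterized by the number of bins (which is W[1]-hard) to \prb{Nonuniform \DT} with $\ell=1$ using only $\Oh(n)$ vertices, and (ii)~from \prb{Nonuniform \DT} to \DT\ by equalizing the per-vertex caps.

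For step~(i), given an instance of \prb{Unary Bin Packing} with items of sizes $s_1,\dots,s_m$ and bins of capacities $c_1,\dots,c_n$, I would introduce one bin-vertex~$v_i$ per bin, set $k_{v_i} := c_i$, and build one item gadget per item. The item gadget for item~$j$ spans a block of $2s_j$ consecutive snapshots; in each of these snapshots, a private ``demand'' vertex is made adjacent exactly to the bin-vertices $v_1,\dots,v_n$. The point is that length-$\ell=1$ intervals cover two consecutive snapshots each, so keeping a single bin-vertex $v_i$ active over the whole block uses exactly $s_j$ of its intervals, mirroring the placement of item~$j$ into bin~$i$. A short buffer of ``reset'' snapshots between consecutive item gadgets, together with additional forced temporal vertices inside each block, prevents any single interval from straddling two gadgets and prevents the demand vertices from dominating themselves cheaply. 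The global budget $\sum_i c_i = \sum_j s_j$ of the bin-packing instance is then enforced by the per-vertex caps $k_{v_i}$.

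For step~(ii), I would set the uniform bound $k := \max_i k_{v_i}$ and, for every bin-vertex $v_i$ with $k_{v_i} < k$, append a disjoint ``drain'' gadget consisting of $k - k_{v_i}$ blocks, each containing a private pendant vertex adjacent only to $v_i$. Each such block forces $v_i$ to spend one additional activity interval on itself, so in total $v_i$ has exactly $k_{v_i}$ intervals free for the item gadgets, simulating the nonuniform cap. Placing these drains in a separate time region, separated from the item gadgets by sufficiently many buffer snapshots, guarantees that they do not interfere with the bin-packing simulation. The total number of vertices remains $n$ plus a constant number of auxiliary vertices per gadget, so the parameter~$n+\ell$ of the produced \DT\ instance is bounded by a function of the \prb{Unary Bin Packing} parameter.

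The main obstacle will be designing the item gadget so that (a)~the domination requirement genuinely forces exactly \emph{one} bin-vertex to be active for the entire $2s_j$ span of each item, rather than a combination of several bin-vertices splitting the block, and (b)~the length-$1$ interval constraint interacts correctly with this so that no activity interval can be ``shared'' across two consecutive item gadgets. The domination setting is more delicate than the cover setting of Froese et al.\ because isolated temporal vertices appearing in the drain gadgets must also be dominated, which in turn restricts which intervals of the bin-vertices can be spent inside item gadgets. I expect that replacing the single demand vertex in each snapshot of an item gadget by a private chain of forced temporal vertices---so that activating the demand vertex's own intervals is strictly more expensive than borrowing capacity from a bin-vertex---together with the buffer snapshots closes both loopholes and yields the desired equivalence with \prb{Unary Bin Packing}.
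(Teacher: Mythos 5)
Your high-level architecture (route through \prb{Nonuniform \DT}, then equalize the per-vertex caps) matches the paper's second step, but your first step diverges from the paper and, as sketched, fails as a parameterized reduction. The paper does not redo the \prb{Unary Bin Packing} machinery at all: it starts from \prb{Nonuniform \CT}, whose $\W[1]$-hardness for $n$ with $\ell=1$ is already established by Froese et al., and applies the classical vertex-cover-to-dominating-set transformation (one vertex $w_e$ per edge $e$ of the underlying graph with cap $0$, adjacent to both endpoints of $e$ exactly in the snapshots where $e$ appears, plus one universal helper $z$ with cap $\lceil T/2\rceil$ that absorbs all isolated temporal vertices). Your direct bin-packing gadget introduces a private demand vertex per snapshot (or at least per item) of each item gadget; since the number of items and their unary sizes are not bounded by the bin-packing parameter (the number of bins), the vertex count $n$ of the constructed instance is unbounded in that parameter, so the reduction is not a parameterized reduction with respect to $n$. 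Moreover, \DT is a decision problem in which \emph{every} temporal vertex must be dominated and there is no objective under which self-domination is ``more expensive'': a demand vertex with spare interval budget can simply dominate itself, so forcing a bin-vertex to be active requires exhausting the demand vertex's budget elsewhere, which your sketch does not arrange.

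The same loophole affects your step (ii): a private pendant adjacent only to $v_i$ in a drain block can dominate itself with one of its own $k$ intervals, so it does not force $v_i$ to spend anything. The paper closes this by pairing each $v$ with a copy $v^*$ and appending $2k'$ snapshots in which $v$ and $v^*$ are both isolated during the first $2(k'-k_v)$ of these time steps (so each must self-dominate, burning $k'-k_v$ length-one intervals apiece) and are joined only by the edge $vv^*$ during the last $2k_v$ (burning $k_v$ further intervals of the pair); a counting argument then shows that at most $k_v$ intervals of the pair $\{v,v^*\}$ can intersect the original $T$ snapshots, and since $v^*$ dominates exactly the same vertices as $v$ there, the nonuniform cap is correctly simulated. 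To salvage your route, adopt the paper's first step (reduce from \prb{Nonuniform \CT} via the edge-vertex construction) instead of rebuilding the bin-packing reduction, and replace the pendant drains by a mechanism that forces interval expenditure through isolation rather than through adjacency.
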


We start by proving the hardness of the nonuniform variant.

\begin{lemma}
	\label{LemmaVCtoDStimeline}
	\prb{Nonuniform \DT} is $\W[1]$-hard with respect to $n$ even if $\ell = 1$.
\end{lemma}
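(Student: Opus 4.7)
The plan is to give a parameterized reduction from \prb{Unary Bin Packing} parameterized by the number of bins, which is known to be $\W[1]$-hard and was also the starting point of Froese et al.~\cite{DBLP:journals/mst/FroeseKZ24} for the analogous result on \CT. An instance consists of items with unary-encoded sizes $s_1,\ldots,s_m$, a bin capacity $B$, and $k$ bins with $\sum_j s_j = kB$. The goal is to build a temporal graph $\mg$ on $\Oh(k)$ vertices, together with per-vertex budgets~$k_v$ and $\ell = 1$, such that a dominating timeline exists precisely when the items can be packed into the $k$ bins.

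The construction introduces one \emph{bin vertex} $b_i$ per bin, each with nonuniform budget $k_{b_i} := B$, together with a constant-size gadget of \emph{demand vertices} whose budgets are tuned so that they dominate only isolated parts of the lifetime. For each item~$j$, I build an \emph{item block} of $2s_j$ consecutive snapshots in which the demand gadget attaches uniformly to every bin vertex; because $\ell = 1$, an activity interval covers exactly two consecutive snapshots, so dominating the demand gadget throughout the block requires exactly $s_j$ bin-vertex intervals. Between two item blocks I insert separator snapshots in which all bin vertices are isolated (and the demand vertices are dominated by the auxiliary gadget), so that no interval of a bin vertex can straddle two distinct item blocks.

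Correctness will be proved in two directions. For ($\Rightarrow$), an assignment $\alpha : [m] \to [k]$ with $\sum_{j : \alpha(j) = i} s_j \le B$ is translated by activating $b_{\alpha(j)}$ throughout the block of item $j$ with $s_j$ consecutive $\ell = 1$ intervals; summing over all $j$ assigned to bin $i$ yields at most $B = k_{b_i}$ intervals, so the nonuniform budgets are respected. For ($\Leftarrow$), the separator snapshots force the $s_j$ intervals dominating the item-$j$ block to sit entirely inside that block, and an additional parity or ``signature'' feature inside each gadget (e.g., a single vertex whose sole in-block neighbor differs for odd and even positions of the block) will be used to prove that all $s_j$ intervals must come from the \emph{same} bin; the induced map $\alpha$ is then a packing respecting all capacities.

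The main obstacle I anticipate is step (a) of the reverse direction: ensuring that a single item block is handled by one bin only, rather than being split among several bins that together exhaust its $2s_j$ snapshots. The overall count of intervals across blocks is already controlled by $\sum_i k_{b_i} = kB = \sum_j s_j$, but forbidding item splitting requires a careful local obstruction inside each block. I expect to resolve it by enriching each item block with a ``signature'' gadget consisting of a few vertices whose adjacency pattern makes a \emph{switch} of active bin vertex within the block leave at least one temporal vertex undominated, so that a single bin must serve the entire block. Once this gadget is in place, the budget accounting is immediate and establishes $\W[1]$-hardness of \prb{Nonuniform \DT} parameterized by $n$ with $\ell = 1$.
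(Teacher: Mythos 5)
Your proposal takes a genuinely different route from the paper: the paper does \emph{not} reduce from \prb{Unary Bin Packing} directly, but instead observes that Froese et al.\ already established $\W[1]$-hardness of \prb{Nonuniform \CT} for parameter $n$ with $\ell=1$, and then transfers this to \prb{Nonuniform \DT} via the classical \VC-to-\DS trick: for each edge $e$ of the underlying graph add a vertex $w_e$ with budget $0$ adjacent to both endpoints of $e$ in exactly the snapshots where $e$ appears, and add one vertex $z$ with budget $\lceil T/2\rceil$ adjacent to all otherwise-isolated vertices in each snapshot. Since $|V'|\in \Oh(|V|^2)$, the parameter is preserved. That reduction is short and sidesteps all bin-packing arithmetic.

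As written, your plan has two genuine gaps. First, and most importantly, the central difficulty you yourself identify---forcing all $s_j$ intervals serving item block $j$ to come from a \emph{single} bin vertex---is not resolved: you only promise a ``signature'' gadget and describe what it should achieve, without constructing it or proving it works. Since this is exactly the step where the reduction could fail (the global budget count $\sum_i k_{b_i}=kB=\sum_j s_j$ does not by itself forbid splitting a block among several bins), the proof is incomplete at its crux. Second, your separator snapshots make all bin vertices isolated, but in \DT an isolated temporal vertex can only be dominated by being active itself; with $m-1$ separator regions this would force each $b_i$ to spend activity intervals on separators, destroying the budget accounting $k_{b_i}=B$. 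This is precisely the complication with isolated temporal vertices that makes domination-type timeline reductions harder than the covering ones (compare the paper's \Cref{DomTimelineMaxDeg1Hardness}), and it needs an explicit fix, e.g.\ attaching bin vertices in separator snapshots to an auxiliary always-active vertex. Until the signature gadget is specified and the isolated-vertex issue is handled, the reduction does not go through; you may find it considerably easier to follow the paper's strategy of reducing from the already-known hard problem \prb{Nonuniform \CT}.
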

\begin{proof}
	In the $\W[1]$-hardness proof of \CT, Froese et.
al \cite{DBLP:journals/mst/FroeseKZ24} use \prb{Nonuniform \CT} as an intermediate problem.
In particular, they show that this problem, when parameterized by $n$, is $\W[1]$-hard even if $\ell = 1$.
We now reduce from this problem.
A sketch of the idea can be found in \Cref{FigureLemmaVCtoDStimeline}.

	A common reduction from \prb{Vertex Cover} to \prb{Dominating Set} is to introduce a new vertex for each edge and make this new vertex adjacent to both endpoints of the edge.
In the following reduction we will make use of this idea.
However, we also need to take care of isolated temporal vertices.
For this purpose, we introduce another vertex which is connected to all isolated vertices in each snapshot.

	Suppose $(\mathcal{G} = (G_1, \dots, G_T), (k_v)_{v \in V}, \ell = 1)$ is a given instance of \prb{Nonuniform \CT} and let $V := V(\mathcal{G})$.
We construct a new temporal graph $\mathcal{G}'$ with vertex set $$V' := V \cup \{w_e :  e \in E_\downarrow (\mathcal{G})\} \cup \{z\}.$$ For each snapshot $G_i$ (with edge set $E_i$) of $\mathcal{G}$ we introduce the snapshot $G_i'$ for $\mathcal{G}'$ with the edge set
	\begin{align*}
		E_i' := E_i & \cup \{vw_e : e \in E_i, v \in e\} \\
		& \cup \{vz : v \in V, \deg_{G_i} (v) = 0\} \\
		& \cup \{w_ez : e  \in E_\downarrow (\mathcal{G}) \setminus E_i\}.
	\end{align*}
	In other words, the edge set of $G_i'$ consists of the edges $E_i$, the edges between the new vertices, which were introduced for the edges from $E_i$ and the respective endpoints and the edges, which connect all remaining isolated vertices to $z$.
	It remains to fix the individual number of activity intervals for each vertex.
For each $v \in V'$ we set
	\begin{align*}
		k'_{v} :=
		\begin{cases}
			k_v & \text{ if } v \in V, \\
			\lceil \frac{T}{2} \rceil & \text{ if } v = z, \\
			0 & \text{ otherwise.}
		\end{cases}
	\end{align*}
	The idea is that $z$ is active in each snapshot and dominates all isolated temporal vertices, while the vertices from $V$ dominate all the remaining temporal vertices.

	By the previous construction, the \prb{Nonuniform \DT} instance is given by $(\mathcal{G}' = (G_1', \dots, G_T'), (k'_v)_{v \in V'}, \ell = 1)$.
Since $|V'| \in O(|V|^2)$, the $\W[1]$-hardness follows if the reduction is correct.
	\begin{figure}[t]
		\centering
		\begin{tikzpicture}[scale = 0.6, transform shape,
			V/.style = {circle, draw, fill=black, inner sep = 6pt},
			d/.style = {circle, draw, fill=black, inner sep = 0.75pt},
			d2/.style = {circle, draw, fill=black, inner sep = 1.5pt}
			]

						\node[V, label=left:\large $v_1$, fill=red] (v1) at (-13, 2) {};
			\node[V, label=left:\large $v_2$] (v2) at (-13, 6) {};
			\node[V, label=right:\large $v_3$, fill=red] (v3) at (-9, 6) {};
			\node[V, label=right:\large $v_4$] (v4) at (-9, 2) {};
			\node at (-11, -2) {\LARGE $G_i$};
						\draw (v1) -- (v2);
			\node at (-13.35, 4) {\Large $e_1$};
			\draw (v2) -- (v3);
			\node at (-11, 6.35) {\Large $e_2$};

						\node[V, label=left:\large $z$, fill=red] (z) at (3, 0) {};
			\node[V, label=left:\large $v_1$, fill=red] (v12) at (1, 2) {};
			\node[V, label=left:\large $w_{e_1}$] (we1) at (0, 4) {};
			\node[V, label=left:\large $v_2$] (v22) at (1, 6) {};
			\node[V, label=above:\large $w_{e_2}$] (we2) at (3, 7) {};
			\node[V, label=right:\large $v_3$, fill=red] (v32) at (5, 6) {};
			\node[V, label=right:\large $w_{e_3}$] (we3) at (6, 4) {};
			\node[V, label=right:\large $v_4$] (v42) at (5, 2) {};
			\node at (3, -2) {\LARGE $G_i'$};
						\draw (v12) -- (v22);
			\draw (v12) -- (we1);
			\draw (v22) -- (we1);
			\draw (v22) -- (v32);
			\draw (v22) -- (we2);
			\draw (v32) -- (we2);
			\draw (z) -- (v42);
			\draw (z) -- (we3);

						\draw[->, line width=3pt] (-5.5,3) -- (-3.5,3);

		\end{tikzpicture}
		\caption{A sketch of the reduction from \prb{Nonuniform \CT} to \prb{Nonuniform \DT} in~\Cref{LemmaVCtoDStimeline}.
For each snapshot $G_i$ of~$\mathcal{G}$ one snapshot $G_i'$ is introduced for $\mathcal{G}'$.
In the example shown in the figure, it is assumed that the underlying graph of $\mathcal{G}$ has the edge set $E_\downarrow = \{e_1, e_2, e_3\}$.
For each of these edges one vertex is introduced (with zero activity intervals) and connected to the respective endpoints in $G_i'$ if and only if the edge appears in $G_i$.
Further, all the isolated vertices are connected to $z$, which is assigned enough activity intervals to be active in every snapshot.
The vertices marked in red are active.}
		\label{FigureLemmaVCtoDStimeline}
	\end{figure}

	\textit{Correctness:} We show that $(\mathcal{G}, (k_v)_{v \in V}, \ell = 1)$ is a yes-instance of \prb{Nonuniform \CT} if and only if $(\mathcal{G}', (k'_v)_{v \in V'}, \ell = 1)$ is a yes-instance of \prb{Nonuniform \DT}.

	($\Rightarrow$) Suppose the activity timeline $\mathcal{T}$ is a solution to $(\mathcal{G}, (k_v)_{v \in V}, \ell)$.
In particular,~$\mathcal{T}$ covers all temporal edges in $\mathcal{G}$.
Consider
	\begin{align*}
		\mathcal{T}' :=
		\begin{cases}
			\mathcal{T} \cup \bigcup_{i = 1}^{\frac{T}{2}} \{(z, 2i -1 , 2i)\} & \text{ if } T \text{ is even,} \\
			\mathcal{T} \cup \bigcup_{i = 1}^{ \left\lfloor \frac{T}{2} \right\rfloor } \{(z, 2i - 1, 2i)\} \cup \{(z, T, T)\} & \text{ if } T \text{ is odd}.
\\
		\end{cases}
	\end{align*}
	Intuitively, $\mathcal{T}'$ is basically $\mathcal{T}$ and additionally the vertex $z$ is active in each snapshot.
	The constraints $(k'_v)_{v \in V'}$ for the number of activity intervals of each vertex are clearly satisfied.
Since $z$ is active in every snapshot, it remains to show that all temporal vertices, which are not a neighbor of $z$, are dominated.
Assume $(v,i)$ is such a temporal vertex.
We have to distinguish between two cases.
If $v \in V$ and~$v \notin N_{G_i} (z)$, then $uv \in E_i$ for some $u \in V$.
This implies that $u$ or $v$ is active in snapshot $G_i$ with respect to $\mathcal{T}$ (because the temporal edge $(uv,i)$ has to be covered) and therefore~$(v,i)$ is dominated in $\mathcal{G}'$ because $\mathcal{T} \subseteq \mathcal{T}'$.
If $v = w_e$ for some $e \in E_i$ and $v \notin N_{G_i} (z)$, then by construction $w_eu, w_eu' \in E_i'$ for some $u,u' \in V$ where~$e = uu' \in E_i$.
In particular, $u$ or $u'$ is active with respect to $\mathcal{T}$ and therefore $(v,i)$ is dominated in $\mathcal{G}'$ by $\mathcal{T}'$.

	($\Leftarrow$) Suppose the activity timeline $\mathcal{T}'$ is a solution to $(\mathcal{G}', (k'_v)_{v \in V'}, \ell)$ and define
	\begin{align*}
		\mathcal{T} := \{(v, a, b) \in \mathcal{T}' : v \in V\},
	\end{align*}
	that is, $\mathcal{T}$ is simply the set of activity intervals of vertices from $V$, which are contained in $\mathcal{T}'$.
By construction, it is clear that the constraints on the number of activity intervals are satisfied.
We show that $\mathcal{T}$ covers all temporal edges in $\mathcal{G}$.
If~$e=uv \in E_i$, then $w_e$ is dominated by $u$ or $v$ in $G_i'$, because $w_e$ is not adjacent to $z$ in $G_i'$ and has no own activity intervals.
So in particular, $u$ or $v$ is active in $G_i'$ with respect to $\mathcal{T}'$.
By definition $\mathcal{T}$ contains the same activity intervals of $u$ and~$v$ as $\mathcal{T}'$.
It follows that $u$ or $v$ is active in $G_i$ with respect to $\mathcal{T}$ and therefore the temporal edge~$(e,i)$ is covered by $\mathcal{T}$.
\end{proof}

With~\Cref{LemmaVCtoDStimeline} we are now ready to prove~\Cref{W1hardnessN}, that is, the $\W[1]$-hardness of \DT with respect to $n$, even if $\ell = 1$.

\begin{proof}[Proof of \Cref{W1hardnessN}]
	We want to make use of~\Cref{LemmaVCtoDStimeline}, this means we provide a reduction from \prb{Nonuniform \DT} with~$\ell=1$ to \DT with~$\ell'=1$.
A sketch of the reduction can be found in~\Cref{FigureW1hardnessN}.

	Let $I:= (\mathcal{G} = (G_1, \dots, G_T), (k_v)_{v \in V}, \ell = 1)$ be an instance of \prb{Nonuniform \DT} with vertex set $V := V(\mathcal{G})$.
The maximum number of permitted activity intervals $k' := \max_{v \in V} k_v$ will be the permitted number of activity intervals in the \DT instance~$I':=(\mg',k',\ell' = 1)$.
We construct a new temporal graph $\mathcal{G}'$ on the vertex set $$V' := V(\mathcal{G}') = V  \cup \{v^* : v \in V\}.$$ For this we introduce $T + 2k'$ snapshots to $\mathcal{G}'$ where snapshot $G_i'$  has the edge set
	\begin{align*}
		E_i' :=
		\begin{cases}
			\bigcup_{v \in V} \{vv^*\} \cup \bigcup_{vw \in E_i} \{vw, vw^*, v^*w, v^*w^*\} & \text{ if } i \in [T], \\
			\{vv^* : v \in V \text{ such that }  i > T + 2(k'-k_v)\} & \text{ if } i \in [T+1, T+2k'].
		\end{cases}
	\end{align*}
	Intuitively, for $i \in [T]$ the snapshot $G_i'$ is constructed from $G_i$ by taking $G_i$ and a copy of $G_i$ and making $v$ and its copy $v^*$ adjacent to the same vertices.
In the next $2(k'-k_v)$ snapshots $(G'_{T+1}, \dots, G'_{T + 2(k'-k_v)})$ the vertex $v$ and its copy $v^*$ remain isolated and in the last $2k_v$ snapshots $(G'_{T + 2(k'-k_v) + 1}, \dots, G'_{T + 2k'})$ they are only incident with the edge $vv^*$.
Consequently,~$v$ and $v^*$ dominate the same set of vertices in each every snapshot of $\mathcal{G}'$.
Further, observe that for $i \in [T]$ a set of vertices from $V$ is a dominating set of $G_i$ if and only if it is a dominating set of~$G_i'$.
	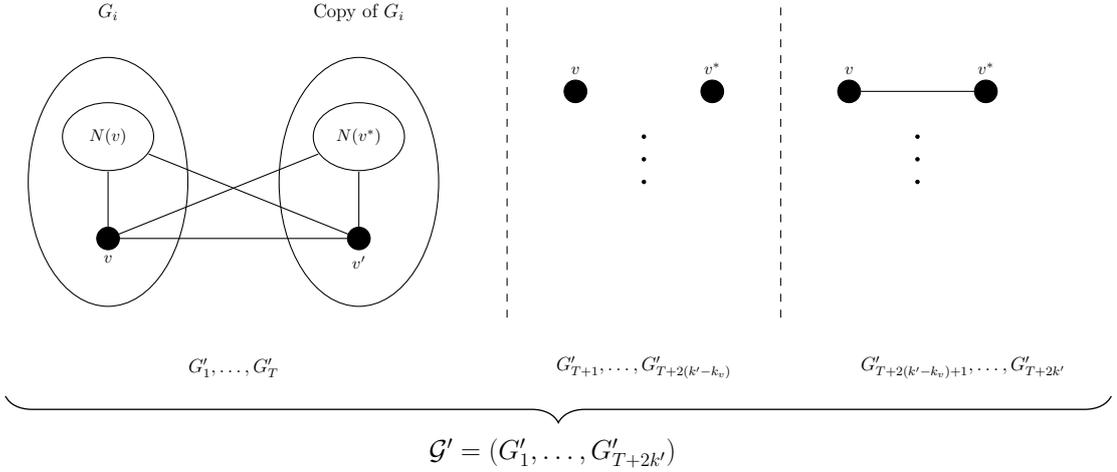
\begin{figure}[t]
		\centering
		\begin{tikzpicture}[scale = 0.6, transform shape,
			V/.style = {circle, draw, fill=black, inner sep = 5pt},
			d/.style = {circle, draw, fill=black, inner sep = 0.75pt},
			d2/.style = {circle, draw, fill=black, inner sep = 1.5pt}
			]

						\draw (-2.75,3) ellipse (1.75cm and 2.75cm);
			\node at (-2.75, 6.75) {\Large $G_i$};
			\draw (2.75,3) ellipse (1.75cm and 2.75cm);
			\node at (2.75, 6.75) {\Large Copy of $G_i$};
			\draw (-2.75,4) ellipse (1cm and 0.75cm);
			\node at (-2.75,4) {\large $N(v)$};
			\draw (2.75,4) ellipse (1cm and 0.75cm);
			\node at (2.75,4) {\large $N(v^*)$};
			\node[V, label=below:\large $v$] (v) at (-2.75,1.75) {};
			\node[V, label = below:\large $v'$] (v') at (2.75,1.75) {};
						\draw (v) -- (v');
			\node[inner sep= 0] (d1) at ($(-2.75,4)+(-30:1 and 0.75)$) {};
			\node[inner sep=0] (d2) at ($(2.75,4)+(210:1 and 0.75)$) {};
			\node[inner sep= 0] (d3) at ($(-2.75,4)+(-90:1 and 0.75)$) {};
			\node[inner sep=0] (d4) at ($(2.75,4)+(-90:1 and 0.75)$) {};
			\draw (v) -- (d2);
			\draw (v') -- (d1);
			\draw (v) -- (d3);
			\draw (v') -- (d4);
			\node at (0, -1.1) {\Large $G'_1, \dots, G'_{T}$};

						\node[V, label=above:\large $v$] (v2) at (7.5,5) {};
			\node[V, label = above:\large $v^*$] (v'2) at (10.5,5) {};
			\node[d] () at (9,4) {};
			\node[d] () at (9,3.5) {};
			\node[d] () at (9,3) {};
			\node at (9, -1.1) {\Large $G'_{T + 1}, \dots, G'_{T + 2(k'-k_v)}$};

						\node[V, label=above:\large $v$] (v3) at (13.5,5) {};
			\node[V, label = above:\large $v^*$] (v'3) at (16.5,5) {};
			\node[d] () at (15,4) {};
			\node[d] () at (15,3.5) {};
			\node[d] () at (15,3) {};
						\draw (v3) -- (v'3);
			\node at (16, -1.1) {\Large $G'_{T + 2(k'-k_v) + 1}, \dots, G'_{T + 2k'}$};

						\draw [decorate,decoration={brace,amplitude=10pt,mirror,raise=4pt}, line width=0.5pt] (-5,-1.5) -- (19.25,-1.5);
			\node at (7, -3) {\LARGE $\mathcal{G}' = (G'_1, \dots, G'_{T + 2k'})$};

						\draw[dashed] (6, 0) -- (6, 7);
			\draw[dashed] (12, -0) -- (12, 7);

		\end{tikzpicture}
		\caption{A sketch of the reduction from \prb{Nonuniform \DT} to \DT in~\Cref{W1hardnessN}.
If $\mathcal{G} = (G_1, \dots, G_T)$ is the given temporal, then the newly constructed temporal graph $\mathcal{G}'$ consists of $T + 2k'$ snapshots, where $k'$ is the maximum number of permitted activity intervals in the nonuniform variant.
The reduction introduces for each vertex $v$ a copy $v^*$ and the figure sketches how these two vertices are linked in each snapshot.
Note that the number $k_v$ of permitted activity intervals may be different for each vertex $v \in V(\mathcal{G})$.}
		\label{FigureW1hardnessN}
	\end{figure}

	\textit{Correctness:} We show that $I$ is a yes-instance of \prb{Nonuniform \DT} if and only if $I'$ is a yes-instance of \DT.
	Recall that~$\ell = \ell' = 1$.

	($\Rightarrow$) Suppose the activity timeline $\mathcal{T}$ is a solution for $I$ and consider
	\begin{align*}
		\mathcal{T'} := \mathcal{T} \cup \bigcup_{v \in V} \Big(
		& \bigcup_{i = 1}^{k'-k_v} \{(v,T + 2i-1,T + 2i), (v', T + 2i-1, T+2i)\} \\
		& \cup \bigcup_{i = k'-k_v + 1}^{k'} \{(v', T + 2i-1, T+2i)\} \Big).
	\end{align*}
	The activity timeline $\mathcal{T}'$ has the same activity intervals as $\mathcal{T}$ in the first $T$ snapshots.
Moreover, for each vertex $v \in V$ the vertices $v$ and $v^*$ are both active in the $2(k'-k_v)$ snapshots $G'_{T+1}, \dots, G'_{T + 2(k'-k_v)}$ and $v^*$ is also active in the last $2k_v$ snapshots~$G'_{T + 2(k'-k_v) + 1}, \dots, G'_{T + 2k'}$.
	Note that each vertex has at most $k'$ activity intervals (since $v$ has at most $k_v$ activity intervals in $\mathcal{T}$) and each of them is of length at most one.
It remains to show that $\mathcal{T}'$ dominates all temporal vertices in $\mathcal{G}'$.
For $i \in [T]$, any dominating set in $G_i$ is also a dominating set in $G_i'$ and therefore~$\mathcal{T}'$ dominates all temporal vertices in the first $T$ snapshots (because $\mathcal{T}$ does).
For $i \in [T+1, T + 2(k'-k_v)]$, the vertices $v$ and $v^*$ are both active and therefore dominated.
For $i \in [T + 2(k'-k_v)+1, T+ 2k']$ they are dominated by $v^*$, because the edge $vv^*$ exists and $v^*$ is active.
Consequently, all temporal vertices are dominated and $\mathcal{T}'$ is a valid solution to $I'$.

	($\Leftarrow$) Now suppose $\mathcal{T}'$ is a solution to $I'$.
Informally, the activity timeline $\mathcal{T}$ for the nonuniform instance will contain the activity intervals of a vertex~$v$, which correspond to the parts of activity intervals of $v$ or $v^*$, which intersect the first $T$ snapshots.
Formally, we set
	\begin{align*}
		\mathcal{T} := \{(v,a,\min (b,T)) : (v,a,b) \in \mathcal{T}' \text{ or } (v^*,a,b) \in \mathcal{T}' \text{ where } v \in V, a \leq T\}.
	\end{align*}
	We show that in total at most $k_v$ activity intervals of $v$ and $v^*$ intersect the first $T$ snapshots.
Recall that the activity intervals have length one and therefore each of them intersects exactly two snapshots.
In the last $2k_v-1$ snapshots the vertices $v$ and $v^*$ are only adjacent to each other and therefore in total $k_v$ activity intervals from them have to intersect these snapshots.
Since they have length one, none of them can intersect the $2(k'-k_v) - 1$ snapshots $G'_{T+2}, \dots , G'_{T + 2(k'-k_v)}$.
In these snapshots both $v$ and $v^*$ are isolated, so $(k'-k_v)$ activity intervals of each of them have to intersect these snapshots.
Again they have length one, which implies that they cannot intersect the first $T$ snapshots.
Consequently, there are in total $2(k'-k_v) + k_v$ activity intervals of $v$ and $v^*$, which do not intersect the first $T$ snapshots.
It follows that $\mathcal{T}$ contains at most $k_v$ activity intervals of $v$.
We show that $\mathcal{T}$ dominates all temporal vertices in $\mathcal{G}$.
For this recall that $v$ and $v^*$ dominate the exact same vertices in every snapshot $G_1', \dots, G_T'$ and that a set of vertices from $V$ is a dominating set of $G_i$ if and only if it is a dominating set of~$G_i'$.
\end{proof}

The $\W[1]$-hardness results for both problems do not apply to the case~$\ell=0$. For \CT, this case is fixed-parameter tractable because there exists an ILP formulation in which the number of variables is bounded by some function of~$n$~\cite[Lemma~10]{DBLP:journals/mst/FroeseKZ24}. It is also straightforward to extend this ILP formulation to \PCT.
To complete the picture, we extend the ILP formulation also to \PDT, thus showing that it  is fixed-parameter tractable for~$\ell=0$ as well.

\begin{theorem}

	\PDT is fixed-parameter tractable with respect to $n$ if $\ell = 0$.
\end{theorem}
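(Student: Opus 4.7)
The plan is to follow the ILP approach used by Froese et al.\ for \CT with $\ell=0$ and adapt the objective to count dominated vertices instead of covered edges. With $\ell=0$, every activity interval is a single time step, so each vertex is active in at most $k$ snapshots. Moreover, two snapshots sharing the same edge set are completely interchangeable: swapping which of them has a given active set yields the same number of dominated temporal vertices. Hence the optimum depends only on, for each snapshot \emph{type} (that is, each distinct edge set) and each subset $A \subseteq V$, the number of snapshots of that type in which exactly $A$ is active.

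Concretely, I let $\tau_1, \dots, \tau_s$ be the pairwise distinct edge sets appearing among the snapshots of $\mathcal{G}$ (so $s \le 2^{\binom{n}{2}}$) and let $m_j$ be the number of snapshots with edge set $\tau_j$. I introduce an integer variable $y_{j,A} \ge 0$ for every $j \in [s]$ and every $A \subseteq V$, encoding how many snapshots of type $\tau_j$ have~$A$ as active set. The number of variables is at most $2^{\binom{n}{2}} \cdot 2^n = 2^{\Oh(n^2)}$, which depends only on $n$. The constraints
\begin{equation*}
\sum_{A \subseteq V} y_{j,A} = m_j \text{ for each } j\in[s], \qquad \sum_{j \in [s]} \sum_{A \ni v} y_{j,A} \le k \text{ for each } v\in V,
\end{equation*}
say that every snapshot of each type is assigned a chosen active set and that every vertex is active in at most $k$ snapshots overall.

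For the objective, observe that for any type $\tau_j$ with associated graph $H_j := (V, \tau_j)$ and any active set $A \subseteq V$, the number of vertices dominated in such a snapshot equals the precomputable constant $|N_{H_j}[A]|$. Thus the total number of dominated temporal vertices is the linear function
\begin{equation*}
\sum_{j \in [s]} \sum_{A \subseteq V} y_{j,A} \cdot |N_{H_j}[A]|,
\end{equation*}
and the instance is a yes-instance if and only if the maximum of this function subject to the above constraints is at least $t$. Since the ILP has $2^{\Oh(n^2)}$ variables, Lenstra's/Kannan's algorithm for integer linear programming solves it in $f(n) \cdot (n+T)^{\Oh(1)}$~time, which establishes fixed-parameter tractability in $n$.

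The main conceptual step, rather than a real obstacle, is to verify that the change from edge covering to vertex domination is still cleanly captured by a linear objective, which works here because the domination count in a snapshot depends only on the type $\tau_j$ and the chosen active set $A$. The two directions of the correspondence between timelines and feasible ILP solutions are then immediate: a $k$-activity $0$-timeline determines the $y_{j,A}$ by counting snapshots per (type, active set) pair, and conversely any feasible assignment can be realized by arbitrarily distributing the $y_{j,A}$ many active patterns among the $m_j$ snapshots of type $\tau_j$, since the per-vertex constraint guarantees that each vertex uses at most $k$ activity intervals.
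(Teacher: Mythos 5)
Your proposal is correct and is essentially the same as the paper's proof: the paper also sets up an ILP with one integer variable per (snapshot edge set, active vertex set) pair, the same two families of constraints, and a linear domination-count objective, solved in FPT time since the number of variables depends only on $n$. The only cosmetic difference is that you restrict to the edge sets actually occurring among the snapshots, whereas the paper indexes variables over all $E \subseteq \binom{V}{2}$.
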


\begin{proof}
	Suppose $(\mathcal{G} = (G_1, \dots, G_T), k, \ell)$ is an instance of \PDT with~$\ell = 0$.
We provide an ILP formulation (adaption from \cite{DBLP:journals/mst/FroeseKZ24}), where the number of variables is upper-bounded by some function of $n$.
For this we define the following
	\begin{itemize}
		\item variables $X_E^S$ with $S \subseteq V$ and $E \subseteq \binom{V}{2}$, which correspond to the number of times the vertex set $S$ is active in a snapshot with edge set $E$,
		\item constants $\alpha (E)$ storing the number of snapshots with edge set $E$,
		\item constants $t_E^S$ storing the number of dominated vertices, if $S$ is the set of active vertices and the snapshot has edge set $E$.
	\end{itemize}
	Note that the number of these variables and constants only depends on $n$.
In particular, there are $2^{\binom{n}{2} + n}$ variables $X_E^S$.
With this we are now able to formulate our integer linear program:
	\begin{align}
		\sum_{E \subseteq \binom{V}{2}} \sum_{S \subseteq V} t_E^S X_E^S &\geq t \tag{1}\\
		\sum_{S \subseteq V} X_E^S &= \alpha (E), && \text{ for all } E \subseteq \binom{V}{2}  \tag{2}\\
		\sum_{E \subseteq \binom{V}{2}} \sum_{\substack{S \subseteq V : \\ v \in S}} X_E^S &\leq k, && \text{ for all } v \in V  \tag{3}\\
		X_E^S &\in \mathbb{N}_0  \tag{4}
	\end{align}
	\textit{Correctness:} We show that this ILP is feasible if and only if $(\mathcal{G}, k, \ell = 0)$ is a yes-instance of \PDT.

	($\Rightarrow$) Suppose a solution to the ILP is given.
We construct an activity timeline~$\mathcal{T}$ for $(\mathcal{G}, k, \ell = 0)$.
For this let $S_1, S_2, \dots$ be some fixed order on the subsets of $V$.
Now use $X_E^{S_1}$ activity intervals of exactly the vertices from $S_1$ in the first $X_E^{S_1}$ snapshots with edge set $E$.
Then use $X_E^{S_2}$ activity intervals of exactly the vertices from $S_2$ for the next $X_E^{S_2}$ snapshots with edge set $E$, and so on.
By doing this for every edge set~$E$, we obtain a $k$-activity $0$-timeline $\mathcal{T}$.
Constraint (1) of the ILP ensures that at least $t$ temporal vertices are dominated, while constraint (3) ensures that at most $k$ activity intervals of each vertex are selected.
Finally (2) guarantees that the number of snapshots with edge set $E$ is respected.

	($\Leftarrow$) Consider an activity timeline $\mathcal{T}$, which is a solution to $(\mathcal{G}, k, \ell = 0)$.
Set~$X_E^S$ to the number of times the vertex set $S$ is active in a snapshot with edge set $E$ (with respect to the solution $\mathcal{T}$).
Condition (1) of the ILP is satisfied, because $\mathcal{T}$ dominates at least $t$ temporal vertices, and condition (3) holds, because $\mathcal{T}$ contains at most $k$ activity intervals of each vertex.
Finally, the conditions (2) and (4) are direct consequences of the definition of~$X_E^S$.
\end{proof}

Finally, we show that both \PDT and \PCT can be solved in $2^{\Oh(t)}\cdot (n+T)^{\Oh(1)}$~time, where~$t$ denotes the number of temporal vertices/edges, which need to be dominated/covered.
Recall that Dondi et al.~\cite[Theorem~6]{DondiPartialBounded} provided an FPT-algorithm for \PCT with running time~$t^t \cdot (n+T)^{\Oh(1)}$ if $k =1$.
Hence, our approach improves upon their running time and is not limited to~$k=1$.
The idea is to use color coding~\cite{AYZ95}, a very popular technique to obtain FPT-algorithms~\cite{CFK+15}.
To provide deterministic algorithm, we use \emph{$(n, k)$-perfect hash families} which can be computed efficiently~\cite{DBLP:conf/focs/NaorSS95}.

Initially, we show our result for \PDT.

\begin{theorem}
	\label{DomTimelineParametert}

	\PDT is solvable in $2^{\Oh(t)} \cdot (n+T)^{\Oh(1)}$~time.\end{theorem}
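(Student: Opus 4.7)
The plan is to combine color coding~\cite{AYZ95} with $(n,k)$-perfect hash families~\cite{DBLP:conf/focs/NaorSS95} to obtain a deterministic algorithm. The starting structural observation is that in any yes-instance, there exists a solution $\mt$ with at most~$t$ activity intervals, and thus at most~$t$ distinct active vertices: starting from any solution and repeatedly deleting any activity interval that contributes no newly dominated temporal vertex, the resulting $k$-activity $\ell$-timeline still dominates at least $t$ temporal vertices, while every remaining interval dominates at least one.

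To exploit this, I would apply color coding twice. First, color the vertex set $V$ via a function $c_V\colon V \to [t]$ drawn from an $(n, t)$-perfect hash family, so that in at least one coloring, the (at most $t$) active vertices of a minimal solution receive pairwise distinct colors. Second, color the $nT$ temporal vertices via a function $c_{TV}\colon V \times [T] \to [t]$ from an $(nT, t)$-perfect hash family, so that in at least one coloring, some set of $t$ dominated temporal vertices receives pairwise distinct colors. Both families together yield $2^{\Oh(t)} \cdot (n+T)^{\Oh(1)}$ pairs of colorings to enumerate, and for at least one pair the remaining task reduces to finding a \emph{colorful solution}: a valid $k$-activity $\ell$-timeline whose active vertices lie in pairwise distinct $c_V$-classes and whose dominated temporal vertices realize every $c_{TV}$-color in~$[t]$.

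For each coloring pair, I would find a colorful solution via two nested dynamic programs. First, for each vertex $v \in V$, precompute a table $h_v(S')$ indicating whether $v$ admits at most~$k$ pairwise-disjoint activity intervals of length at most $\ell$ whose combined dominated temporal vertices cover all $c_{TV}$-colors in $S' \subseteq [t]$. This is computed by a per-vertex DP over snapshots with state $(i, q, o, C)$ encoding the current snapshot, the number of intervals used so far, the open-interval status, and the covered color set, yielding $2^t \cdot (n+T)^{\Oh(1)}$ work per vertex. Second, the $h_v$'s are combined via a global DP $f(j, S)$, which is true iff the $c_{TV}$-color set $S \subseteq [t]$ can be dominated using only vertices of $c_V$-color at most $j$, with the transition
\[
f(j, S) = \bigvee_{S' \subseteq S} \bigg[f(j-1, S \setminus S') \,\wedge\, \bigvee_{v \in c_V^{-1}(j)} h_v(S')\bigg].
\]
The algorithm returns yes iff $f(t, [t])$ holds for at least one coloring pair.

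The main obstacle is that the per-vertex interval structure (at most~$k$ pairwise-disjoint intervals, each of length at most~$\ell$) has to be respected inside the color-coding framework; I resolve this by tracking the open-interval status $o$ as an additional coordinate in the per-vertex DP, which decouples the per-vertex feasibility check from the global color-class combination. Combining $2^{\Oh(t)} \cdot (n+T)^{\Oh(1)}$ coloring pairs with $2^{\Oh(t)} \cdot (n+T)^{\Oh(1)}$ work per pair yields the claimed running time.
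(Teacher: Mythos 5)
Your proposal is correct and rests on the same core technique as the paper: color the temporal vertices with $t$ colors, run a subset dynamic program over the color sets, and derandomize with perfect hash families. The difference is structural. The paper uses only the coloring of temporal vertices and a single DP with entries $\DP[S,j,k']$ that sweeps the vertices $v_1,\dots,v_n$ in a fixed order while tracking, in the state, how many of the current vertex's $k$ intervals have been spent; the per-vertex budget is thus enforced inside the global table and no coloring of the vertex set is needed. You instead add a second layer of color coding on $V$ to force the (at most $t$) active vertices into distinct classes, precompute per-vertex coverage tables $h_v(S')$, and combine them class by class with a $3^t$-type subset convolution. Both variants run in $2^{\Oh(t)}\cdot(n+T)^{\Oh(1)}$ time; yours buys a cleaner separation between per-vertex feasibility and global combination at the cost of an extra hash family and a slightly larger constant in the exponent, while the paper's single sweep shows the vertex coloring is dispensable. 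One small imprecision: your justification that a solution with at most $t$ intervals exists (``repeatedly delete intervals contributing no newly dominated temporal vertex'') only bounds the number of intervals by the number of dominated temporal vertices, which may exceed $t$. The claim itself is fine and is what your vertex-level hashing needs; argue it instead by fixing any $t$ dominated temporal vertices and selecting, for each, one interval that dominates it, which yields a sub-timeline with at most $t$ intervals still dominating at least $t$ temporal vertices.
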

\begin{proof}
	Let $(\mathcal{G} = (G_1, \dots, G_T), k, \ell, t)$ be a \PDT-instance.
We assume that $T \geq k(\ell+ 1)$ (otherwise the instance is trivial) and fix some order~$v_1, \dots , v_n$ of the vertices from $V(\mathcal{G})$.
The idea is to use color coding, so we assume that our temporal vertices are colored randomly with $t$ colors.
We say that an activity timeline~$\mathcal{T}$ dominates a color, if it dominates a temporal vertex of the respective color and we denote with $C_{a,b}^j$ the colors, which are dominated by the activity interval~$(v_j, a, b)$.
Now, consider the following dynamic programming table.

\emph{Table Definition:}
For a set~$S \subseteq \{1, \dots, t\}$ of colors, and integers $j \in [0, n], k' \in [0, k]$ we define
	\begin{align*}
		\DP [S, j, k'] \,\, \hat{=} \,\, \texttt{true}
							\end{align*}
	if and only if there exists a $k$-activity $\ell$-timeline $\mathcal{T} \subseteq \{v_1, \dots, v_j\} \times [T] \times [T]$ such that
	\begin{enumerate}[label=(\roman*)]
		\item $|\mathcal{T} \cap (\{v_j\} \times [T] \times [T]) | = k'$, and
		\item $\mathcal{T}$ dominates the colors from $S$.
	\end{enumerate}
	In other words, $\DP [S, j, k'] \,\, \hat{=} \,\, \texttt{true}$ if and only if the colors from $S$ can be dominated by using $k$ activity intervals of $v_1, \dots, v_{j-1}$ and $k'$ activity intervals of $v_j$.

	\emph{Initialization:}
	We set the entry $\DP [S, 0, k]$ to \texttt{true} if and only if~$S = \emptyset$.

	\emph{Recurrence:}
We consider the case where~$k'=0$ and~$k'> 0$ individually:
	\begin{align*}
		\DP [S,j,k'] =
		\begin{cases}
			\bigvee_{a \in [T - \ell]}	\DP [S \setminus C_{a,a + \ell}^j, j, k' - 1], &\, \text{ if } k' > 0, \\
			\DP [S, j-1, k],  &\, \text{ if } k' = 0.
		\end{cases}
	\end{align*}

	\emph{Correctness:}
	We prove the correctness of the computation by showing two directions.

	($\Rightarrow$) Suppose $\DP [S,j,k'] = \texttt{true}$.
By definition, there exists a $k$-activity $\ell$-timeline $\mathcal{T} \subseteq \{v_1, \dots, v_j\} \times [T] \times [T]$, which satisfies (i) and (ii).
Otherwise, if $k' > 0$, then there exists an activity interval $(v_j, a, a + \ell) \in \mathcal{T}$.
Consequently, $\mathcal{T} \setminus \{(v_j, a , a + \ell)\}$ is considered in the definition of $\DP [S \setminus C_{a,a + \ell}^j, j, k' - 1]$ and satisfies (i) and (ii).
Therefore, we have $\DP [S \setminus C_{a,a + \ell}^j, j, k' - 1] = \texttt{true}$.
If $k' = 0$, then $\mathcal{T}$ only contains activity intervals of $v_1, \dots, v_j$.
This means that $\mathcal{T}$ considered in the definition of the table entry $\DP [S, j-1, k]$ and subsequently $\DP [S, j-1, k] = \texttt{true}$.

	($\Leftarrow$) Let $k ' > 0$ and $\DP [S \setminus C_{a,a + \ell}^j, j, k' - 1] = \texttt{true}$ for some $a \in [T - \ell]$.
If $\mathcal{T}$ is considered in the definition of $\DP [S \setminus C_{a,a + \ell}^j, j, k' - 1]$ and satisfies (i) and (ii), then the activity timeline $\mathcal{T} \cup \{(v_j,a,a+ \ell)\}$ is considered in the definition of $\DP [S,j,k']$ and also satisfies (i) and (ii).
This implies $\DP [S,j,k'] = \texttt{true}$.
If $k' = 0$, then each $k$-activity $\ell$-timeline, which is considered in the definition of $\DP [S, j-1, k]$ and satisfies (i) and (ii), is also considered in the table entry $\DP [S, j, k'=0]$ and satisfies (i) and (ii) for this entry.
Therefore, $\DP [S, j-1, k] = \texttt{true}$ implies $\DP [S, j, k] = \texttt{true}$.

	\emph{Result:}
	Finally, we return yes if and only if $\DP [\{1, \dots, t\}, n, k] = \texttt{true}$.
In particular, we return yes if and only if there exists a $k$-activity $\ell$-timeline, which dominates all colors from $S=\{1, \ldots, t\}$.

	\emph{Running Time:}
	The table has size $\Oh(2^t \cdot n \cdot k)$ and each update takes $\Oh(n \cdot T^2)$ time (to determine the set $C_{a,a+ \ell}^j$ for all $a \in [T]$).
This yields a running time of $\Oh(2^t \cdot n^2 \cdot T^2 \cdot k)$.
In the beginning of the algorithm we assumed that some random coloring of the temporal vertices is given.
Instead of using random colorings, we can use derandomization based on $(n, k)$-perfect hash families; see~\cite{CFK+15} for details.
This guarantees that for each set of $t$ temporal vertices, we deterministically find a coloring, which colors these $t$ temporal vertices with exactly $t$ colors.
This means, there is a solution to $(\mathcal{G}, k, \ell, t)$ if and only if the algorithm from above returns yes for one of the constructed colorings.
So, running the algorithm for each of the colorings of the construction, solves our problem instance. 
The overall running time can then be bounded by $(2e)^t t^{\Oh(\log t)} \cdot n^3\log (n)\cdot T^3k$, which is $2^{\Oh(t)} \cdot (n+T)^{\Oh(1)}$~time..
\end{proof}

The algorithm of \Cref{DomTimelineParametert} also works analogously for \PCT: We only need
to color the edges instead of the vertices.

\begin{corollary}
	\label{CovTimelineParametert}
	\PCT is solvable in $2^{\Oh(t)} \cdot (n+T)^{\Oh(1)}$~time.
\end{corollary}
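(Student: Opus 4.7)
The plan is to adapt the color-coding based dynamic program of \Cref{DomTimelineParametert} almost verbatim, replacing the role of temporal vertices by that of temporal edges. Concretely, I would assume a coloring of the temporal edges of~$\mg$ with~$t$ colors and say that a $k$-activity $\ell$-timeline \emph{covers a color} if it covers at least one temporal edge of that color. For each activity interval~$(v_j,a,b)$, I would define~$C_{a,b}^j$ as the set of colors appearing among the temporal edges incident with~$v_j$ in snapshots~$a,a+1,\ldots,b$; note that this set can be determined in polynomial time from~$\mg$ and the coloring.

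Next, I would set up the same dynamic programming table~$\DP[S,j,k']$ indexed by a color set~$S\subseteq[t]$, a vertex index~$j\in[0,n]$, and a counter~$k'\in[0,k]$, where~$\DP[S,j,k']=\texttt{true}$ precisely if there is a $k$-activity $\ell$-timeline contained in~$\{v_1,\ldots,v_j\}\times[T]\times[T]$ that uses exactly~$k'$ intervals of~$v_j$ and covers every color in~$S$. The initialization and the two-case recurrence are identical to those in \Cref{DomTimelineParametert}:
\begin{align*}
	\DP[S,j,k']=
	\begin{cases}
		\bigvee_{a\in[T-\ell]}\DP[S\setminus C_{a,a+\ell}^j,\,j,\,k'-1], & \text{if } k'>0,\\
		\DP[S,j-1,k], & \text{if } k'=0,
	\end{cases}
\end{align*}
with~$\DP[S,0,k]=\texttt{true}$ iff~$S=\emptyset$. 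Correctness in both directions mirrors the argument in the proof of \Cref{DomTimelineParametert}: given a witness timeline we peel off the last activity interval of~$v_j$, and conversely we extend a smaller witness by the chosen interval. Finally, return yes iff~$\DP[\{1,\ldots,t\},n,k]=\texttt{true}$.

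The only step requiring any care is derandomization, and it works exactly as in \Cref{DomTimelineParametert}: instead of randomly coloring the~$n$ temporal vertices, we now need to color a universe of up to~$\sum_{i=1}^T|E_i|$ temporal edges so that any fixed set of~$t$ temporal edges receives~$t$ distinct colors. Applying the standard $(N,t)$-perfect hash family construction to the universe of temporal edges~\cite{DBLP:conf/focs/NaorSS95}, we obtain a family of~$2^{\Oh(t)}\cdot\log(n T)$ colorings such that any size-$t$ set of temporal edges is rainbow-colored by at least one member of the family, and the dynamic program is run once per coloring. The table has size~$\Oh(2^t\cdot n\cdot k)$, each update costs~$\Oh(n\cdot T^2)$, so the total running time is~$2^{\Oh(t)}\cdot(n+T)^{\Oh(1)}$. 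I do not anticipate a real obstacle here; the main thing to double-check is that~$C_{a,b}^j$ indeed reflects the temporal edges covered by~$(v_j,a,b)$ (namely, all temporal edges incident with~$v_j$ in snapshots within~$[a,b]$), which is immediate from the definition of ``covers'' for \PCT.
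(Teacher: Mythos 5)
Your proposal is correct and matches the paper's intent exactly: the paper proves \Cref{CovTimelineParametert} by the one-line remark that the algorithm of \Cref{DomTimelineParametert} carries over by coloring temporal edges instead of temporal vertices, which is precisely the adaptation you spell out (including the correct redefinition of~$C_{a,b}^j$ and the derandomization over the universe of temporal edges). No gaps.
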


\section{Conclusion}
\label{chapter:conclusion}

We studied the classical and parameterized complexity of \PCTT and \PDTT.
We showed that all problems admit FPT-algorithms for~$\vimw +\,k + \ell$, where~$\vimw$ is  the vertex-interval-membership width.
Our running time bounds also give XP-algorithms for~$\vimw$ alone.
For \CT this improves upon an XP-algorithm for~$n$~\cite{DBLP:journals/mst/FroeseKZ24}.
Moreover, we showed that the smaller parameter~$\imw+\,k+\ell$ leads to para-NP-hardness for all problems except \DT.
Hence, we discovered a parameter where a dominating set problem is tractable while the corresponding vertex cover variant is not.

For future work it is interesting to investigate whether already the parameter~$\vimw+\,k$ yields an FPT-algorithm for any problem in our study.
Moreover, it is open whether \PCT and \PDT admit an FPT-algorithm with respect to~$n + k$.
Even for~$k=1$ fixed-parameter tractability with respect to~$n$ is still open.
Moreover, one could investigate the complexity with respect to other temporal graph parameters such as the temporal neighborhood diversity~\cite{EHLM24} or temporal graph parameters that are sensitive to the order of the snapshots~\cite{imwVariants,CMW24}.

A further problem related to \CT is \prb{MinTimeline$_+$}.
In this problem, the value of~$\ell$ bounds the sum of the lengths of the activity intervals instead of the length of the longest activity interval.
\prb{MinTimeline$_+$} was also introduced by Rozhenstein et al.~\cite{DBLP:journals/datamine/RozenshteinTG21} and studied in several works~\cite{DBLP:journals/mst/FroeseKZ24, DondiBoundedDegree, DL25, DondiInsightsDisentangling, DondiPopaExactApprox, SchubertMA,LMZD24}.
To distinguish this problem from the other problems in our naming convention, \CT (\prb{MinTimeline$_\infty$}) could be called \prb{MinMax \CT} and
\prb{MinTimeline$_+$} could be called \prb{Sum \CT}, and the timeline variants of \DS could be named analogously.
It is open which of our positive results for~$\vimw$ and~$\imw$ can be transferred to the \prb{Sum} variants of the problems.
Finally, it is interesting to study other classic problems as \textsc{Feedback Vertex Set} in the \prb{MinMax Timeline} and \prb{Sum Timeline} setting.

\bibliographystyle{plain}
\bibliography{bibliography}

\end{document}